%% file: main.tex
\documentclass[journal=jnw]{CUP-JNL-DTM}%

\defbibenvironment{bibliography}
	{\list{\mkbibbrackets{\arabic{enumiv}}}%
		 {\usecounter{enumiv}%
			\settowidth{\labelwidth}{\mkbibbrackets{99}}%
			\setlength{\leftmargin}{\labelwidth}%
			\setlength{\labelsep}{\biblabelsep}%
			\addtolength{\leftmargin}{\labelsep}%
			\setlength{\itemsep}{\bibitemsep}%
			\setlength{\parsep}{\bibparsep}%
			}}
	{\endlist}
	{\item}

\usepackage[T1]{fontenc}
\usepackage{newtxtext}
\usepackage{newtxmath}
\usepackage{textcomp}

\usepackage{amsmath,amsfonts}

\usepackage{amsthm}

\usepackage{graphicx}
\usepackage{appendix}
\usepackage{lastpage}
\usepackage{changepage}

\usepackage{xcolor}
\usepackage[colorlinks,allcolors=blue]{hyperref}

\definecolor{apdx}{HTML}{098bb3}
\definecolor{ssharan_color}{HTML}{ad8242}
\definecolor{abbreviation_color}{HTML}{53735b}
\definecolor{revise_color}{HTML}{3455eb}

\newcommand{\ie}{\emph{i.e.}}

\newcommand{\hrunup}{h_{\rm run-up}}
\newcommand{\urunup}{u_{\rm run-up}}
\newcommand{\thrunup}{\tilde{h}_{\rm run-up}}

\newcommand{\cF}{{\cal F}}

\newtheorem{theorem}{Theorem}[section]

\theoremstyle{definition}

\newtheorem{proposition}[theorem]{Proposition}
\numberwithin{equation}{section}

\jname{Data/Math}
\articletype{RESEARCH ARTICLE}
\jyear{2026}

\usepackage{mathrsfs}

\begin{document}

\begin{Frontmatter}

\title[Article Title]{Dry dam-break over parabolic bathymetry}

\author[1]{Shashwat Sharan}
\author[1]{Patrick Sprenger}
\author[1]{Boaz Ilan}
\author[2]{Mark A. Hoefer}

\address[1]{\orgdiv{Department of Applied Mathematics},
\orgname{University of California Merced},
\orgaddress{\city{Merced}, \postcode{95343}, \state{CA}, \country{USA}}}

\address[2]{\orgdiv{Department of Applied Mathematics},
\orgname{University of Colorado Boulder},
\orgaddress{\city{Boulder}, \postcode{80309}, \state{CO}, \country{USA}}.
\par\noindent\textbf{Corresponding author:} Shashwat Sharan; Email:
\href{mailto:ssharan2@ucmerced.edu}{ssharan2@ucmerced.edu}}

\authormark{S. Sharan et al.}

\keywords{dam-break, parabolic bathymetry, rarefaction waves, finite-time singularity, closed-form solutions, wave run-up}

\keywords[MSC Codes]{\codes[Primary]{35L65}; \codes[Secondary]{35Q35, 35Q31, 76N30, 35C05, 35C20}}

\abstract{
Motivated by dry dambreak flows, a perturbative framework for solving the 1D shallow water equations (SWE) over parabolic bathymetry, together with the inviscid Burgers' equation, is developed. 
Expressed in Riemann variables, the solutions are expanded as analytic power series in the bathymetric curvature $\omega^2$ and summed in closed-form in terms of trigonometric functions.
The solutions, which are termed generalised rarefaction waves (GRWs), exhibit finite-time singularities and are shown to describe the asymptotic behaviour of the fluid near dry (vacuum) points.
The GRWs bear direct relevance to wave run-up at a shore.
A repulsive parabolic hill bathymetry is also considered, yielding solutions in terms of hyperbolic functions.
The connection to the nonlinear Schr\"odinger/Gross-Pitaevskii equation that models Bose-Einstein condensates confined in a harmonic potential is discussed.The analytical results agree with direct numerical simulations of the Burgers' equation and of the SWE in different scenarios.
}

\end{Frontmatter}

\setcounter{tocdepth}{2}
\localtableofcontents

\input{sections/1_introduction}

\input{sections/2_main_results}
\input{sections/3_burgers}
\input{sections/4_swe}
\input{sections/5_applications}
\input{sections/6_conclusion}

\addtocontents{toc}{\protect\setcounter{tocdepth}{1}}
\input{sections/appendix}
\input{backmatter}

\end{document}

%% file: sections/1_introduction.tex
\section{Introduction}

Nonlinear waves arising from dambreak flows have been studied mathematically since the late 19$^{\rm th}$ century~\cite{ritter1892fortpflanzung}, primarily in the context of water waves~\cite{stoker2019water}. More recently, analogous dynamics have been observed in nonlinear optics~\cite{xu2017dispersive,dieli2024observation} and Bose-Einstein condensates~\cite{sharan2025breaking}, where they are modelled by nonlinear Schr\"odinger (NLS) / Gross--Pitaevskii (GP)-type equations~\cite{el2016dam}.

The standard model for long-wavelength, nondispersive flows in a shallow fluid layer is the 1D shallow water equations (SWE) with variable bathymetry $b(x)$,
\begin{subequations}\label{eq:SWE-bathymetry}
\begin{align}
\frac{\partial h}{\partial t} + \frac{\partial (hu)}{\partial x} &= 0, \\[4pt]
\label{eq:SWE-bathymetry_u}
\frac{\partial u}{\partial t} + u\, \frac{\partial u}{\partial x}  + \frac{\partial h}{\partial x} &= -b'(x)~,
\end{align}
\end{subequations}
where $h(x,t)$ denotes the mean fluid depth, $u(x,t)$ is the depth-averaged horizontal velocity, and $-b'(x)$ accounts for varying seabed topography.
In this study we consider the parabolic profile
\begin{align}
\label{eq:b(x)}
    b(x)=\frac{1}{2}\omega^{2}x^{2}~,
\end{align}
 where $\omega^2$ parameterises the bathymetric curvature. 
 
For flat bathymetry, $b(x)={\rm const.}$, equations~\eqref{eq:SWE-bathymetry} form a system of conservation laws whose Riemann invariants facilitate the construction of simple wave solutions, in which all but one Riemann invariant is constant. Classical rarefaction waves arising from Riemann initial data are of this type and depend on space and time only through the ratio $x/t$, \ie, they are self-similar.
Recently, Camassa \emph{et al.}~\cite{camassa2019singularity,camassa2020vacuum} studied the 1D SWE with flat bathymetry for continuous initial data containing a dry point (vacuum state) together with a discontinuity in the slope.
They showed that, under the subsequent evolution, the depth gradient at the dry point can diverge in finite time, producing a gradient catastrophe.

For linear bathymetry, $b(x)=\alpha x$, the spatially constant forcing $-\alpha$ can be removed by passing to an accelerated frame~\cite{chirkunov2014exact,dorodnitsyn2021discrete}.
In this accelerated frame, the equations reduce to the flat-bathymetry SWE. The classical flat-bathymetry simple-wave construction of self-similar rarefaction waves can then be applied in the moving frame.
Exact solutions of the forced SWE are therefore known primarily for linear bathymetry, where the simple-wave theory remains available after this change of variables. Carrier and Greenspan applied a hodograph--Legendre transformation to describe wave run-up over a linearly sloping beach~\cite{carrier1958water}, following Stoker~\cite{stoker1948formation}. This approach has been extended in several studies~\cite{ovsyannikov1979two,pelinovsky1992exact,didenkulova2011nonlinear,ezersky2013physical,kanouglu2006initial}. However, when the bathymetry profile is nonlinear, no analogous reduction is available and both the simple-wave construction and the self-similar $x/t$ scaling of the flat problem are lost.

For nonlinear bathymetry, very few exact solutions are known. 
Thacker was the first to obtain an exact solution of the 2D SWE in an elliptic paraboloid basin~\cite{thacker1981some}. When reduced to the 1D setting, Thacker's solution is a very special one describing a sloshing fluid with a planar surface profile.
Aksenov and Druzhkov recently derived exact conservation laws for the 1D SWE with~\eqref{eq:b(x)} via Lie symmetries and Lagrange variables~\cite{aksenov2016conservation,aksenov2020conservation}.
For a compendium of analytic solutions of the shallow water equations, including the classical flat-bottom dam-break solutions and sloshing solutions over a parabolic bottom, we refer the reader to~\cite{delestre2013swashes}.

More recently, Camassa \emph{et al.}~\cite{camassa2022evolution} studied the SWE with variable bathymetry through a wave-front expansion, representing the solution as a power series in a coordinate centred on the moving front, which reduces the near-front dynamics to a hierarchy of ordinary differential equations (ODEs) for the time-dependent coefficients. For a parabolic bathymetry, they found that the system admits self-similar solutions of the second kind, in which the depth is quadratic in space and the velocity linear in space, whose coefficients evolve in time according to the equations. Studying the ODE system governing these coefficients, they characterised its dynamics, including periodic sloshing and finite-time blow-up of the curvature, and obtained a closed-form expression for the blow-up time in terms of elliptic functions, the solution itself being expressed implicitly through inverse elliptic functions.

To our knowledge no closed-form solution is known for the Riemann dam-break problem in this setting. We narrow this gap in the present work by deriving various special closed-form explicit solutions,  which we use to describe a Riemann dam-break problem and other fluid configurations over parabolic bathymetry.

%% file: sections/2_main_results.tex
\section[Main results]{Main results}\label{sec:main_results}

We consider~\eqref{eq:SWE-bathymetry} with the parabolic bathymetry~\eqref{eq:b(x)}, \ie, the forced SWE
\begin{subequations}\label{eq:swe-parabolic}
\begin{align}
\frac{\partial h}{\partial t} + \frac{\partial (hu)}{\partial x} &= 0, \\[4pt]
\frac{\partial u}{\partial t} + u\, \frac{\partial u}{\partial x}  + \frac{\partial h}{\partial x} &= -\omega^2 x~,
\end{align}
\end{subequations}
as well as the related forced Burgers' equation
\begin{equation}\label{eq:burgers-parabolic}
\frac{\partial u}{\partial t} + u \, \frac{\partial u}{\partial x} = -\omega^2 x~.
\end{equation}
We obtain closed-form dam-break solutions of Eq.~\eqref{eq:burgers-parabolic} and the forced SWE~\eqref{eq:swe-parabolic}, expressed in terms of elementary trigonometric functions. These solutions are constructed through a perturbative expansion in the bathymetric curvature $\omega^2$, which yields the solutions of the forced equations from their unforced counterparts. Although the parabolic forcing breaks the self-similar scaling that underlies the classical rarefaction wave theory, the perturbative corrections remain functions of the self-similar variable $x/t$ and acquire an explicit dependence on $t$. 

The framework is first illustrated on the forced Burgers' equation and then extended to the forced SWE. We term the resulting solutions \emph{generalised rarefaction waves} (GRWs), which reduce to their classical unforced counterparts in the flat-bathymetry limit $\omega\to 0$. These solutions fall within the class studied by Camassa \emph{et al.}~\cite{camassa2022evolution}, for which closed-form expressions were not previously available. 

The GRW solutions contain terms that become singular in finite time, common to both equations; we show that this singularity is never reached for physical fluid configurations with finite mass. In addition, we extend the construction to a repulsive parabolic hill bathymetry, where the fluid is expelled rather than trapped.
As noted in previous studies, cf.~\cite{liu1980vacuum,camassa2020vacuum}, the velocity field is not uniquely determined in the vacuum. Among these, the perturbative framework selects the rarefaction-vacuum branch whose characteristics fan out from the initial discontinuity, and remains continuous across the vacuum interface. This choice requires no velocity data to be prescribed in the vacuum region at the initial time.
For different types of dam-break initial data, we show that an exact SWE solution describes the behaviour of the fluid asymptotically near the dry edge. These results are corroborated by direct numerical simulations.

Below we present the GRW solutions for~\eqref{eq:burgers-parabolic}  and~\eqref{eq:swe-parabolic}. In both cases the solution describes the evolution of a rarefaction wave generated by Riemann initial data with a jump discontinuity at the origin. We also state a general translation law that extends these centred solutions to off-centred initial data.

\subsection[GRW solution for forced Burgers' equation]{GRW solution of the forced Burgers' equation}\label{sec:main_burgers}

We consider the forced Burgers' equation~\eqref{eq:burgers-parabolic} with a centred Riemann data with a jump at the origin,
\begin{align}
\label{eq:burgers-ic}
u(x,0) &=
\begin{cases}
u_{-}, & x < 0,\\
u_{+}, & x > 0,
\end{cases}
\qquad u_{-} < u_{+}.
\end{align}
We show that the solution of this problem evolves as
\begin{align}
\label{eq:burgers-sol-main}
u(x,t;\omega) &=
\begin{cases}
-\omega x \tan(\omega t) + u_{-}\sec(\omega t),
    & x < X_{-}(t;\omega), \\[2mm]
 \phantom{+}\omega x \cot(\omega t) 
    & X_{-}(t;\omega) \le x \le X_{+}(t;\omega), \\[2mm]
    -\omega x \tan(\omega t) + u_{+}\sec(\omega t),
    & x > X_{+}(t;\omega),
\end{cases}
\end{align}
where the edges separating the three regions are
\begin{align}\label{eq:burgers-edges}
X_{\pm}(t;\omega) &=
\frac{u_{\pm}}{\omega} \sin(\omega t).
\end{align}
The solution~\eqref{eq:burgers-sol-main} contains trigonometric terms that become singular at a finite time and defines a classical solution to the Burgers' equation only up to the first singularity time,
\begin{equation}\label{eq:singularity_time}
    t_* = \frac{\pi}{2\omega}.
\end{equation}
We also consider the more general case of off-centred initial data, with the initial jump in \(u\) located at \(x=x_0\).
Proposition~\ref{prop:translation}  provides a general translation theorem for parabolic bathymetry, allowing the off-centred solution to be obtained through a transformation of the centred solution.

\begin{figure}[ht!]
    \centering
    \includegraphics[width=0.9\linewidth]{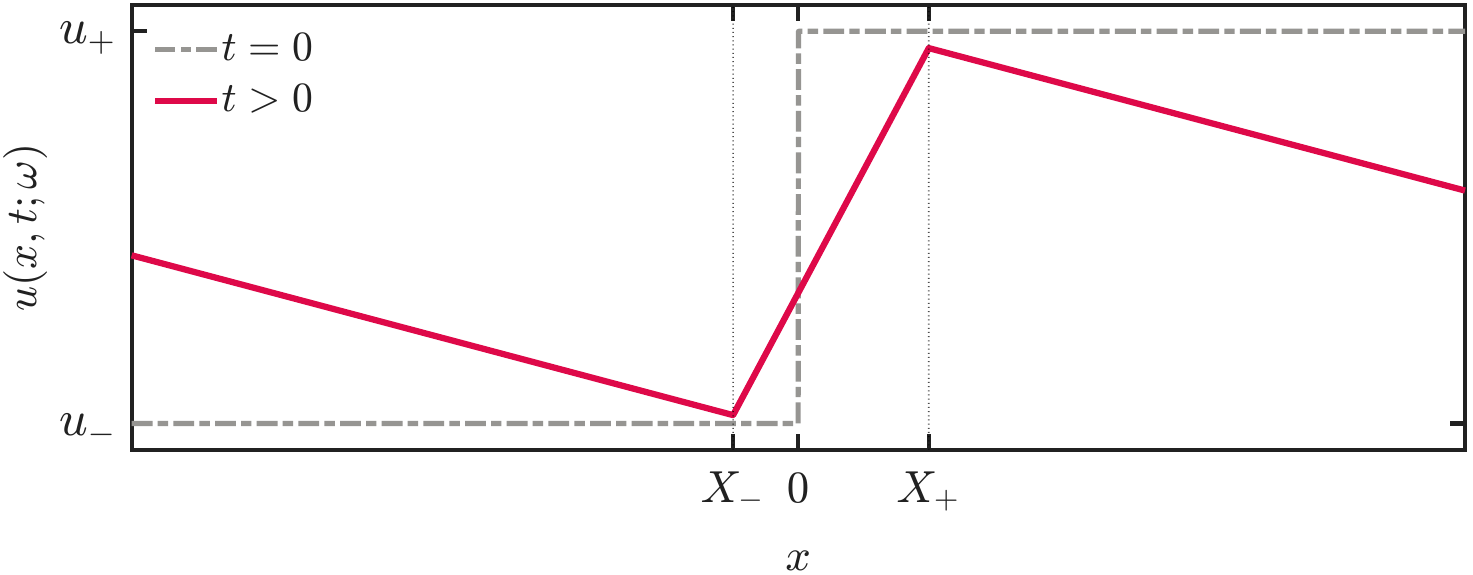}
    \caption{Centred GRW
        solution~\eqref{eq:burgers-sol-main} of the Burgers' equation. As time evolves, the left and the right branches become steeper; eventually becoming vertical as \(t\) approaches the singular time  \(t_* = \tfrac{\pi}{2\omega}\)}
    \label{fig:burgers-parabolic-solution}
\end{figure}

\subsection[GRW solution for shallow water equations]{GRW solution for SWE with parabolic bathymetry}\label{sec:main_swe}

Next, we consider the SWE with parabolic bathymetry~\eqref{eq:swe-parabolic} with centred Riemann initial data, in which the fluid depth has a jump discontinuity at the origin,
\begin{subequations}\label{eq:swe-dambreak-ic}
\begin{align}
h(x,0) &=
\begin{cases}
h_0, & x<0,\\
0, & x>0,
\end{cases}
\label{eq:swe-dambreak-ic-h}\\[4mm]
u(x,0) &= 0
\quad \text{for} \ x < 0.
\label{eq:swe-dambreak-ic-u}
\end{align}
\end{subequations}
The initial data consists of a fluid layer of constant depth \(h_0>0\) occupying the left half-line and vacuum on the right. No velocity data is prescribed in the vacuum region. This configuration corresponds to a \emph{dry} dambreak problem.

We show that the solution of this problem has the following structure:
\begin{subequations}\label{eq:SWE-dambreak-solution}
\begin{align}
h(x,t;\omega) &=
\begin{cases}
h_0 \sec(\omega t), & x < X_{\rm wet}(t;\omega), \\[2mm]
H_{\rm fan}(x,t;\omega), & X_{\rm wet}(t;\omega) < x < X_{\rm dry}(t;\omega), \\[2mm]
0, & x > X_{\rm dry}(t;\omega),
\end{cases}
\label{eq:SWE-dambreak-solution-h}
\\[4mm]
u(x,t;\omega) &=
\begin{cases}
-\omega x \tan(\omega t), & x < X_{\rm wet}(t;\omega), \\[2mm]
 U_{\rm fan}(x,t;\omega), & X_{\rm wet}(t;\omega) < x < X_{\rm dry}(t;\omega), \\[2mm]
\phantom{+} \omega x \cot(\omega t), & x > X_{\rm dry}(t;\omega).
\end{cases}
\label{eq:SWE-dambreak-solution-u}
\end{align}
\end{subequations}
Here \(H_{\rm fan}\) and \(U_{\rm fan}\) denote the nontrivial depth and velocity profiles inside the rarefaction fan, which match continuously with the exterior states.  The details of this solution are described below.

\paragraph{\textbf{Edges of the rarefaction fan.}}
The \emph{wet edge} $X_{\rm wet}(t;\omega)$ is the interface between the rarefaction fan and the spatially uniform depth region on the left. The \emph{dry edge} $X_{\rm dry}(t;\omega)$ is the leading front advancing into  the vacuum state on the right. Both satisfy $X_{\rm wet}(0;\omega)=X_{\rm dry}(0;\omega)=0$. These interfaces evolve according to
\begin{subequations}\label{eq:swe-edges}
    \begin{align}
    \label{eq:dry-edge}
X_{\rm dry}(t;\omega) &= \frac{2c_0}{\omega}\sin(\omega t), \\[2mm]
X_{\rm wet}(t;\omega) &= -\frac{c_0}{\omega} \cos(\omega t) \, \mathcal{F}(\omega t),
\end{align}
\end{subequations}
where
\begin{equation}
    \mathcal{F}(z) \doteq \int_{0}^{z} \sec^{3/2} (\theta) \, {\rm d}\theta.
\end{equation}
In the flat-bathymetry limit $\omega \to 0$, these reduce to the flat-bathymetry edges $X_{\rm dry} \to 2c_0\, t$ and $X_{\rm wet} \to -c_0\, t$. For $\omega > 0$, the dry edge follows a sinusoidal trajectory, decelerating due to the restoring effect of the parabolic well. The wet edge initially propagates leftward but reverses and returns to the origin as $\omega t \to (\pi/2)^{-}$. The derivation and a closed-form expression for the wet edge in terms of an incomplete elliptic integral are given in \S\ref{sec:swe_edges}. The edges are plotted in Figure~\ref{fig:swe-edges}.

\paragraph{\textbf{Asymptotic run-up solution near the dry edge.}}
Near the dry edge, the rarefaction fan has an asymptotic behaviour, which
describe the fluid run-up\footnote{Wave run-up is the maximum vertical elevation a fluid reaches as waves crash up a beach. In this work, we use this term more leniently to describe a fluid's behaviour near a dry edge.} over the parabolic bathymetry. As \(x\to X_{\rm dry}(t)^-\), the solution behaves as
\begin{equation}
\label{eq:main-run_up}
\begin{aligned}
H_{\rm fan}(x,t;\omega)
&\sim
\hrunup(x,t;\omega) \, \doteq \,
\frac{\omega^2}{4}\csc^2\Bigl(\frac{3\omega t}{2}\Bigr)\bigl(X_{\rm dry}(t;\omega)-x\bigr)^2,\\[2mm]
U_{\rm fan}(x,t;\omega)
&\sim
\urunup(x,t;\omega) \, \doteq \,
\frac{2c_0 + \omega x \cos(\frac{3\omega t}{2})\csc(\frac{\omega t}{2})}
{1+2\cos(\omega t)},
\end{aligned}
\end{equation}
where \(c_0 \doteq \sqrt{h_0}\).

\begin{figure}[h!]
    \centering
    \includegraphics[width=0.9\linewidth]{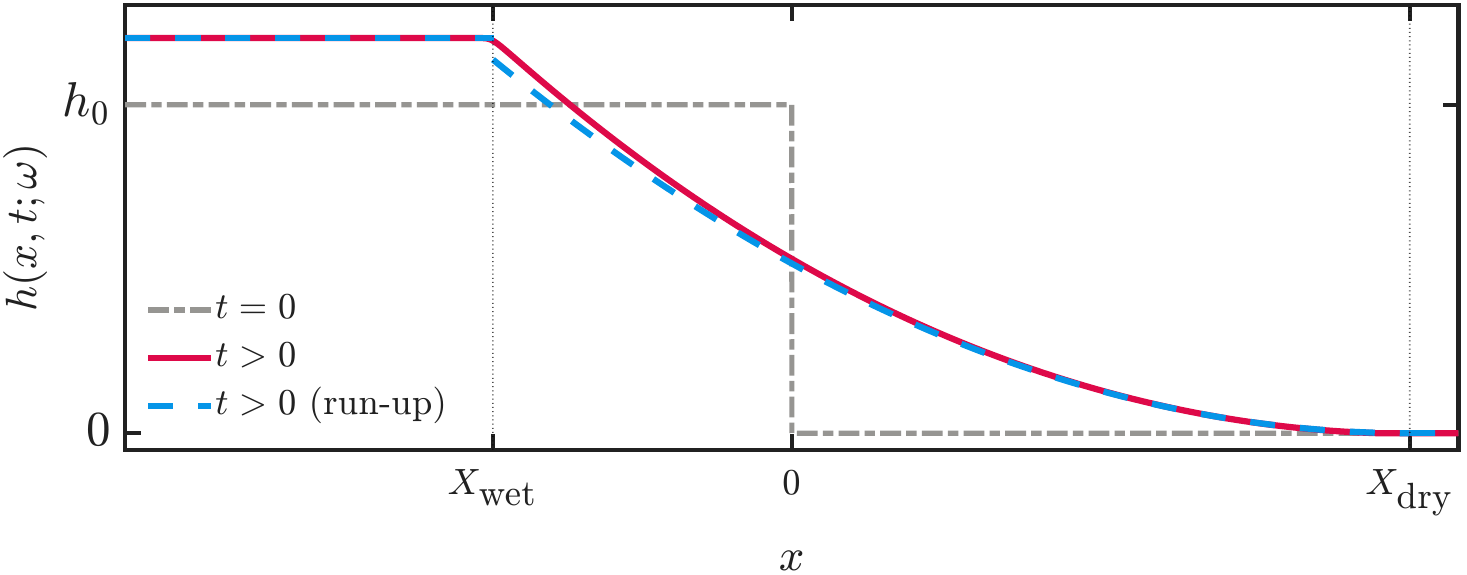}
    \caption{Depth profile \(h(x,t;\omega)\) of the centred dry dambreak solution~\eqref{eq:SWE-dambreak-solution} of the SWE over a parabolic bathymetry. The grey dash-dotted line shows the initial data~\eqref{eq:swe-dambreak-ic} at \(t=0\). The solid red line is the numerical solution at \(t>0\), and the blue dashed line 
    is the run-up solution \(\hrunup\)~\eqref{eq:main-run_up}}
    \label{fig:swe-parabolic-solution}
\end{figure}

\begin{figure}[h!]
    \centering
    \includegraphics[width=0.96\linewidth]{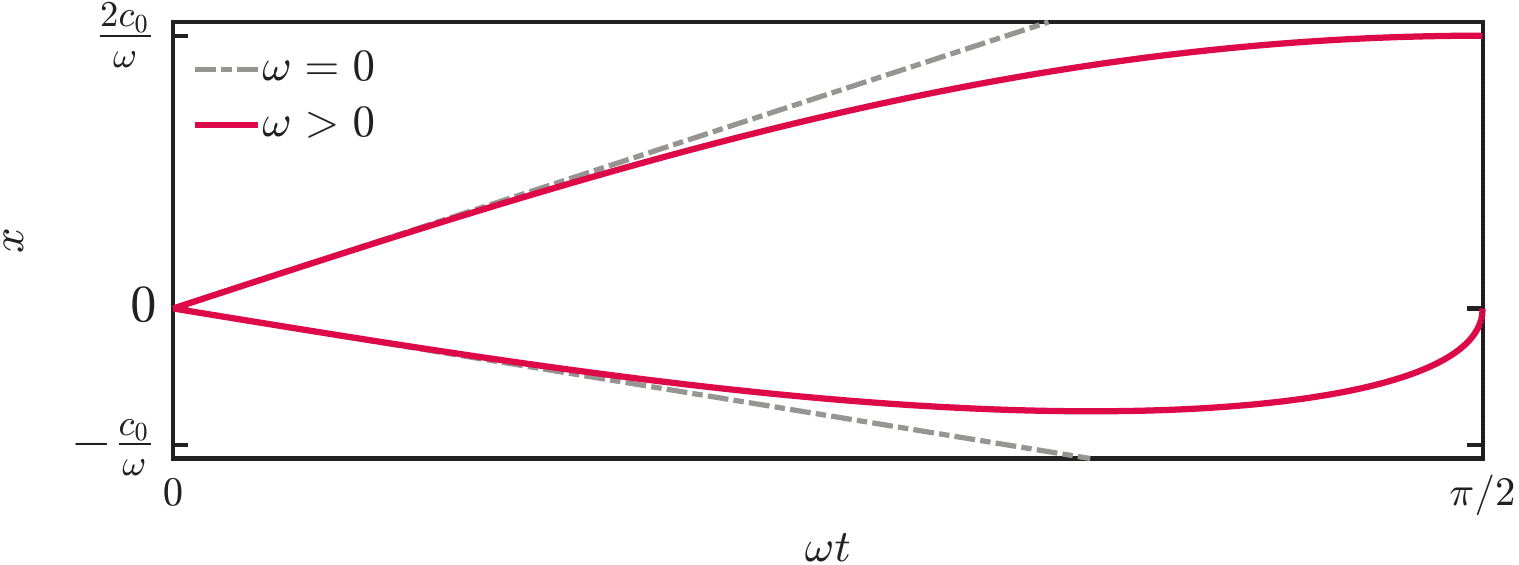}
    \caption{$X_{\rm dry}(t;\omega)$ and $X_{\rm wet}(t;\omega)$ from~\eqref{eq:swe-edges} compared with the flat-bathymetry edges ($\omega=0$)}
    \label{fig:swe-edges}
\end{figure}

Remarkably, the pair \((\hrunup,\urunup)\)~\eqref{eq:main-run_up} is an \emph{exact} solution of the forced SWE~\eqref{eq:swe-parabolic} in terms of elementary functions.
However, as we show in Section~\S\ref{sec:applications} for different initial value problems (IVPs), it arises as a \emph{local} description asymptotically near the dry edge. The asymptotic nature of this solution is discussed in Appendix~\ref{apdx_sec:wavefront_higher_order}.

We note that, in the flat-bathymetry limit \(\omega\to0\), the run-up solution~\eqref{eq:main-run_up} extends globally across the entire rarefaction fan, reducing to the flat-bathymetry dambreak solution~\eqref{eq:stoker_solution} discussed in \S\ref{sec:stoker_solution}.
For flat and linear bathymetries, the depth profile inside the rarefaction fan remains spatially quadratic for all times. Parabolic bathymetry is fundamentally different. In this case, the quadratic profile persists only as a local description near the vacuum edge (see Appendix~\ref{apdx:leading_runup}). Away from the edge, the full fan involves higher-order spatial corrections, and these corrections are not analytic in the bathymetric parameter $\omega$; in particular, they contain fractional powers of $\omega$ (see Appendix~\ref{apdx:higher_order}).

\paragraph{\textbf{Finite-time singularity.}}
The one-sided Riemann data~\eqref{eq:swe-dambreak-ic} has infinite mass and produces a finite-time singularity at the same time $t_*$~\eqref{eq:singularity_time} as in the Burgers case,
\begin{equation}
\label{eq:singularity_time_swer}
    \omega t_* = \frac{\pi}{2},
\end{equation}
at which the fluid depth in the $x < X_{\rm wet}$ region becomes infinite and the solution cannot be extended beyond this time. In Section~\S\ref{sec:applications} we consider fluid configurations with compact support and thus finite mass. For these physical configurations we show the finite-time singularity is never reached.

\paragraph{\textbf{Generalised translation law in parabolic bathymetry}}

The centred solutions above generate a corresponding family of off-centred solutions through the following translation law.
The general \emph{off-centred} case in which the jump in \(h\) and \(u\) are at \(x = x_0\) can be obtained from Proposition~\ref{prop:translation} whose proof is provided in Appendix~\ref{apdx_sec:proof_of_translation_proposition}.

\begin{adjustwidth}{1.5em}{0pt}
\begin{proposition}[\textbf{Off-centred solution}]
\label{prop:translation}
\itshape
Let \((h, u)\) be a solution of the forced SWE~\eqref{eq:swe-parabolic} with initial data  \(h_0(x) = h(x,0)\) and \(u_{0}(x)=u(x,0)\). Then the solution corresponding to the translated initial data \(h_{0}(x-x_0)\) and \(u_{0}(x-x_0)\) is given by
\begin{subequations}\label{eq:swe-shifted-transformation}
\begin{align}
h(x,t;\omega, x_0) &= h\bigl(x - x_0 \cos(\omega t),t\bigr)
\label{eq:swe-shifted-transformation-h}\\
u(x,t;\omega,x_0)  &= u\bigl(x-x_0\cos(\omega t),\,t\bigr)
-\omega x_0\sin(\omega t).
\label{eq:swe-shifted-transformation-u}
\end{align}
\end{subequations}
In the limit \(\omega \to 0\), the time-dependent shift reduces to a standard spatial translation, recovering the translational invariance of the unforced system.

\medskip 

\noindent \textbf{Remark:} A corresponding result for the forced Burgers' equation~\eqref{eq:burgers-parabolic} follows as a special case by setting \(h=0\), in which case~\eqref{eq:swe-shifted-transformation-h} drops out.
\end{proposition}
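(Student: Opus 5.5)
The plan is to verify the proposed transformation directly, by substitution into the forced SWE~\eqref{eq:swe-parabolic}, and then to check the initial data. Write $\xi = x - x_0\cos(\omega t)$ and define $\tilde h(x,t) = h(\xi,t)$ and $\tilde u(x,t) = u(\xi,t) - \omega x_0 \sin(\omega t)$. The guiding idea is that the shift $x_0\cos(\omega t)$ is exactly the trajectory of a free particle in the harmonic potential $b$. Its velocity $-\omega x_0\sin(\omega t)$ is the boost added to $u$, and its acceleration $-\omega^2 x_0\cos(\omega t)$ is what cancels the extra forcing created by the shift. In other words, this is the Galilean-type (Avron--Herbst/Kohn) invariance of harmonically forced systems.

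First, the chain rule gives $\partial_x = \partial_\xi$ and $\partial_t|_x = \partial_t|_\xi + \omega x_0 \sin(\omega t)\,\partial_\xi$.

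\emph{Mass equation.} Substituting, we get
$\tilde h_t + (\tilde h \tilde u)_x = h_t + \omega x_0\sin(\omega t)\, h_\xi + (hu)_\xi - \omega x_0 \sin(\omega t)\, h_\xi$.
The boost terms cancel exactly, leaving $h_t + (hu)_\xi = 0$.

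\emph{Momentum equation.} Here we compute
$\tilde u_t = u_t + \omega x_0\sin(\omega t)\, u_\xi - \omega^2 x_0\cos(\omega t)$
and
$\tilde u\,\tilde u_x = u u_\xi - \omega x_0 \sin(\omega t)\, u_\xi$.
The cross terms again cancel. Using the original equation $u_t + u u_\xi + h_\xi = -\omega^2 \xi$, the left-hand side becomes
$-\omega^2\xi - \omega^2 x_0\cos(\omega t) = -\omega^2 x$,
which is the forcing at the new position.

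\emph{Initial data.} At $t=0$ we have $\xi = x - x_0$ and $\sin 0 = 0$, so the initial data are $h_0(x-x_0)$ and $u_0(x-x_0)$, as required. The Burgers remark follows by deleting the $h$ terms from the same computation. The limit $\omega \to 0$ is immediate, since $\cos(\omega t) \to 1$ and $\omega \sin(\omega t) \to 0$.

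The only real subtlety is \emph{which} solution is meant: Riemann data produce weak solutions with vacuum and discontinuities. Because the map $(x,t)\mapsto(\xi,t)$ is a smooth, time-dependent translation with unit Jacobian, it carries the integral (weak) formulation, jump conditions and admissibility to themselves. The flux terms transform by the same cancellation as above, applied to test functions. Consequently, if the centred solution is the selected rarefaction-vacuum solution, its transform is the corresponding selected solution of the shifted problem. I would state uniqueness only within the class in which the centred solution was chosen, rather than attempt a general uniqueness theorem.
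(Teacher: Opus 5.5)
Your proposal is correct and is essentially the paper's proof. It uses the same change of variables $\xi = x - x_0\cos(\omega t)$, the same cancellation of the boost terms in both equations, the acceleration term $-\omega^2 x_0\cos(\omega t)$ combining with $-\omega^2\xi$ to give $-\omega^2 x$, and the same check of the data at $t=0$. Your remark on weak solutions, jump conditions and admissibility goes beyond the paper, which only verifies the classical computation; it is a sensible caveat for the Riemann-data setting rather than a missing step.
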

\end{adjustwidth}

\medskip

\medskip
\noindent The remainder of the paper is organised as follows.
Section \S\ref{sec:burgers} reviews the classical rarefaction wave for the unforced Burgers' equation and then derives the GRW solution presented in \S\ref{sec:main_burgers} for the forced case. The solution is first obtained by the method of characteristics, and then rederived using a perturbative approach, thereby establishing the validity of the perturbative framework.
Section \S\ref{sec:swe} extends and applies the perturbative framework to the shallow water equations with parabolic bathymetry and uses it to derive the GRW solutions presented in \S\ref{sec:main_swe}.
Section \S\ref{sec:applications} uses the one-sided GRW solution to describe the GRWs resulting from physical fluid configurations with finite mass over a parabolic well bathymetry. The parabolic well results are then extended to the case of a parabolic hill, in which the fluid is expelled outward rather than trapped within the well.
Section \S\ref{sec:conclusion} discusses the results and  future challenges. 

%% file: sections/3_burgers.tex
\section{Inviscid Burgers' equation}\label{sec:burgers}

We use the forced inviscid Burgers' equation~\eqref{eq:burgers-parabolic} with Riemann initial data~\eqref{eq:burgers-ic} as a benchmark problem for the perturbative framework developed below. This scalar problem provides a simpler setting in which the construction is derived and checked before being applied to the SWE system.

\subsection[Rarefaction solution of unforced Burgers' equation]{Unforced inviscid Burgers' equation}\label{sec:burgers-unforced}

We first recall the rarefaction wave (RW) solution of the unforced inviscid Burgers' equation, obtained from~\eqref{eq:burgers-parabolic} by setting \(\omega=0\):
\begin{equation}\label{eq:burgers-unforced}
    \frac{\partial u}{\partial t} + u \, \frac{\partial u}{\partial x} = 0.
\end{equation}
Equation~\eqref{eq:burgers-unforced} is invariant under the scaling
\begin{equation}\label{eq:burgers-scaling-invariance}
    (u,x,t) \mapsto (u, \lambda x, \lambda t), \qquad \lambda > 0.
\end{equation}
For centred Riemann data~\eqref{eq:burgers-ic}, which is invariant under this scaling, this suggests the evolution must be in the self-similar variable
\begin{equation}
    \xi=\frac{x}{t},
\end{equation}
and seek solutions of the form \(u(x,t)=U(\xi)\). 

Substitution into~\eqref{eq:burgers-unforced} yields
\begin{equation}
    (U-\xi)\,U'(\xi)=0.
\end{equation}
Hence, on each smooth self-similar branch, either \(U\) is constant or \(U(\xi)=\xi\). The constant branches correspond to the exterior states \(u_-\) and \(u_+\), while the linear branch represents the interior rarefaction fan.

For \(u_-<u_+\), the entropy solution of the centred Riemann problem is therefore the RW
\begin{equation}\label{eq:burgers-unforced-solution}
    u(x,t) =
    \begin{cases}
        u_{-}, & \xi<u_{-},\\[2pt]
        \xi,   & u_{-}<\xi<u_{+},\\[2pt]
        u_{+}, & \xi>u_{+},
    \end{cases}
    \qquad
    \xi=\frac{x}{t}.
\end{equation}

\begin{figure}[ht]
    \centering
    \includegraphics[width=0.95\linewidth]{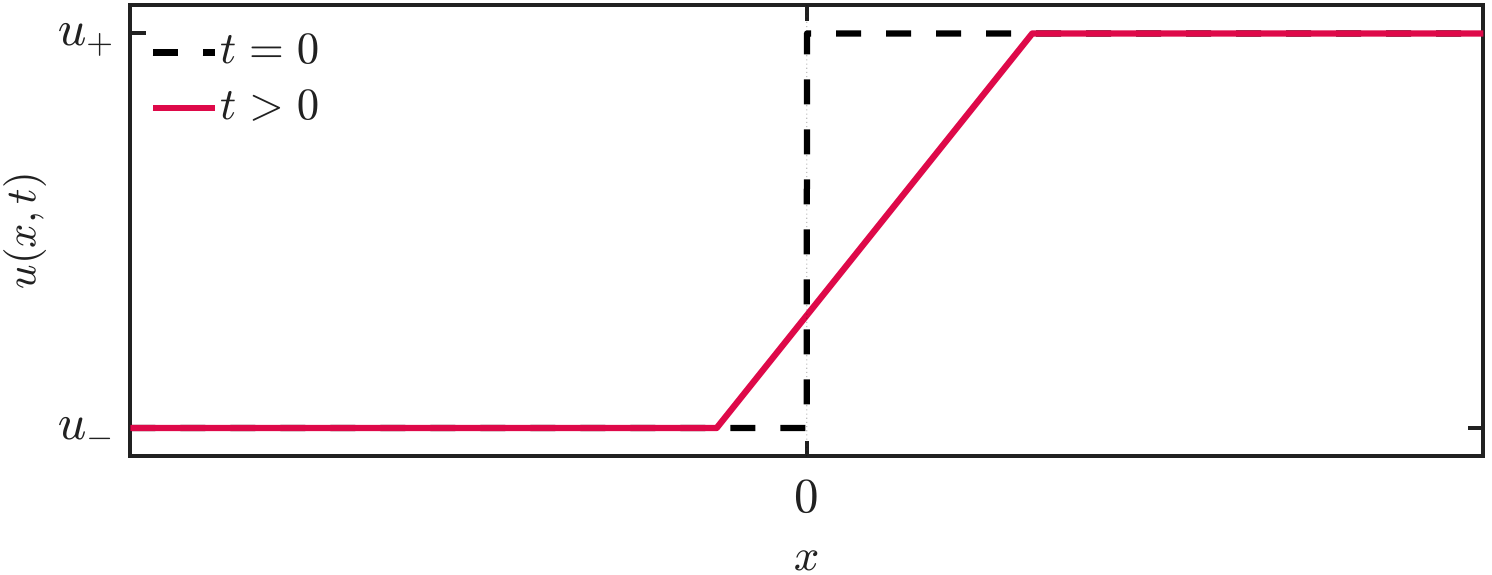}
    \caption{Evolution of the rarefaction wave
        \eqref{eq:burgers-unforced-solution} of the unforced Burgers'
        equation}
    \label{fig:burgers-unforced}
\end{figure}

\subsection{Forced inviscid Burgers' equation}\label{sec:burgers-parabolic}

We now consider the forced Burgers' equation~\eqref{eq:burgers-parabolic}. The parabolic forcing breaks the scaling symmetry~\eqref{eq:burgers-scaling-invariance} underlying the self-similar RW of the unforced problem, so the Riemann solution is no longer governed by the single variable \(\xi=x/t\).

This section develops a perturbative expansion for the GRW. We derive the GRW solution first using the the method of characteristics for centred Riemann data as a benchmark. The same solution is then derived using the perturbative expansion. The off-centred case then follows from Proposition~\ref{prop:translation}.

\subsubsection{Solution via the method of characteristics}\label{sec:burgers-moc}

Consider the forced Burgers' equation~\eqref{eq:burgers-parabolic} with
the Riemann data~\eqref{eq:burgers-ic} with $x_0=0$. Applying the method of characteristics yields the characteristic equations
\begin{equation}\label{eq:burgers-moc-char-odes}
\begin{aligned}
\frac{d\tilde x}{dt} = \tilde u(t), \qquad \frac{d\tilde u}{dt} = -\omega^{2}\tilde x (t),
\end{aligned}
\end{equation}
where \(\tilde x(t)\) is the characteristic curve emanating from
\(\tilde x(0) \doteq  \tilde x_{0}\) and \(\tilde u(t) \doteq u(\tilde x(t),t)\) is the value of the solution along this curve, with initial value
\(\tilde u_{0}\).
System~\eqref{eq:burgers-moc-char-odes} is a linear harmonic oscillator of frequency $\omega$, with solution
\begin{subequations}\label{eq:burgers-char_map}
    \begin{align}
        \tilde x(t) &= \tilde{x}_{0}\cos(\omega t)
            +\frac{\tilde u_{0}}{\omega}\sin(\omega t),
            \label{eq:burgers-char_map-x}\\[3pt]
        \tilde u(t) &= \tilde u_{0}\cos(\omega t)-\omega \tilde{x}_{0}\sin(\omega t).
            \label{eq:burgers-char_map-u}
    \end{align}
\end{subequations}
Inverting the characteristics~\eqref{eq:burgers-char_map} for \(\tilde x_{0}<0\) (where \(\tilde u_{0} = u_{-}\)), \(\tilde x_{0}>0\) (where \(\tilde u_{0} = u_{+}\)), and the rarefaction fan at \(\tilde x_{0}=0\) yields the explicit solution and edge expressions presented in \S\ref{sec:main_burgers}. The latter is recovered by substituting \(\tilde u_0 = u_{\pm}\) for \(\tilde{x}_0 = 0\) into~\eqref{eq:burgers-char_map-x}.

\begin{figure}[h!]
    \centering
    \includegraphics[width=\linewidth]{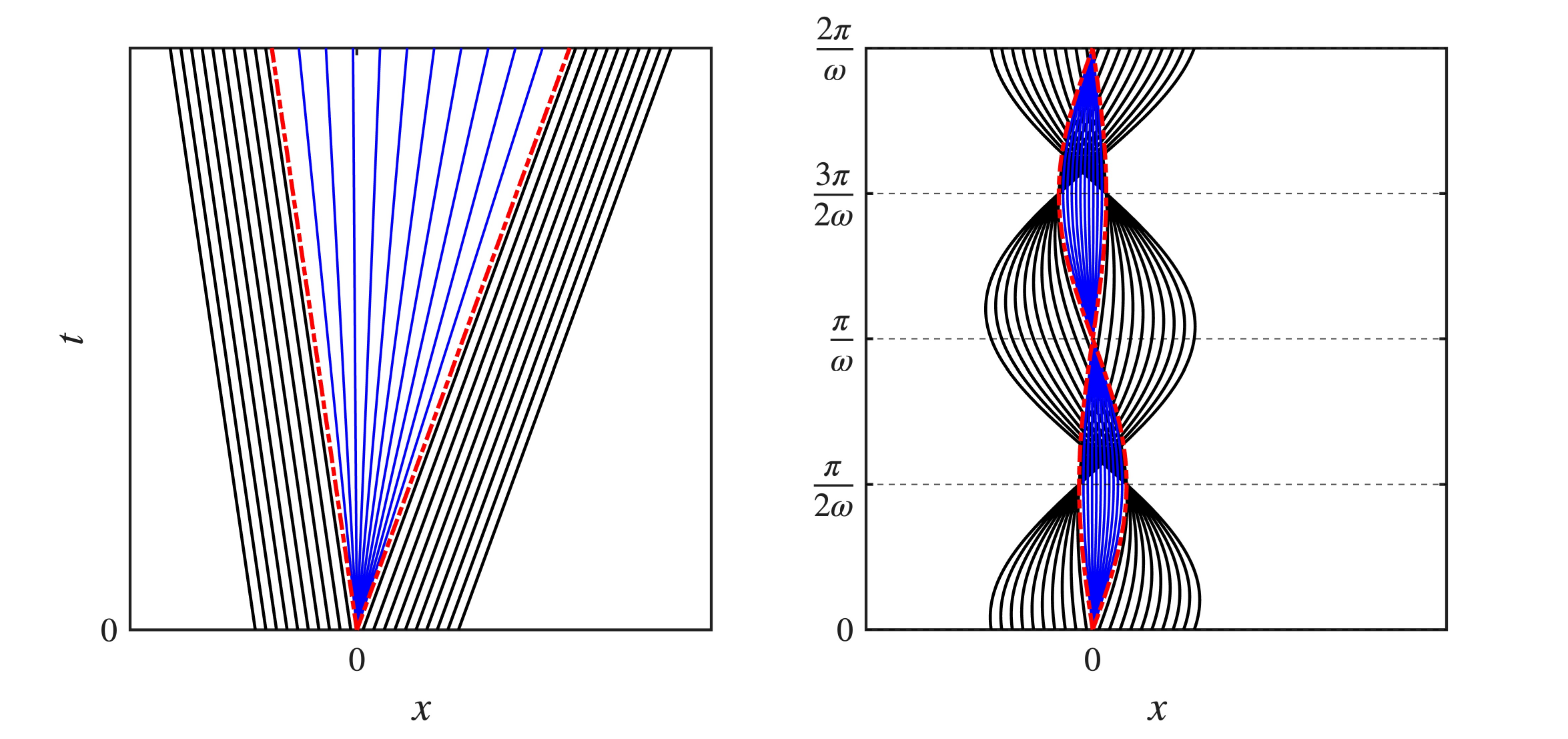}
    \caption{Characteristics for the unforced \textbf{(left)} and harmonically forced
    \textbf{(right)} Burgers equation with Riemann initial data~\eqref{eq:burgers-ic}.
    In the forced case, the characteristics outside the rarefaction fan focus and
    collapse periodically, with the first collapse at \(\omega t_{*}=\pi/2\) and
    subsequent collapses occurring at intervals of \(\pi/2\omega\).
    The red curves mark the rarefaction edges}
    \label{fig:burgers-characteristics}
\end{figure}

The characteristics~\eqref{eq:burgers-char_map} yield a well-defined solution only while the map \(x_{0}\mapsto \tilde x(t)\) remains one-to-one (see Figure~\ref{fig:burgers-characteristics}, right).
The Jacobian of this map, \(\partial\tilde x/\partial x_{0}=\cos(\omega t)\), vanishes at
\begin{equation}\label{eq:burgers-moc-singular-time}
    t_{*}=\frac{\pi}{2\omega}.
\end{equation}
As a result, the solution breaks down as \(\omega t\to\pi/2\). Importantly, this is not a standard shock as the solution does not develop finite left and right limits. The solution thus is valid in the interval
\begin{equation}\label{eq:burgers-moc-validity}
    0 < \omega t < \frac{\pi}{2}.
\end{equation}

Although the piecewise classical solution breaks down at \(\omega t=\pi/2\), the characteristic map~\eqref{eq:burgers-char_map} remains well defined for later times, providing the analytic continuation shown in the right panel of Figure~\ref{fig:burgers-characteristics}. For \(\omega t>\pi/2\), the two characteristic families periodically exchange roles such that the exterior characteristics defocus after their collapse at \(\omega t=\pi/2\), while the interior fan characteristics converge to a focal point at \(\omega t=\pi\). This alternating focusing--defocusing dynamics repeats at each multiple of \(\pi/2\), reflecting the underlying harmonic-oscillator dynamics~\eqref{eq:burgers-moc-char-odes}.

\subsubsection{Solution via a perturbative expansion}\label{sec:burgers-perturbation}

We now recover the solution~\eqref{eq:burgers-sol-main} through a
perturbative framework. 
Since the forcing is invariant under \(\omega\leftrightarrow -\omega\), we seek a regular perturbative expansion in even powers of \(\omega\),
\begin{equation}\label{eq:burgers-series-expansion}
    u(x,t;\omega)=\sum_{n=0}^{\infty} u_{n}(x,t)\,\omega^{2n}
    = u_{0}(x,t)+u_{1}(x,t)\,\omega^{2}+u_{2}(x,t)\,\omega^{4}+\cdots.
\end{equation}
The leading-order term \(u_{0}(x,t)\) is the unforced
solution~\eqref{eq:burgers-unforced-solution} with \(x_0=0\), while the
higher-order terms \(u_n(x,t)\), for \(n\geq 1\), encode the effect of the
forcing. Because the Riemann data is independent of \(\omega\), these
corrections for \(n\geq 1\) satisfy the homogeneous initial conditions
\begin{equation}\label{eq:burgers-corrections-IC}
    u_{n}(x,0)=0.
\end{equation}
This procedure carries out the perturbative construction separately in each region of the leading-order solution and then combines the resulting expressions to obtain the GRW~\eqref{eq:burgers-sol-main}.

\paragraph{\textbf{Interior solution}}

The leading order solution in the interior region is  \(u_{0}(x,t)=x/t\).
We show in Appendix~\ref{apdx:burgers-interior} that the higher-order corrections satisfy a hierarchy of forced linear first-order hyperbolic PDEs, which can be solved explicitly to all orders for \(u_n(x,t)\). The result is summarised as follows.

\begin{adjustwidth}{1.5em}{0pt}
\begin{proposition}[\textbf{Interior solution through perturbative expansion}]
\label{prop:burgers-interior}

\itshape
The perturbative expansion for the
leading-order solution \(u_{0}(x,t)=x/t\) is given by
\begin{equation}\label{eq:burgers-interior-solution-form}
u(x,t;\omega)=\frac{x}{t}\,F(\omega t),
\end{equation}
where \(F\) has the power series
\begin{equation}\label{eq:burgers-interior-series}
F(\omega t)=\sum_{n=0}^{\infty}F_{n}(\omega t)^{2n},
\end{equation}
with \(F_0 = 1\), \(F_1 = -1/3\) and the higher coefficients
for \(n\geq 2\) are determined through the recursion 
\begin{equation}\label{eq:burgers-interior-recursion}
F_{n}=-\frac{1}{2n+1}\sum_{j=1}^{n-1}F_{j}\,F_{n-j}~.
\end{equation}
Furthermore, \(F\) is the solution to the IVP
\begin{align}
\label{eq:burgers-interior-ode-F}
    t\,\frac{dF}{dt}-F+F^{2} &= -(\omega t)^{2}, \qquad F(0) = 1.
\end{align}

The solution of this IVP is
\begin{equation}\label{eq:burgers-interior-F-closed}
    F(\omega t)=\omega t\cot(\omega t).
\end{equation}
Substituting \(F(\omega t)\) into~\eqref{eq:burgers-interior-solution-form}
yields the interior piece of the GRW in~\eqref{eq:burgers-sol-main}.
\end{proposition}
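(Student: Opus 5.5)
I will substitute the expansion~\eqref{eq:burgers-series-expansion} into the forced Burgers' equation~\eqref{eq:burgers-parabolic} and collect powers of \(\omega^{2}\). The leading term is \(u_0=x/t\). At order \(\omega^{2n}\), \(n\ge1\), the correction satisfies the linear transport equation
\begin{equation*}
\partial_t u_n + \frac{x}{t}\,\partial_x u_n + \frac{1}{t}\,u_n = -\delta_{n1}\,x - \sum_{j=1}^{n-1} u_j\,\partial_x u_{n-j},
\end{equation*}
with \(u_n(x,0)=0\) from~\eqref{eq:burgers-corrections-IC}. The characteristics of the left-hand operator are the rays \(x/t=\xi\) of the unforced fan. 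So I would pass to the variables \((\xi,t)\), in which the operator becomes \(\partial_t + 1/t\).

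The key step is an inductive ansatz \(u_n(x,t)=F_n\,x\,t^{2n-1}\) with constants \(F_n\). For \(n=1\) the equation becomes \((t u_1)_t = -\xi t^2\), which gives \(t u_1=-\xi t^3/3\). Hence \(u_1=-x t/3\) and \(F_1=-1/3\). Here I choose the solution that is regular at \(t=0\), i.e.\ the homogeneous solution \(g(\xi)/t\) is excluded, since it would violate \(u_n\to0\) as \(t\to0\) along rays. Assuming the ansatz holds for \(j<n\), each product \(u_j\partial_x u_{n-j}=F_jF_{n-j}\,x\,t^{2n-2}\) is again of the same form. Writing \(u_n=\xi\,g(t)\), the equation reduces to \((t g)' = -t^{2n-1}\sum_{j=1}^{n-1}F_jF_{n-j}\). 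Integrating with \(tg\to0\) gives \(g=F_n t^{2n-1}\), with \(F_n\) given by the recursion~\eqref{eq:burgers-interior-recursion}. This closes the induction and yields the form~\eqref{eq:burgers-interior-solution-form}–\eqref{eq:burgers-interior-series}.

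Next I would verify the ODE. Substituting \(u=(x/t)F(\omega t)\) directly into~\eqref{eq:burgers-parabolic} gives \(\frac{x}{t^2}\bigl(tF_t - F + F^2\bigr) = -\omega^2 x\), which is~\eqref{eq:burgers-interior-ode-F}, with \(F(0)=F_0=1\). Equating coefficients of \((\omega t)^{2n}\) in this ODE reproduces \((2n-1)F_n+\sum_{j=0}^{n}F_jF_{n-j}=0\), i.e.\ \((2n+1)F_n=-\sum_{j=1}^{n-1}F_jF_{n-j}\). This is a consistency check on the recursion.

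To solve the ODE in closed form, set \(s=\omega t\). The equation \(sF'-F+F^2=-s^2\) is a Riccati equation. The substitution \(F=s\,w'/w\) linearises it: one obtains \(s^2 w''/w=-s^2\), so \(w''+w=0\). Regularity at \(s=0\) with \(F(0)=1\) forces \(w=\sin s\), and therefore \(F=s\cot s\). Alternatively, one can check directly that \(s\cot s\) satisfies the ODE, and then use the known Taylor series \(s\cot s = 1 - s^2/3 - \dots\) together with uniqueness of the power-series solution, since the recursion determines every coefficient. Substituting back gives \(u=\omega x\cot(\omega t)\).

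The main obstacle is justifying the selection of the regular solution at each order, since the operator is singular at \(t=0\). I would argue that, in \((\xi,t)\) coordinates, the homogeneous part \(g(\xi)/t\) is incompatible with bounded data at \(t=0^+\) inside the fan. I would also note that analyticity of \(s\cot s\) for \(|s|<\pi\) ensures the series~\eqref{eq:burgers-interior-series} converges on the validity interval~\eqref{eq:burgers-moc-validity}.
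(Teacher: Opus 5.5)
Your proof is correct. Its first half matches the paper; the second half takes a different route.

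\textbf{Shared part.} You use the same transport operator \(\partial_t+\tfrac{x}{t}\partial_x+\tfrac1t\), the same \(u_1=-xt/3\), the same form \(u_n=F_n\,x\,t^{2n-1}\) and the same recursion. Passing to \((\xi,t)\), where the operator becomes \(\partial_t+1/t\), is a modest gain. The affine form is derived at each order rather than posited and checked by substitution, and you explicitly discard the homogeneous piece \(g(\xi)/t\). The paper leaves that selection implicit.

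\textbf{Slip to fix.} With \(u_n=\xi g(t)\), the reduced equation is \((tg)'=-t^{2n}\sum_{j=1}^{n-1}F_jF_{n-j}\), which gives \(g=F_n t^{2n}\). The power \(t^{2n-1}\) you wrote would integrate to the denominator \(2n\) instead of \(2n+1\). Your \(n=1\) step has the correct power.

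\textbf{Different part.} The paper multiplies the recursion by \(z^n\), \(z=(\omega t)^2\), and resums with a Cauchy product to get \(2zF'-F+F^2=-z\). It then simply states \(F=\omega t\cot(\omega t)\). You instead substitute \(u=(x/t)F(\omega t)\) directly into the PDE. This gives the ODE in one line and recovers the recursion by coefficient matching: the same equivalence, run in the opposite direction.

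\textbf{What your route adds.} Your linearisation \(F=sw'/w\), \(w''+w=0\), actually solves the Riccati equation. Every solution for \(s>0\) has the form \(s\,w'/w\) with \(w=a\cos s+b\sin s\), and any \(a\neq0\) gives \(F(0^+)=0\). So \(F(0)=1\) forces \(w=\sin s\), which proves uniqueness of the IVP solution; the paper asserts this without argument. The choice \(w=\cos s\) also yields the exterior \(F=-s\tan s\) of Proposition~\ref{prop:burgers-exterior} at no extra cost. Finally, your remark that \(s\cot s\) is analytic for \(|s|<\pi\) gives convergence of the series on \(0<\omega t<\pi/2\), which the paper does not address.
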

\end{adjustwidth}

\paragraph{\textbf{Exterior regions}.}
In each exterior region, the leading-order solution is a constant state, \(u_{0}(x,t)=u_{\pm}\). In Appendix~\ref{apdx:burgers-exterior}, we show that the corresponding perturbation series can also be computed explicitly to all orders. The result is summarised as follows.

\begin{adjustwidth}{1.5em}{0pt}
\begin{proposition}[\textbf{Exterior solution through perturbative expansion}]
\label{prop:burgers-exterior}

\itshape
The perturbative expansion for a constant state \(u_{0}(x,t)=c\) for \(c\in \mathbb{R}\) is given by
\begin{equation}\label{eq:burgers-exterior-solution-form}
    u(x,t;\omega)=\frac{x}{t}\, F(\omega t)+c\,G(\omega t),
\end{equation}
where the  functions \( F\) and \(G\) have the power series
\begin{align}\label{eq:burgers-exterior-series}
    F(\omega t) = \sum_{n=0}^{\infty} F_{n}(\omega t)^{2n},\qquad 
    G(\omega t) = \sum_{n=0}^{\infty}G_{n}(\omega t)^{2n},
\end{align}
with initial coefficients
\(( F_{0},G_{0})=(0,1)\) and
\(( F_{1},G_{1})=(-1,\tfrac{1}{2})\),
and the higher coefficients for \(n\geq 2\) are given by the recursion
\begin{subequations}
\label{eq:burgers-exterior-series-recursion}
\begin{align}
    F_{n} = -\frac{1}{2n-1}\sum_{j=1}^{n-1} F_{j}\, F_{n-j},\qquad 
    G_{n} = -\frac{1}{2n}\sum_{j=1}^{n} F_{j}\,G_{n-j}.
\end{align}
\end{subequations}
Furthermore, the \(F\) and \(G\) are solutions to the IVP
\begin{subequations}
    \label{eq:burgers-exterior-odes}
    \begin{align}
    \label{eq:burgers-exterior-F}
        t\,\frac{d F}{dt}- F+ F^{2} &= -(\omega t)^{2},\\[2mm]
          \label{eq:burgers-exterior-G}
        t\,\frac{dG}{dt}+ F\,G &= 0,
    \end{align}
\end{subequations}
with initial data
\begin{subequations}
\label{eq:burgers-exterior-ode-IC}
\begin{align}
    F(0)=0, \qquad 
    G(0)=1.
\end{align}
\end{subequations}
The solution of this IVP is
\begin{subequations}
\begin{align}\label{eq:burgers-exterior-closed}
     F(\omega t) &=-\omega t\tan(\omega t), \\[2mm]
    \qquad
    G(\omega t) &=\sec(\omega t).
\end{align}
\end{subequations}
Substituting $F(\omega t)$ and $G(\omega t)$ into~\eqref{eq:burgers-exterior-solution-form} with \(c=u_{\pm}\)  yields the exterior pieces of the GRW in~\eqref{eq:burgers-sol-main}.
\end{proposition}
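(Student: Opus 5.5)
The plan is to substitute the perturbative expansion~\eqref{eq:burgers-series-expansion}, with \(u_0=c\), into~\eqref{eq:burgers-parabolic} and collect powers of \(\omega^2\). At order \(\omega^2\) we obtain \(\partial_t u_1 + c\,\partial_x u_1 = -x\) with \(u_1(x,0)=0\). Integrating along the characteristics \(x=x_0+ct\) gives \(u_1 = -xt + \tfrac12 c t^2\). This already has the claimed form \(\tfrac{x}{t}F_1 t^2 + cG_1 t^2\) with \((F_1,G_1)=(-1,\tfrac12)\), while \((F_0,G_0)=(0,1)\) reproduces \(u_0=c\).

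This suggests the ansatz \(u_n = \tfrac{x}{t}F_n t^{2n} + cG_n t^{2n}\) at every order, which we prove by induction. At order \(\omega^{2n}\), \(n\ge2\), the equation reads
\[
\partial_t u_n + \sum_{j=0}^{n} u_j\,\partial_x u_{n-j} = 0 .
\]
Under the ansatz, each \(u_j\) is affine in \(x\): \(\partial_x u_j = F_j t^{2j-1}\), and this vanishes for \(j=0\) since \(F_0=0\). We substitute and then separate the coefficients of \(x\) and of \(c\).

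For the coefficient of \(x\): the time derivative contributes \((2n-1)F_n t^{2n-2}\), and the convolution contributes \(\sum_{j=1}^{n-1}F_jF_{n-j}\,t^{2n-2}\). The terms \(j=0\) and \(j=n\) drop out because \(F_0=0\) multiplies either \(\partial_x\) or the \(x\)-term. This yields the first recursion in~\eqref{eq:burgers-exterior-series-recursion}.

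For the coefficient of \(c\): the time derivative gives \(2nG_n t^{2n-1}\), and the convolution gives \(\sum_{j=0}^{n} G_j F_{n-j}\, t^{2n-1}\). The \(j=n\) term vanishes since \(F_0=0\); reindexing the remaining sum gives the second recursion. The homogeneous initial data~\eqref{eq:burgers-corrections-IC} are satisfied automatically for \(n\ge1\) because every term carries a positive power of \(t\). Uniqueness of the linear transport problem at each order, which has zero data and a polynomial source, confirms that the ansatz captures the solution.

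Next, multiplying the recursions by \((\omega t)^{2n}\) and summing shows that the power series \(F,G\) satisfy the ODEs~\eqref{eq:burgers-exterior-odes}. The factors \((2n-1)\) and \(2n\) correspond to \(t\,d/dt - 1\) and \(t\,d/dt\), and the convolutions correspond to the products \(F^2\) and \(FG\). The forcing term \(-(\omega t)^2\) comes from the \(n=1\) level, where the source \(-x\) gives \(F_1=-1\). Since the \(F\)-equation is identical to~\eqref{eq:burgers-interior-ode-F}, we can equivalently substitute \(u=\tfrac{x}{t}F+cG\) directly into~\eqref{eq:burgers-parabolic} and separate the \(x\)- and \(c\)-parts, which gives the same ODEs.

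Finally, we solve the ODEs. The substitution \(F = t\,\dot{\phi}/\phi\) linearises the Riccati equation into \(\ddot\phi = -\omega^2\phi\). The condition \(F(0)=0\) forces \(\dot\phi(0)=0\), hence \(\phi=\cos\omega t\) and \(F=-\omega t\tan(\omega t)\). Then \(G'/G = \omega\tan(\omega t)\) with \(G(0)=1\) gives \(G=\sec(\omega t)\). As a check, these agree with the characteristic formulas~\eqref{eq:burgers-char_map}.

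The main obstacle I expect is the bookkeeping in the convolution. We must handle the boundary terms \(j=0\) and \(j=n\) correctly (using \(F_0=0\)) and verify that the \(n=1\) level, where the forcing enters, is consistent with the general recursion. We also need to justify that the closed-form solutions are analytic, so that their Taylor coefficients coincide with the recursively defined ones; this holds by uniqueness of analytic solutions to the singular IVP at \(t=0\).
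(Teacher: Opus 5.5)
Your derivation of the hierarchy, the $n=1$ solution by characteristics, the inductive affine ansatz with the two recursions, and the resummation into the ODEs all match the paper and are correct. (You fold $c\,\partial_x u_n$ into the convolution as the $j=0$ term $G_0F_n$, whereas the paper keeps it in $\mathcal{L}^{\rm (ext)}$; the bookkeeping agrees.)

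The step that fails is the selection of the closed form: $F(0)=0$ does \emph{not} force $\dot\phi(0)=0$. Taking $\phi=\cos\omega t+\beta\sin\omega t$ gives
\[
F=\frac{t\dot\phi}{\phi}=\omega t\,\frac{\beta\cos\omega t-\sin\omega t}{\cos\omega t+\beta\sin\omega t},\qquad G=\frac{1}{\cos\omega t+\beta\sin\omega t}.
\]
For every $\beta$, these satisfy both ODEs, are analytic near $t=0$, and have $F(0)=0$, $G(0)=1$. The condition $F(0)=0$ only excludes $\phi(0)=0$, which is the interior branch with $F(0)=1$. These spurious solutions correspond to the initial data $u(x,0)=c+\beta\omega x$. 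For the same reason, your closing appeal to ``uniqueness of analytic solutions to the singular IVP at $t=0$'' is false as stated. Linearising $tF'=F-F^2-(\omega t)^2$ about $F=0$ gives $tF'=F$, whose exponent $1$ is resonant and leaves the $O(t)$ coefficient free.

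The fix uses information you already have. $F$ is a series in $(\omega t)^2$, so it has no $O(t)$ term; since $F=\beta\omega t+O(t^2)$, this forces $\beta=0$. Equivalently, $u(x,0)=c$ requires $F/t\to0$, i.e.\ $\dot\phi(0)=0$. Uniqueness should be asserted in the variable $z=(\omega t)^2$, where the linearisation $2zF_z=F$ has the non-resonant exponent $\tfrac12$ and the argument is elementary. Any power series in $z$ solving the ODEs must satisfy your recursions, with $F_1=-1$ forced by the $z^1$ coefficient. It therefore suffices to check that $-\omega t\tan\omega t$ and $\sec\omega t$ are even, analytic for $|\omega t|<\pi/2$, and solve the ODEs with $F(0)=0$, $G(0)=1$.

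The IVP as literally stated in the proposition has the same non-uniqueness, and the paper simply asserts its solution. However, the paper's uniqueness argument for the SWE branches (via de Jong--van Meurs, excluding only positive \emph{even} exponents) is precisely uniqueness among series in $(\omega t)^2$, which is the class you need here.
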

\end{adjustwidth}

\paragraph{\textbf{Rarefaction edges}.} 
To complete the construction of the GRW solution, we determine the rarefaction edges \(X_\pm(t;\omega)\). Each edge separates the rarefaction fan from the adjacent exterior region of constant state \(u_\pm\), and is therefore the limiting characteristic of that exterior region (see Figure~\ref{fig:burgers-characteristics}).
Every characteristic curve \(X(t)\) of the forced Burgers' equation is carried along at the local characteristic velocity $u$, and therefore
\begin{equation}\label{eq:char_vel_burgers}
  \frac{dX}{dt} = u(X(t),t).
\end{equation}
The two edges are the characteristics singled out by the initial discontinuity, originating at \(x_0 = 0\) and carrying the exterior data \(u_\pm\). Evaluating the right-hand side of~\eqref{eq:char_vel_burgers} on the exterior side using the exterior solution from Proposition~\ref{prop:burgers-exterior} yields the IVP
\begin{equation}\label{eq:burgers-edge-ivp}
    \frac{dX_\pm}{dt} + \omega\tan(\omega t)\,X_\pm(t)
    = u_\pm \sec(\omega t),
    \qquad X_\pm(0) = 0,
\end{equation}
whose solution,
\begin{equation}
    X_\pm(t;\omega) = \frac{u_\pm}{\omega}\sin(\omega t),
\end{equation}
agrees with~\eqref{eq:burgers-edges} presented in the main results section. This completes the GRW construction.

\subsubsection{A single system for the Riemann solution}
\label{sec:burgers-ode-structure}

The interior and exterior ODE systems obtained in Propositions~\ref{prop:burgers-interior} and~\ref{prop:burgers-exterior} can be viewed as two cases of a single underlying system, with each branch of the Riemann solution distinguished only by its initial data. Both the interior and exterior solutions have the general form
\begin{equation}\label{eq:burgers-general-form}
    u(x,t;\omega) = \frac{x}{t}\,F(\omega t) + c\,G(\omega t),
\end{equation}
where $F$ and $G$ satisfy
\begin{subequations}\label{eq:burgers-common-ode-system}
\begin{align}
    t\,\frac{dF}{dt}-F+F^{2} &= -(\omega t)^{2},\\[2mm]
    t\,\frac{dG}{dt}+F\,G &= 0.
\end{align}
\end{subequations}
For the interior, the initial data is $(F,G)(0) = (1,0)$. Since $G(0) = 0$ and the $G$ equation is homogeneous in $G$, it follows that $G(t) \equiv 0$, reducing~\eqref{eq:burgers-general-form} to $u = (x/t)\,F$, which is the interior solution of Proposition~\ref{prop:burgers-interior}. For the exterior with constant state $c = u_\pm$, the initial data is $(F,G)(0) = (0,1)$, giving the full two-component solution of Proposition~\ref{prop:burgers-exterior}. The same unified ODE structure appears in Section~\S\ref{sec:swe} for the forced SWE which is discussed next.

%% file: sections/4_swe.tex
\section{Shallow water equations}\label{sec:swe}
We now extend the perturbative framework of Section~\S\ref{sec:burgers} to the forced SWE~\eqref{eq:swe-parabolic}. After recalling the flat-bathymetry dam-break solution, we reduce the forced problem to an ODE system for four time-dependent coefficients, whose solutions give the various branches of the Riemann solution.

RW solutions of the SWE are most conveniently described in terms of Riemann variables (RVs), which diagonalise the system and cast it in characteristic form. We define
\begin{equation}\label{eq:RV_def}
    R_{\pm}(x,t) \;=\; u(x,t) \pm 2\sqrt{h(x,t)}.
\end{equation}
In terms of $R_{\pm}$, the forced SWE~\eqref{eq:swe-parabolic} take the diagonal form
\begin{equation}\label{eq:diagonal_SWE}
    \frac{\partial R_{\pm}}{\partial t} \;+\; \lambda_{\pm}\,\frac{\partial R_{\pm}}{\partial x}
    \;=\; -\omega^{2} x, \qquad \text{with} \quad \lambda_{\pm} \;=\; \frac{3R_{\pm}+R_{\mp}}{4} \;=\; u \pm \sqrt{h},
\end{equation}
where \(\lambda_{\pm}\) are the characteristic velocities.
When $\omega=0$ the forcing vanishes and the $R_{\pm}$ reduce to the classical
Riemann invariants of the unforced system, each constant along its own characteristic family.

\subsection{Flat-bathymetry dambreak solution}\label{sec:stoker_solution}

For $\omega=0$, the SWE~\eqref{eq:swe-parabolic} with dam-break initial
data~\eqref{eq:swe-dambreak-ic} admit the classical self-similar dam-break solution, which is referred to in this work as the \emph{Stoker solution}~\cite{stoker2019water}. Writing $\xi = x/t$ and $c_0 = \sqrt{h_0}$,
\begin{subequations}\label{eq:stoker_solution}
\begin{align}
    h(x,t) &=
    \begin{cases}
        h_0,                                & \xi < -c_0,\\[1mm]
        \tfrac{1}{9}(2c_0-\xi)^{2},        & -c_0 \le \xi \le 2c_0,\\[1mm]
        0,                                  & \xi > 2c_0,
    \end{cases}
    \\[2mm]
    u(x,t) &=
    \begin{cases}
        0,                                  & \xi < -c_0,\\[1mm]
        \tfrac{2}{3}(\xi+c_0),             & -c_0 \le \xi \le 2c_0,\\[1mm]
        \xi,                                & \xi > 2c_0.
    \end{cases}
\end{align}
\end{subequations}
In the vacuum region $\xi > 2c_0$ the depth vanishes and we take $u = \xi$
throughout. For a rigorous discussion of the vacuum state see~\cite{liu1980vacuum}.

Expressed in terms of the RVs, the Stoker solution reads
\begin{subequations}\label{eq:stoker_solution_RI}
\begin{align}
    R_{+}(x,t) &=
        \begin{cases}
            2c_0, & \xi < -c_0,\\[1pt]
            2c_0, & -c_0 \le \xi \le 2c_0,\\[1pt]
            \xi,  & \xi > 2c_0,
        \end{cases}
    \\[2mm]
    R_{-}(x,t) &=
        \begin{cases}
            -2c_0,                                  & \xi < -c_0,\\[1pt]
            -\tfrac{2}{3}c_0 + \tfrac{4}{3}\xi,    & -c_0 \le \xi \le 2c_0,\\[1pt]
            \phantom{+}\xi,                         & \xi > 2c_0.
        \end{cases}
\end{align}
\end{subequations}
Across the rarefaction fan $R_{+} = 2c_0$ is constant while $R_{-}$ varies
linearly with $\xi$ (see Figure~\ref{fig:stoker-combined}). We refer to this as the
\emph{slow simple-wave} rarefaction, which is relevant to the rightward expanding dam-break data~\eqref{eq:swe-dambreak-ic}. The complementary
\emph{fast simple-wave} rarefaction, in which $R_{-}$ is constant and $R_{+}$ varies,
corresponds to the mirrored configuration with vacuum on the left and $h_0$ on
the right, and is obtained from the slow rarefaction by flipping the sign
$c_0 \mapsto -c_0$.

\begin{figure}[h!]
    \centering
    \includegraphics[width=\linewidth]{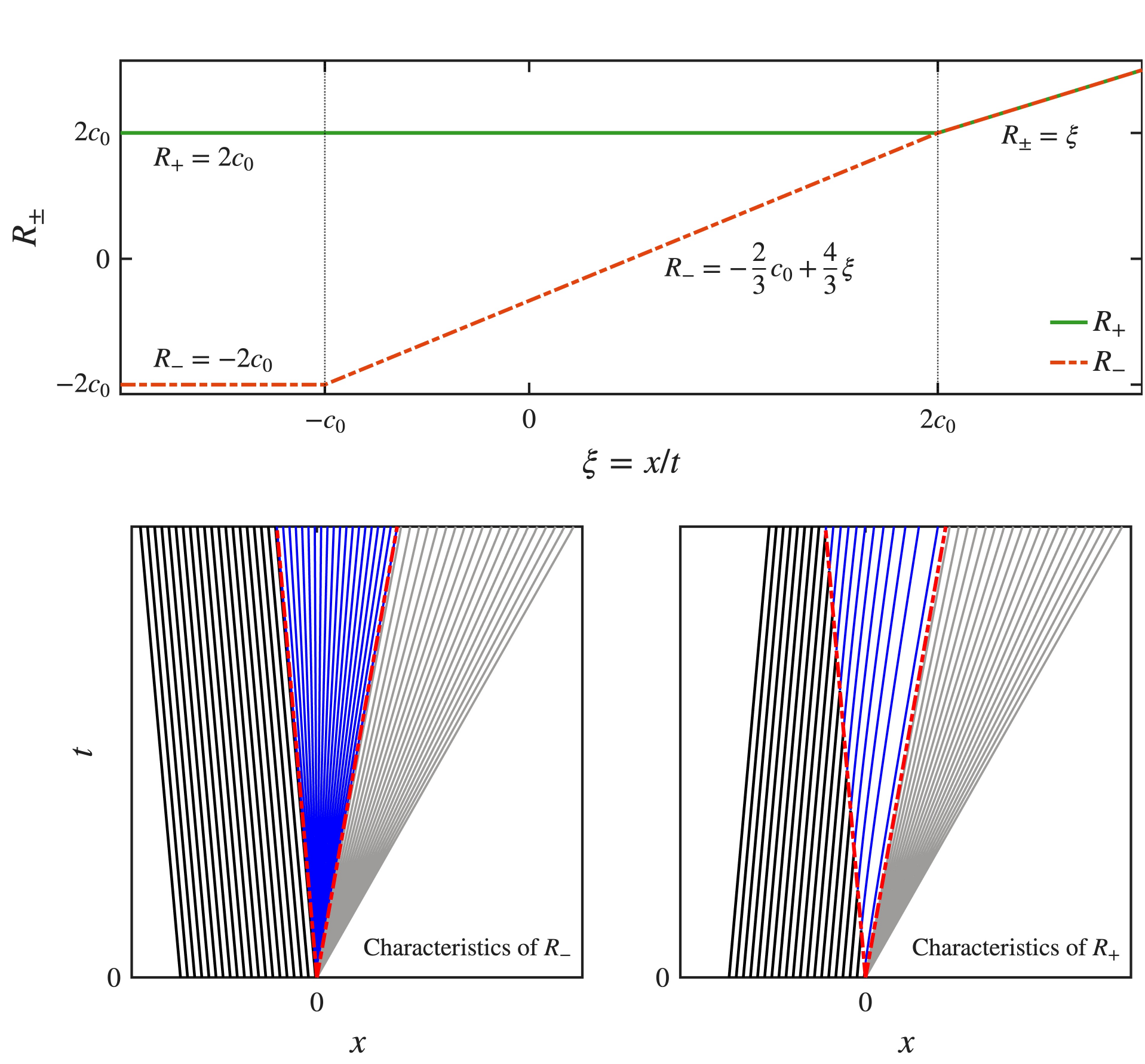}
    \caption{Flat-bathymetry dam-break solution~\eqref{eq:stoker_solution_RI}. 
    The top panel shows the Riemann variables $R_\pm$ as functions of the self-similar variable $\xi=x/t$. Across the rarefaction fan, $R_+$ remains constant, while $R_-$ varies linearly. 
    The bottom panels show the corresponding characteristic curves for $R_-$ (bottom left) and $R_+$ (bottom right), with the red dashed curves marking the rarefaction edges $x=-c_0 t$ and $x=2c_0 t$}
    \label{fig:stoker-combined}
\end{figure}
\clearpage

\subsection{Perturbative construction of the generalised rarefaction solution}\label{sec:swe_perturbation}

Treating the parabolic forcing as a perturbation of the flat-bathymetry Stoker
solution~\eqref{eq:stoker_solution_RI}, we expand the RVs in a regular perturbative expansion in 
even powers of $\omega$,
\begin{equation}\label{eq:SWE_expansion}
    R_{\pm}(x,t;\omega) \;=\; \sum_{n=0}^{\infty} r^{\pm}_{n}(x,t)\,\omega^{2n},
\end{equation}
where $r^{\pm}_{0}$ is taken from a branch of the Stoker
solution~\eqref{eq:stoker_solution_RI} and $r^{\pm}_{n}$ for $n\geq 1$ satisfies
homogeneous initial data. As in the Burgers case
(\S\ref{sec:burgers-ode-structure}), the perturbative calculation yields a
single ODE system governing all branches of the Riemann solution, with each branch
selected by its admissible initial data.

\begin{adjustwidth}{1.5em}{0pt}
\begin{proposition}[\textbf{ODE system for Riemann solution branches}]\label{prop:swe-series-ode}
\itshape
For each branch of the Stoker solution~\eqref{eq:stoker_solution_RI}, the
perturbative construction~\eqref{eq:SWE_expansion} yields a solution of the form
\begin{equation}\label{eq:affine_form}
    R_{\pm}(x,t;\omega) = 2c_0\,A_{\pm}(\omega t) + \frac{x}{t}\,B_{\pm}(\omega t),
\end{equation}
where the time-dependent coefficients $A_{\pm}$ and $B_{\pm}$ satisfy
\begin{subequations}\label{eq:ABCD_ODE_swe}
\begin{align}
    t\,\frac{dA_{\pm}}{dt}
        + \frac{1}{4}\bigl(3A_{\pm}+A_{\mp}\bigr)B_{\pm}
        &= 0,\label{eq:ABCD_ODE_swe_Apm}\\[1mm]
    t\,\frac{dB_{\pm}}{dt}
        - B_{\pm}
        + \frac{1}{4}\bigl(3B_{\pm}+B_{\mp}\bigr)B_{\pm}
        &= -(\omega t)^{2}.\label{eq:ABCD_ODE_swe_Bpm}
\end{align}
\end{subequations}
The initial data select the rarefaction, uniform depth,
or vacuum branches of the Riemann solution:
\begin{equation}\label{eq:ABCD_branches}
    \bigl(A_{+},A_{-},B_{+},B_{-}\bigr)(0)
    =
    \begin{cases}
        (1,\,-\tfrac{1}{3},\,0,\,\tfrac{4}{3}),    & \text{slow simple wave rarefaction,}\\[1mm]
        (-\tfrac{1}{3},\,1,\,\tfrac{4}{3},\,0),    & \text{fast simple wave rarefaction,}\\[1mm]
        (1,\,-1,\,0,\,0),                           & \text{uniform depth,}\\[1mm]
        (0,\,0,\,1,\,1),                            & \text{vacuum.}
    \end{cases}
\end{equation}
\end{proposition}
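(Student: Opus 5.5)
The plan is to follow the Burgers template of Propositions~\ref{prop:burgers-interior} and~\ref{prop:burgers-exterior}. First, show by induction on $n$ that every correction $r^\pm_n$ has the affine form $r^\pm_n(x,t) = 2c_0\,a^\pm_n t^{2n} + (x/t)\,b^\pm_n t^{2n}$ with constant coefficients $a^\pm_n, b^\pm_n$. Summing these with weights $\omega^{2n}$ then gives the ansatz~\eqref{eq:affine_form} with $A_\pm(\omega t)=\sum_n a^\pm_n(\omega t)^{2n}$ and $B_\pm(\omega t)=\sum_n b^\pm_n(\omega t)^{2n}$.

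The base case is read off from~\eqref{eq:stoker_solution_RI}, and it also fixes the initial data~\eqref{eq:ABCD_branches}:
\begin{itemize}
\item On the slow fan, $R_+=2c_0$ and $R_-=-\tfrac23c_0+\tfrac43\xi$, so $(A_+,A_-,B_+,B_-)(0)=(1,-\tfrac13,0,\tfrac43)$.
\item The uniform state $R_\pm=\pm2c_0$ gives $(1,-1,0,0)$.
\item The vacuum state $R_\pm=\xi$ gives $(0,0,1,1)$.
\item The fast fan follows from the slow fan under $c_0\mapsto-c_0$ together with exchanging the labels $\pm$, which gives $(-\tfrac13,1,\tfrac43,0)$.
\end{itemize}

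For the inductive step, substitute~\eqref{eq:SWE_expansion} into~\eqref{eq:diagonal_SWE} and collect powers of $\omega^2$. The characteristic speed $\lambda_\pm=(3R_\pm+R_\mp)/4$ is linear in the $R$'s, so the order-$\omega^{2n}$ equation is linear in $r^\pm_n$, transported with the leading-order speed $\lambda^{(0)}_\pm$. Its source terms are quadratic combinations of lower-order terms, plus $-x$ when $n=1$.

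Rather than solving this hierarchy of PDEs term by term, I will argue at the level of the full equation. Insert the affine ansatz~\eqref{eq:affine_form} directly into~\eqref{eq:diagonal_SWE}. We have
\begin{equation*}
\partial_t R_\pm = 2c_0\,A_\pm' + \frac{x}{t^2}\bigl(tB_\pm'-B_\pm\bigr), \qquad \partial_x R_\pm = \frac{B_\pm}{t},
\end{equation*}
where primes denote $d/dt$. The product $\lambda_\pm\,\partial_x R_\pm$ is again affine in $x$. Matching the $x$-independent part (divided by $2c_0$) gives~\eqref{eq:ABCD_ODE_swe_Apm}. Matching the coefficient of $x$ (multiplied by $t^2$) gives~\eqref{eq:ABCD_ODE_swe_Bpm}, with the forcing $-\omega^2x$ contributing $-(\omega t)^2$.

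Because the affine class is closed under the equation, the series solution inherits that form. Concretely, the coefficients of the even power series solving~\eqref{eq:ABCD_ODE_swe} satisfy a recursion of the same type as~\eqref{eq:burgers-exterior-series-recursion}. This recursion reproduces exactly the order-by-order solutions $r^\pm_n$, and those corrections vanish at $t=0$, consistent with the homogeneous initial conditions.

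The main obstacle is the claim that the perturbative hierarchy has a \emph{unique} solution of this form, so that the ansatz is forced rather than merely consistent. The ODE system~\eqref{eq:ABCD_ODE_swe} is singular at $t=0$, since it has the form $t\,dY/dt=f(Y)$. I will therefore have to check that, at each of the four initial points, the order-$n$ linear recursion $\bigl(2n\,I + J\bigr)(a_n,b_n) = \text{(lower-order terms)}$ is invertible, where $J$ is the Jacobian of $f$ at the initial point.

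For the $B$-equations, the linearisation at $B(0)$ has eigenvalues obtained from $-1+\tfrac14(6B_\pm+B_\mp)$ together with the coupling terms. In the slow case these give the value $1$ and entries such as $\tfrac13$. The $A$-equations have diagonal entries $\tfrac34 B_\pm$ plus coupling. I expect to verify that all eigenvalues of $J$ avoid $-2n$ for $n\ge1$, in analogy with the denominators $2n\pm1$ and $2n$ in the Burgers recursions. This nonresonance makes the coefficients uniquely determined, so that the perturbative expansion and the ODE solution coincide branch by branch.
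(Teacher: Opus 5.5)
Your argument is correct at the paper's level of rigour, but it reaches the ODE system by a different route.

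\textbf{The paper's route.} It works forward from the perturbative hierarchy on the slow branch. It solves the $n=1$ equations by characteristics, checks inductively that the ansatz $r_n^\pm=(2c_0a_n^\pm+b_n^\pm x/t)\,t^{2n}$ closes the hierarchy, extracts recursions for $a_n^\pm,b_n^\pm$, and resums them with generating functions in $z=(\omega t)^2$. It then obtains the initial data as zeros of $f(0,y_0)$: four fixed-point families whose free parameters are fixed by the Stoker solution. Uniqueness is taken from de Jong and van Meurs.

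\textbf{Your route.} You substitute the affine form straight into the diagonal system. This gives the ODE system in two lines and for all branches simultaneously, whereas the paper only computes the slow branch and asserts the rest. You read the initial data directly from the Stoker solution. Your rule ``swap $\pm$ and send $c_0\mapsto-c_0$'' for the fast fan is exactly the physical mirror $R_\pm(x,t)\mapsto-R_\mp(-x,t)$. The cost is that your link to the perturbative construction runs backwards (ODE, then Taylor coefficients, then a solution of the hierarchy). So everything rests on uniqueness, and you lose the paper's explanation of why only these initial states yield bounded solutions.

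\textbf{Nonresonance.} Your check is a hand-made version of the paper's appeal to de Jong and van Meurs. Your $J$ must be the Jacobian of the left-hand-side terms, as your entry $-1+\tfrac14(6B_\pm+B_\mp)$ shows, not of the $f$ in $t\,dY/dt=f(Y)$. Its spectrum is therefore the negative of the paper's, e.g.\ $\{0,1,1,-\tfrac23\}$ on the slow branch, and $2nI+J$ is indeed invertible for $n\ge1$.

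\textbf{Scope of uniqueness.} This check makes the coefficients unique only within the affine, power-series class, not among all solutions of the hierarchy. For the slow-branch hierarchy the $\lambda_0^+$-characteristics are $x=2c_0t+Kt^{1/3}$, and they all pass through the origin. Hence $g\bigl((x-2c_0t)t^{-1/3}\bigr)$ with $g(\pm\infty)=0$ solves the homogeneous order-$n$ equation and vanishes at $t=0$ for every $x\neq0$. ``Forced'' therefore needs a class restriction such as polynomial dependence on $x$. The paper imposes the same restriction tacitly.
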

\end{adjustwidth}
Appendix~\ref{apdx:swe} contains the derivation of Proposition~\ref{prop:swe-series-ode} and the perturbative calculation.

\subsubsection{Invariant manifolds and solution branches}\label{sec:swe_branches}

The ODE system~\eqref{eq:ABCD_ODE_swe} is four-dimensional, but the solution branches associated with the Riemann problem lie on two invariant manifolds of the \((B_{+},B_{-})\) subsystem. On each manifold, the coupled equations for \(B_{+}\) and \(B_{-}\) reduce to a scalar Riccati equation. The equations for \(A_{\pm}\) then become linear and can be integrated using an integrating factor. The two invariant manifolds are
\begin{subequations}\label{eq:invariant_manifolds}
\begin{align}
    B_{+}(\omega t) &= B_{-}(\omega t),
    \label{eq:uniform_condition}\\[1mm]
    B_{+}(\omega t)\,B_{-}(\omega t) &= -(\omega t)^{2}.
    \label{eq:RW_condition}
\end{align}
\end{subequations}
The first manifold contains the uniform-depth and vacuum branches, while the second contains the slow and fast simple-wave branches. The invariance of these manifolds under~\eqref{eq:ABCD_ODE_swe} is established in Appendix~\ref{apdx_sec:BD_system}.

\paragraph{\textbf{Uniform depth and vacuum branches}}\label{sec:swe_spatially_constant}
On the manifold~\eqref{eq:uniform_condition}, the $B_{\pm}$ equations
in~\eqref{eq:ABCD_ODE_swe} reduce to the scalar Riccati equation
\begin{equation}\label{eq:BD_uniform_ode}
    t\, \frac{d B_{+}}{dt} - B_{+} + B_{+}^{2} = -(\omega t)^{2}, \qquad B_{-} = B_{+}.
\end{equation}

\begin{itemize}
    \item \textbf{Uniform depth.}
    With initial data $(A_{+},A_{-},B_{+},B_{-})(0)=(1,-1,0,0)$
    from~\eqref{eq:ABCD_branches}, the solution is
    \begin{equation}\label{eq:BD_uniform_closed}
        B_{+} = B_{-} = -\omega t\tan(\omega t), \qquad
        A_{+} = -A_{-} = \sqrt{\sec(\omega t)},
    \end{equation}
    Substituting into the affine solution form~\eqref{eq:affine_form} for
    $R_{\pm}$ and inverting the RVs~\eqref{eq:RV_def} yields the depth and velocity
    \begin{equation}\label{eq:swe_uniform_phys}
        h(x,t) = h_0\sec(\omega t), \qquad u(x,t) = -\omega x\tan(\omega t).
    \end{equation}

    \item \textbf{Vacuum state.}
    With initial data $(A_{+},A_{-},B_{+},B_{-})(0)=(0,0,1,1)$
    from~\eqref{eq:ABCD_branches}, the solution is
    \begin{equation}\label{eq:BD_vacuum_closed}
        B_{+} = B_{-} = \omega t\cot(\omega t), \qquad
        A_{+} = A_{-} = 0,
    \end{equation}
    Substituting into~\eqref{eq:affine_form} and inverting~\eqref{eq:RV_def} yields the depth and velocity
    \begin{equation}\label{eq:swe_vacuum_phys}
        h(x,t) = 0, \qquad u(x,t) = \omega x\cot(\omega t).
    \end{equation}
\end{itemize}

\paragraph{\textbf{Simple wave branches and the run-up solution}}\label{sec:swe_rw_solutions}
On the manifold~\eqref{eq:RW_condition}, the $B_{\pm}$ equations 
in~\eqref{eq:ABCD_ODE_swe} reduce to the scalar Riccati equation
\begin{equation}\label{eq:D_RW_ode}
    t\, \frac{d B_{+}}{dt} - B_{+} + \tfrac{3}{4}B_{+}^{2} = -\tfrac{3}{4}(\omega t)^{2}, \qquad B_{-} = -\frac{(\omega t)^{2}}{B_{+}}
\end{equation}

\begin{itemize}
    \item \textbf{Slow simple wave.}
    With initial data $(A_{+},A_{-},B_{+},B_{-})(0)=(1,-\tfrac{1}{3},0,\tfrac{4}{3})$
    from~\eqref{eq:ABCD_branches}, the solution is
    \begin{equation}\label{eq:ABCD_slow_branch}
        \begin{aligned}
        B_{+}(\omega t) &= -\omega t\tan\bigl(\tfrac{3\omega t}{4}\bigr), \qquad
        A_{+}(\omega t) = \frac{\cos(\frac{\omega t}{4})}{\cos(\frac{3\omega t}{4})},\\[2mm]
        B_{-}(\omega t) &= \phantom{-}\omega t\cot\bigl(\tfrac{3\omega t}{4}\bigr), \qquad
        A_{-}(\omega t) = -\frac{\sin(\frac{\omega t}{4})}{\sin(\frac{3\omega t}{4})}.
        \end{aligned}
    \end{equation}
    Substituting into~\eqref{eq:affine_form} and inverting~\eqref{eq:RV_def} yields the generalised slow simple wave branch or the \emph{run-up} solution
    \begin{subequations}\label{eq:swe_slow_phys}
    \begin{align}
        \hrunup(x,t;\omega)
        &= \frac{\omega^{2}}{4}\csc^{2}\!\left(\frac{3\omega t}{2}\right)
            \!\left(\frac{2c_0}{\omega}\sin(\omega t)-x\right)^{2},
            \label{eq:swe_slow_phys_h}
            \\[2mm]
        \urunup(x,t;\omega)
        &= \frac{2c_0 + \omega x\,\cos(\tfrac{3\omega t}{2})
            \csc(\tfrac{\omega t}{2})}{1+2\cos(\omega t)}.
            \label{eq:swe_slow_phys_u}
    \end{align}
    \end{subequations}

    \item \textbf{Fast simple wave.}
    The generalised fast simple wave branch, corresponding to the mirrored configuration with vacuum on the left, can be obtained from~\eqref{eq:swe_slow_phys} by the flipping the sign $c_0\mapsto -c_0$.
\end{itemize}

\paragraph{\textbf{Asymptotic nature of the run-up solution.}}
The pair~\eqref{eq:swe_slow_phys} is an exact solution of the forced
SWE~\eqref{eq:swe-parabolic} and generalises the simple wave branch of the
classical Stoker rarefaction solution to parabolic bathymetry. In the flat
bathymetry limit $\omega\to 0$, it extends across the entire rarefaction fan and
recovers Stoker's solution globally. For $\omega>0$, however, it describes the
rarefaction flow only locally near the vacuum edge. A wave front expansion about
$x=X_{\rm dry}(t;\omega)$ shows that this run-up solution appears as the solution to the
leading order system in the resulting hierarchy. A full description of the nontrivial rarefaction
fan profile, denoted $H_{\rm fan}$ and $U_{\rm fan}$ in
Section~\S\ref{sec:main_results}, requires including the higher order spatial terms, which
are governed by a hierarchy of linear ODE systems whose dependence on $\omega$
is non-analytic and include fractional powers of $\omega$. The regular
perturbative expansion~\eqref{eq:SWE_expansion} thus captures only the analytic
component of the full solution. Further details are given in
Appendix~\ref{apdx_sec:wavefront_higher_order}.

\subsection{Rarefaction edges}\label{sec:swe_edges}

The rarefaction fan is bounded by two characteristics of the slow ($\lambda_{-}$) family (see Figure~\ref{fig:stoker-combined}). The dry edge $X_{\rm dry}(t)$ is the continuous boundary separating the fan from the vacuum, and the wet edge $X_{\rm wet}(t)$ is the continuous boundary separating the fan from the uniform-depth region. Since the depth and velocity are continuous across each edge, the bounding characteristic is shared by the fan and the adjacent exterior state, and the characteristic speed at the edge takes the same value whether computed from inside the fan or from the exterior. The exterior states are known exactly through the closed forms~\eqref{eq:swe_uniform_phys} and~\eqref{eq:swe_vacuum_phys}, whereas the fan interior lacks a complete description (see Figure~\ref{fig:swe-parabolic-solution}), so we locate each edge by evaluating its speed from the exterior side. This gives an IVP for each edge location. Both edges satisfy
\begin{equation}\label{eq:edge_speed}
    \frac{\mathrm{d}X}{\mathrm{d}t} = \lambda_{-}\bigl(R_{+}(X(t),t), \,R_{-}(X(t),t)\bigr) = u\big(X(t),t\big) - \sqrt{h\big(X(t),t\big)} ,
\end{equation}
with initial locations set such that $X_{\rm dry}(0)=X_{\rm wet}(0)=0$.

\paragraph{\textbf{Dry edge.}}

On the vacuum branch~\eqref{eq:swe_vacuum_phys}, $h \equiv 0$ so $\lambda_- = u = \omega x\cot(\omega t)$, and~\eqref{eq:edge_speed} becomes
\begin{equation}\label{eq:edge_dry_ivp}
    \frac{\mathrm{d}X_{\rm dry}}{\mathrm{d}t} - \omega\cot(\omega t)\,X_{\rm dry} = 0,
    \qquad X_{\rm dry}(0) = 0,
\end{equation}
with solution
\begin{equation}\label{eq:swe_edge_right}
    X_{\rm dry}(t;\omega) = \frac{2c_0}{\omega}\sin(\omega t).
\end{equation}

\paragraph{\textbf{Wet edge.}}
On the uniform-depth branch~\eqref{eq:swe_uniform_phys}, $u = -\omega x\tan(\omega t)$ and $h = h_0\sec(\omega t)$, so~\eqref{eq:edge_speed} gives
\begin{equation}\label{eq:edge_wet_ivp}
    \frac{\mathrm{d}X_{\rm wet}}{\mathrm{d}t} + \omega\tan(\omega t)\,X_{\rm wet} = -c_0\sqrt{\sec(\omega t)},
    \qquad X_{\rm wet}(0) = 0.
\end{equation}
Multiplying by the integrating factor $\sec(\omega t)$ and integrating gives
\begin{equation}\label{eq:swe_edge_left}
    X_{\rm wet}(t;\omega) = -\frac{c_0}{\omega}\cos(\omega t)\!\int_{0}^{\omega t}\!\sec^{3/2}(z)\,\mathrm{d}z.
\end{equation}
The integral in~\eqref{eq:swe_edge_left} has a closed-form expression in terms of the incomplete elliptic integral of the second kind with elliptic parameter \(m=2\),
\begin{equation}\label{eq:swe_edge_left_elliptic}
    \int_{0}^{\omega t}\!\sec^{3/2}(z)\,\mathrm{d}z = 2\sqrt{\sec(\omega t)}\sin(\omega t) - 2\,E\!\left(\frac{\omega t}{2},\,2\right).
\end{equation}

%% file: sections/5_applications.tex
\section{Applications to physical configurations}\label{sec:applications}

In this section we extend and apply the one-sided Riemann solution of Section~\S\ref{sec:swe} to physical fluid configurations with compact support and  finite mass. We first consider fluid configurations over a parabolic well, namely a box configuration (\S\ref{sec:swe_hat}) and a lake-at-rest configuration (\S\ref{sec:swe_stationary_profile}), and then extend the parabolic well results to a parabolic hill bathymetry (\S\ref{sec:swe_repulsive}).
We show that the finite-time singularity present in the GRW solution of one-sided Riemann data is never reached.

\subsection{Box-type configuration}\label{sec:swe_hat}

We consider the box-shaped depth profile, a constant-depth \emph{plateau} of height $h_0$ and width $x_{0}^{R}-x_{0}^{L}$ over a dry bed,
\begin{equation}\label{eq:swe-hat_data}
    h(x,0) =
    \begin{cases}
        h_0, & x_{0}^{L} < x < x_{0}^{R},\\[1mm]
        0, & \text{elsewhere,}
    \end{cases}
    \qquad u(x,0) = 0.
\end{equation}

\begin{figure}[b!]
    \centering
    \includegraphics[width=0.99\linewidth]{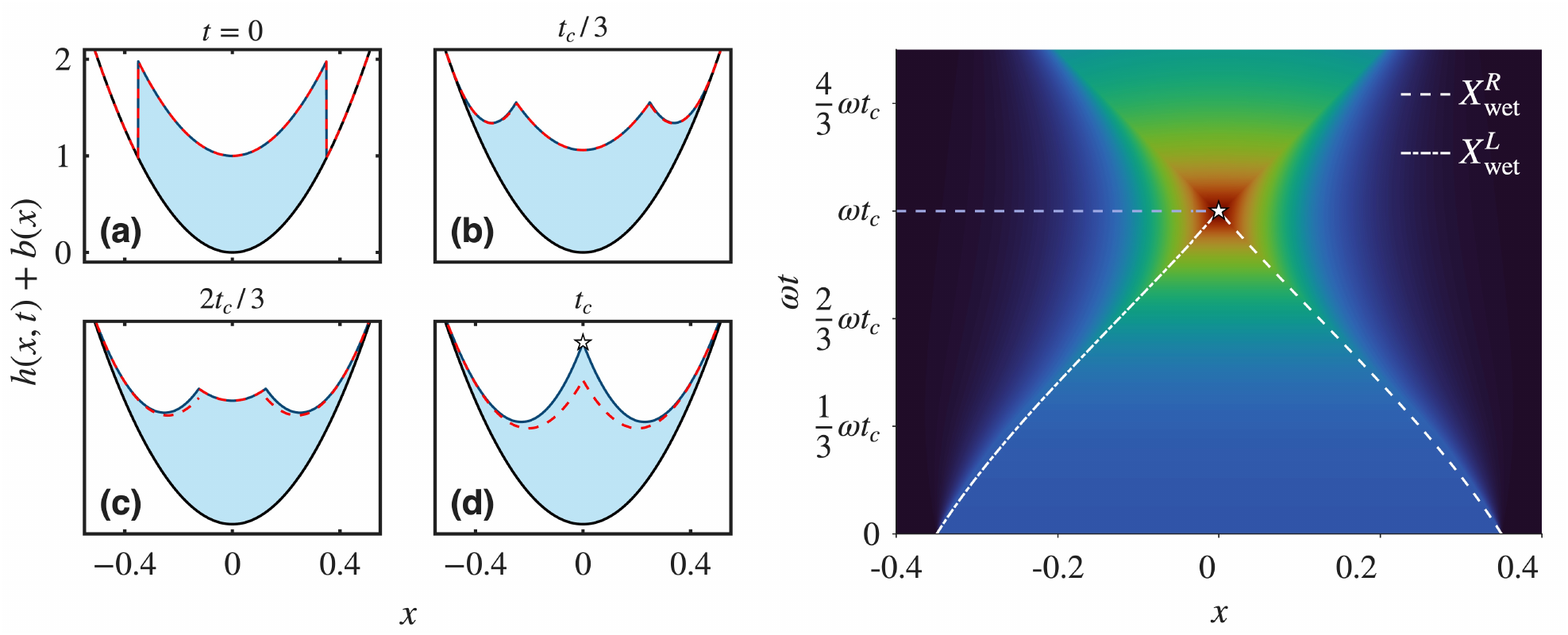}
    \caption{Evolution of the fluid depth starting from a box-type configuration over a parabolic bathymetry. The left panels show depth profiles at selected times, while the right panel shows a space-time colour map of $h(x,t)$ in the $(x,\omega t)$-plane. The piecewise solution~\eqref{eq:h_hat_sol} describes the evolution up to $t=t_c$, when the two peaks (inner edges) collide. The red dashed curves show the exact branches and the asymptotic run-up branch. At $t=t_c$ a peak forms with depth \(h_{\rm peak} = h_0 \sec(\omega t_c)\), marked by a star, which for $t>t_c$ splits into two outward-moving peaks.}
    \label{fig:hat_ic_evolution_panels}
\end{figure}

The left jump evolves as a left-moving generalised slow simple wave and the right jump evolves as a right-moving generalised fast simple wave. 
For each fan we label its dry edge (vacuum side) as \emph{outer} and its wet edge (uniform-depth side) as \emph{inner}. Until the two fans interact at $t \doteq t_c$ (given by \eqref{eq:swe-collision_time_solve} and \eqref{eq:tc-series-flat}), the solution is
\begin{equation}\label{eq:h_hat_sol}
    (h,u)(x,t) =
    \begin{cases}
        \bigl(0,\,\omega x\cot (\omega t)\bigr),
            & x < X_{\rm dry}^{L}(t;\omega), \\[1mm]
        \bigl(H_{\rm fan}^{L},\,U_{\rm fan}^{L}\bigr),
            & X_{\rm dry}^{L}(t;\omega) < x < X_{\rm wet}^{L}(t;\omega), \\[1mm]
        \bigl(h_0\sec (\omega t),\,-\omega x\tan (\omega t)\bigr),
            & X_{\rm wet}^{L}(t;\omega) < x < X_{\rm wet}^{R}(t;\omega), \\[1mm]
        \bigl(H_{\rm fan}^{R},\,U_{\rm fan}^{R}\bigr),
            & X_{\rm wet}^{R}(t;\omega) < x < X_{\rm dry}^{R}(t;\omega), \\[1mm]
        \bigl(0,\,\omega x\cot (\omega t)\bigr),
            & x > X_{\rm dry}^{R}(t;\omega).
    \end{cases}
\end{equation}
where $H_{\rm fan}^{L,R}$ and $U_{\rm fan}^{L,R}$ are the non-trivial fan profiles.
The inner and outer edges are given by
\begin{equation}\label{eq:edges_hat}
\begin{aligned}
    X_{\rm dry}^{L}(t;\omega)  &= -\frac{2c_0}{\omega}\sin(\omega t) + x_{0}^{L}\cos(\omega t),
    & X_{\rm dry}^{R}(t;\omega) &= \phantom{-}\frac{2c_0}{\omega}\sin(\omega t) + x_{0}^{R}\cos(\omega t),\\[2mm]
    X_{\rm wet}^{L}(t;\omega)  &=\frac{c_0}{\omega}\,\cF(\omega t)\cos(\omega t) + x_{0}^{L}\cos(\omega t),
    & X_{\rm wet}^{R}(t;\omega) &= -\frac{c_0}{\omega}\,\cF(\omega t) \cos(\omega t) + x_{0}^{R}\cos(\omega t),
\end{aligned}
\end{equation}
where
\begin{equation}\label{eq:F_sec}
    \cF(z) = \int_{0}^{z}\sec^{3/2}(\theta)\,\mathrm{d}\theta.
\end{equation}

Near the outer edges, $H_{\rm fan}^{L}$ and $H_{\rm fan}^{R}$ have the run-up asymptotics
\begin{equation}\label{eq:hat-runup}
    H_{\rm fan}^{L/R}(x,t;\omega) \;\sim\; \frac{\omega^{2}}{4}\csc^{2}\!\left(\frac{3\omega t}{2}\right)\bigl(X_{\rm dry}^{L/R}(t;\omega)-x\bigr)^{2}
    \qquad \text{as } x \to X_{\rm dry}^{L/R}(t;\omega)^{-}.
\end{equation}

The solution~\eqref{eq:h_hat_sol} inherits the singular factors of the one-sided Riemann solution (\S\ref{sec:main_swe}). In the box configuration, however, these finite-time singularities are never reached, because the two inner edges invariably collide before the singularity time. After the collision, the counter-propagating fans interact and form a complex-wave region in which, in the corresponding unforced problem, both $R_+$ and $R_-$ vary; this regime has been studied using the hodograph transformation~\cite{jenssen2017exact,pang2012collision}. Extending the present construction to the post-collision regime over parabolic bathymetry lies outside the scope of this work.

\begin{adjustwidth}{1.5em}{0pt}
\begin{proposition}[\textbf{Inner-edge collision and singularity prevention}]\label{prop:hat-collision}
\itshape
The inner edges $X_{\rm wet}^{L}$ and $X_{\rm wet}^{R}$ collide at the location
\begin{equation}\label{eq:x_c}
    x_c = \frac{1}{2}(x_{0}^{R}+x_{0}^{L})\cos(\omega t_c),
\end{equation}
with collision time $t_c$ given implicitly by
\begin{equation}\label{eq:swe-collision_time_solve}
    \omega t_c = \cF^{-1}\!\left[\frac{\omega}{2c_0}(x_{0}^{R}-x_{0}^{L})\right] < \frac{\pi}{2}.
\end{equation}
In particular, the collision occurs strictly before the singularity time $\omega t = \pi/2$ for any plateau width $x_0^R - x_0^L > 0$, so the finite-time blow-up predicted by the one-sided Riemann solution is prevented.
\end{proposition}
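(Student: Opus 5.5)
The plan is to work directly with the explicit inner-edge trajectories in \eqref{eq:edges_hat}. These follow from the centred wet-edge formula \eqref{eq:swe_edge_left} together with the translation law of Proposition~\ref{prop:translation}. For the left fan we also use the mirror symmetry $c_0\mapsto -c_0$, which turns the slow simple wave into the fast one. Consequently
\begin{equation*}
X_{\rm wet}^{R}(t)-X_{\rm wet}^{L}(t)=\cos(\omega t)\Bigl[(x_0^R-x_0^L)-\frac{2c_0}{\omega}\cF(\omega t)\Bigr].
\end{equation*}
On $0\le\omega t<\pi/2$ we have $\cos(\omega t)>0$. The collision condition $X_{\rm wet}^{L}=X_{\rm wet}^{R}$ is therefore equivalent to $\cF(\omega t)=\frac{\omega}{2c_0}(x_0^R-x_0^L)$, and this is the implicit formula \eqref{eq:swe-collision_time_solve}. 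Substituting back into either edge expression, the $\cF$ terms cancel when we average the two edges. This gives $x_c=\tfrac12(x_0^R+x_0^L)\cos(\omega t_c)$, which is \eqref{eq:x_c}.

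Next we show that $\cF^{-1}$ is well defined and that its value lies below $\pi/2$. The integrand $\sec^{3/2}\theta$ is positive and continuous on $[0,\pi/2)$, so $\cF$ is strictly increasing and continuous there, with $\cF(0)=0$. The key point is that $\cF(z)\to+\infty$ as $z\to(\pi/2)^-$. Near $\theta=\pi/2$ we have $\sec\theta\sim(\pi/2-\theta)^{-1}$, so the integrand behaves like $(\pi/2-\theta)^{-3/2}$, which is non-integrable. One could alternatively read this off the elliptic-integral form \eqref{eq:swe_edge_left_elliptic}, whose term $2\sqrt{\sec(\omega t)}\sin(\omega t)$ diverges. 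Hence $\cF:[0,\pi/2)\to[0,\infty)$ is a bijection. For any width $x_0^R-x_0^L>0$ the target value $\frac{\omega}{2c_0}(x_0^R-x_0^L)$ is finite and positive, so there is a unique $\omega t_c\in(0,\pi/2)$ solving the equation. This gives the strict inequality.

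The step needing the most care is the justification that the edge formulas \eqref{eq:edges_hat} really describe the inner edges up to $t_c$. Each inner edge is computed from the exterior uniform-depth state. One must check that the central plateau region remains the translated uniform branch \eqref{eq:swe_uniform_phys}. That is, the two translated uniform states coming from the left and right Riemann problems must coincide. Both equal $h_0\sec(\omega t)$ and $-\omega x\tan(\omega t)$: translating by $x_0$ via Proposition~\ref{prop:translation} shifts $x$ by $x_0\cos(\omega t)$ and adds $-\omega x_0\sin(\omega t)$ to $u$, and these contributions cancel exactly because $\cos\cdot\tan=\sin$. The plateau is therefore consistent, and before $t_c$ the two fans do not interact because by finite propagation speed they are separated by this plateau.

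Finally, note that $X_{\rm wet}^R-X_{\rm wet}^L>0$ on $(0,t_c)$, since $\cF$ is increasing. So $t_c$ is indeed the first contact time, and the blow-up at $\omega t=\pi/2$ is never reached by the piecewise solution.
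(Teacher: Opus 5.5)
Your proof is correct and follows the paper's argument: you equate the inner edges from \eqref{eq:edges_hat} to obtain $x_c$ and the condition $\cF(\omega t_c)=\frac{\omega}{2c_0}(x_0^R-x_0^L)$, then use that $\cF$ is a strictly increasing bijection from $[0,\pi/2)$ onto $[0,\infty)$. Your extra details fill in points the paper leaves implicit without changing the route: dividing by $\cos(\omega t)>0$, the $(\pi/2-\theta)^{-3/2}$ non-integrability that makes $\cF$ diverge, the translation invariance of the plateau state, and the check that $t_c$ is the first contact.
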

\end{adjustwidth}

\medskip
\begin{adjustwidth}{1.5em}{0pt}
\begin{proof}
Equating the inner edges $X_{\rm wet}^{L}(t;\omega)=X_{\rm wet}^{R}(t;\omega)$ from~\eqref{eq:edges_hat} gives the collision location~\eqref{eq:x_c} together with the condition
\begin{equation*}
    \cF(\omega t_c) = \frac{\omega}{2c_0}\,(x_{0}^{R}-x_{0}^{L}) \;=:\; \mu > 0,
\end{equation*}
where $\mu>0$ for any plateau width $x_{0}^{R}-x_{0}^{L}>0$. The integrand $\sec^{3/2}\theta$ in~\eqref{eq:F_sec} is positive and continuous on $[0,\pi/2)$, so $\cF$ is strictly increasing there with $\cF(0)=0$ and $\cF(z)\to\infty$ as $z\to(\pi/2)^{-}$. Hence $\cF$ is a strictly increasing bijection from $[0,\pi/2)$ onto $[0,\infty)$, and the equation $\cF(\omega t_c)=\mu$ admits a unique solution $\omega t_c\in[0,\pi/2)$ for every $\mu>0$. The collision therefore always occurs, at a time $t_c$ given by~\eqref{eq:swe-collision_time_solve}, independently of the box configuration.
\end{proof}
\end{adjustwidth}

\medskip
\noindent Denoting $t_c^{\flat} = \tfrac{x_0^{R}-x_0^{L}}{2c_0}$ as the flat-bottom collision time and noting $\mu = \omega t_c^{\flat}$, inversion of~\eqref{eq:swe-collision_time_solve} gives
\begin{equation}\label{eq:tc-series-flat}
    t_c = t_c^{\flat}\left[\,1
    - \frac{1}{4}\,(\omega t_c^{\flat})^{2}
    + \frac{17}{160}\,(\omega t_c^{\flat})^{4}
    - \frac{143}{2688}\,(\omega t_c^{\flat})^{6}
    + \frac{11191}{387072}\,(\omega t_c^{\flat})^{8}
    + \cdots\right].
\end{equation}

\newpage
\subsection{Lake-at-rest configuration}\label{sec:swe_stationary_profile}

A physically natural initial condition is a quiescent fluid sitting in the parabolic well, with the free surface in exact hydrostatic balance with the bathymetry. 
We consider the initial profile
\begin{equation}
\label{eq:swe-stationary_data}
    h(x,0) =
    \begin{cases}
        h_{\rm st}(x), & -x_{\rm st} < x < x_0,\\[1mm]
        0, & \text{elsewhere},
    \end{cases}
    \qquad u(x,0) = 0~,
\end{equation}
where 
\begin{subequations}\label{eq:h_stationary_profile}
\begin{gather}
    h_{\rm st}(x) = h_0 - b(x)
    = h_0 - \frac{1}{2}\omega^{2}x^{2},
    \label{eq:h_stationary_profile_depth}
    \\[3mm]
    x_0 = \delta x_{\rm st},
    \qquad
    x_{\rm st} = \frac{\sqrt{2}c_0}{\omega},
    \qquad
    c_0 = \sqrt{h_0}~.
    \label{eq:h_stationary_profile_params}
\end{gather}
\end{subequations}
where $h_0>0$.
The initial depth profile is non-negative on 
$[-x_{\rm st},x_{\rm st}]$ and vanishes at the endpoints. 
The jump is placed at $x_0$, such that $\delta = x_0/x_{\rm st} \in (-1,1)$ is the normalised jump position.
Since the initial total fluid elevation in the wet region is $\eta(x) = h_{\rm st}(x) + b(x) = h_0$ is constant, this represents a lake at rest. Hence, $h_{\rm st}$ corresponds to a steady state of the forced SWE~\eqref{eq:swe-parabolic}. The only motion is generated at the discontinuity $x=x_0$, where a rarefaction fan is released into the vacuum while the fluid to its left remains in lake-at-rest balance until reached by the fan.

For as long as an undisturbed stationary region remains, the depth and velocity have the form
\begin{subequations}\label{eq:h_stationary_sol}
\begin{align}
    h(x,t) &=
    \begin{cases}
        h_{\rm st}(x), & -x_{\rm st} < x < X^{\rm lake}_{\rm wet}(t;\omega),\\[1mm]
        H_{\rm fan}^{\rm st}(x,t;\omega), & X^{\rm lake}_{\rm wet}(t;\omega) < x < X^{\rm lake}_{\rm dry}(t;\omega),\\[1mm]
        0, & x > X^{\rm lake}_{\rm dry}(t;\omega),
    \end{cases}
    \label{eq:h_stationary_sol_h}
    \\[4mm]
    u(x,t) &=
    \begin{cases}
        0, & -x_{\rm st} < x < X^{\rm lake}_{\rm wet}(t;\omega),\\[1mm]
        U_{\rm fan}^{\rm st}(x,t;\omega), & X^{\rm lake}_{\rm wet}(t;\omega) < x < X^{\rm lake}_{\rm dry}(t;\omega),\\[1mm]
        \omega x\cot\omega t-\omega x_0\csc\omega t, & x > X^{\rm lake}_{\rm dry}(t;\omega),
    \end{cases}
    \label{eq:h_stationary_sol_u}
\end{align}
\end{subequations}
where the vacuum velocity in~\eqref{eq:h_stationary_sol} is the off-centred form of the centred field $u=\omega x\cot\omega t$ under Proposition~\ref{prop:translation}, and $H_{\rm fan}^{\rm st}$, $U_{\rm fan}^{\rm st}$ are the nontrivial fan profiles, with the run-up asymptotics near the dry edge $X^{\rm lake}_{\rm dry}$.

\begin{figure}[t!]
    \centering
    \includegraphics[width=0.99\linewidth]{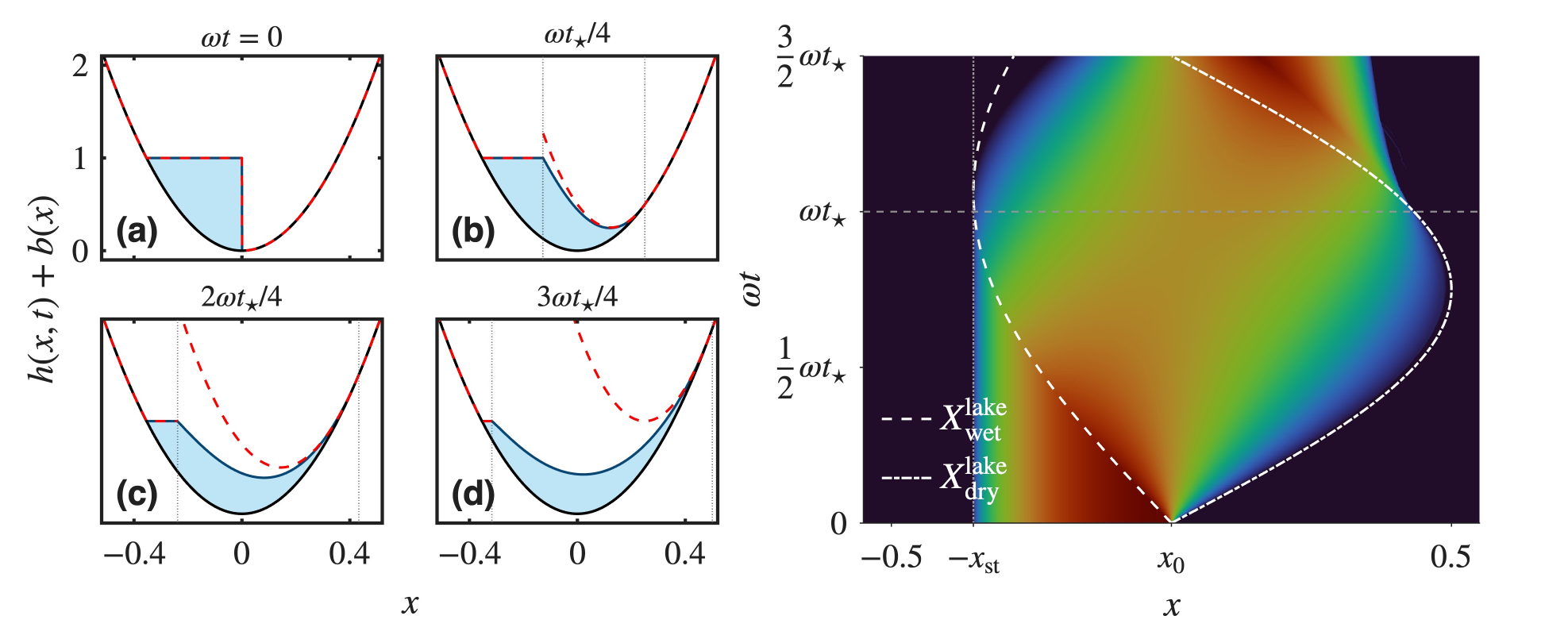}
    \caption{Evolution of the lake-at-rest configuration~\eqref{eq:swe-stationary_data} over a parabolic well for $\delta=0$ and $h_0=1$. The left panels show depth profiles at selected times, while the right panel shows a space-time colour map of $h(x,t)$ in the $(x,\omega t)$-plane. The time $t_\star$ (see Eq.~\eqref{eq:stationary_validity}) is the endpoint of the interval on which the piecewise profile~\eqref{eq:h_stationary_sol} is valid. For this simulated centred case, the maximum vertical elevation attained (run-up) is $2h_0$ above the well bottom.}
    \label{fig:stationary_state_evolution}
\end{figure}

\paragraph{\textbf{Dry edge}}
At the dry front the depth vanishes, so~\eqref{eq:edge_speed} reduces to
$\dot{X}^{\rm lake}_{\rm dry}(t;\omega)=u$, the edge being carried by the off-centred vacuum
velocity field of~\eqref{eq:h_stationary_sol}. The initial location of the edge is  $X^{\rm lake}_{\rm dry}(0;\omega)=x_0=\delta x_{\rm st}$ with initial edge speed fixed by the dry-bed release, which is twice the local wave speed behind the jump,
\begin{equation}\label{eq:dry_edge_launch}
    \dot{X}^{\rm lake}_{\rm dry}(0;\omega)=2\sqrt{h_{\rm st}(x_0)}=2c_0\sqrt{1-\delta^{2}} .
\end{equation}
This gives the dry edge
\begin{equation}\label{eq:stationary_right_edge}
    X^{\rm lake}_{\rm dry}(t;\omega)
    = \delta x_{\rm st} \cos(\omega t)+\sqrt{2}x_{\rm st}\,\sqrt{1-\delta^{2}}\,\sin(\omega t).
\end{equation}
For $\delta=0$ this recovers the centred dry edge $\tfrac{2c_0}{\omega}\sin\omega t$, and the launch speed vanishes at both $|\delta|\to1$. As $\delta\to1$ the jump sits at $x_{\rm st}$, where the lake-at-rest depth vanishes, so the wet region is the full equilibrium profile on $[-x_{\rm st},x_{\rm st}]$ and nothing moves. As $\delta\to-1$ the wet region $[-x_{\rm st},x_0]$ shrinks to a point, and the jump disappears.

\paragraph{\textbf{Wave run--up}}
The dry edge is the moving shoreline, where $h=0$ so the free-surface elevation $\eta=h+b$ is set entirely by the bathymetry $b(X^{\rm lake}_{\rm dry}(t;\omega))$, which gives the vertical elevation of the shoreline with respect to the bottom of the well. The maximum run-up height above the still-water level $h_0$ is obtained as
\begin{equation}\label{eq:stationary_runup}
    \max_t\,\bigl[\,b\bigl(X^{\rm lake}_{\rm dry}(t;\omega)\bigr)-h_0\,\bigr] = h_0\bigl(1-\delta^2\bigr) >0 .
\end{equation}
The positivity shows that, for $-1<\delta<1$, the shoreline always rises above the still-water level. For the centred case $\delta=0$ this is a run-up height $h_0$, i.e. an elevation $2h_0$ above the well bottom.

\paragraph{\textbf{Wet edge}}
The lake edge is determined by matching to the stationary region. Since the fluid is at rest there, the slow characteristic speed at the lake edge equals $\lambda_- = -\sqrt{h_{\rm st}(X^{\rm lake}_{\rm wet}(t;\omega))}$, giving
\begin{equation}\label{eq:left_edge_ode}
    \dot{X}^{\rm lake}_{\rm wet}(t;\omega) = -\sqrt{h_0 - \tfrac{1}{2}\omega^2 \bigl(X^{\rm lake}_{\rm wet}(t;\omega)\bigr)^2},
    \qquad X^{\rm lake}_{\rm wet}(0;\omega) = x_0,
\end{equation}
whose solution is
\begin{equation}\label{eq:stationary_left_edge}
    X^{\rm lake}_{\rm wet}(t;\omega) = \delta x_{\rm st} \cos\!\left(\frac{\omega t}{\sqrt{2}}\right) - x_{\rm st}\sqrt{1-\delta^2}\,\sin\!\left(\frac{\omega t}{\sqrt{2}}\right).
\end{equation}

The two edges have independent validity intervals. As the fan expands, its lake edge $X^{\rm lake}_{\rm wet}(t;\omega)$ advances leftward until the undisturbed stationary region is exhausted at $t=t_{\rm st}$, when $X^{\rm lake}_{\rm wet}$ reaches the left shore $x=-x_{\rm st}$ at which the lake-at-rest depth vanishes, that is, $X^{\rm lake}_{\rm wet}(t_{\rm st};\omega)=-x_{\rm st}$. Independently, the fan profile $H_{\rm fan}$, and with it the run-up solution, remains regular only up to its singularity at $t=t_{\rm run-up}$. The piecewise representation~\eqref{eq:h_stationary_sol} is therefore valid on
\begin{equation}\label{eq:stationary_validity}
    0 < t < t_\star,
    \qquad
    t_\star = \min(t_{\rm st}, t_{\rm run-up}),
\end{equation}
where
\begin{equation}\label{eq:stationary_times}
    \omega t_{\rm st} = \frac{\pi}{\sqrt{2}}\!\left(1 + \frac{2}{\pi}\arcsin\delta\right),
    \qquad
    \omega t_{\rm run-up} = \frac{2\pi}{3}.
\end{equation}
Equating the two times gives the threshold
\begin{equation}\label{eq:stationary_delta_star}
    \delta_* = \sin\!\left(\frac{2\sqrt{2}-3}{6}\,\pi\right) \approx -0.0897.
\end{equation}
For $\delta < \delta_*$, the stationary region is exhausted first and $t_\star=t_{\rm st}$. Otherwise, for $\delta > \delta_*$ the run-up singularity is reached first and $t_\star=t_{\rm run-up}$.

\subsection[Applications to Bose-Einstein Condensates]{Applications to Bose-Einstein Condensates}
\label{sec:NLS}

Bose--Einstein condensates (BECs) of ultracold atoms are described by the
Gross--Pitaevskii (GP) equation, a form of the nonlinear Schr\"odinger (NLS)
equation~\cite{el2016dam}. In many experiments the atoms are tightly confined in a
``cigar''-shaped trap, whose dynamics is described by the one-dimensional GP
equation with a harmonic potential~\cite{sharan2025breaking},
\begin{equation}\label{eq:1d-gp}
    i\hbar\,\frac{\partial \psi}{\partial t}
    = -\frac{\hbar^{2}}{2m}\frac{\partial^{2}\psi}{\partial x^{2}}
    + \frac{1}{2}\, m\,\omega^{2}\, x^{2}\,\psi
    + g\,|\psi|^{2}\psi,
\end{equation}
where $\hbar$ is the reduced Planck constant, $m$ is the atomic mass, $g>0$ is the
repulsive effective one-dimensional interaction strength, and $\tfrac12 m\omega^2 x^2$ is the
axial harmonic trap.

Writing the wavefunction in hyrdodynamic variables
$\psi = \sqrt{n}\,e^{im\phi/\hbar}$, with $n=|\psi|^{2}$ the atomic density and
$u=\partial_x\phi$ the flow velocity, the GP equation~\eqref{eq:1d-gp} can be recast as
a system of two conservation laws for $n$ and $u$. Introducing the scaled variables
\begin{equation}\label{eq:gp-swe-map}
    h \;\doteq\; \frac{g}{m}\,n,
    \qquad
    c_s \;=\; \sqrt{\frac{g\,n}{m}} \;=\; \sqrt{h},
\end{equation}
where $h$ is the analogue of the shallow-water depth and $c_s$ the local speed of sound, the hydrodynamic system takes a form closely resembling the SWE,
\begin{subequations}\label{eq:gp-as-swe}
\begin{align}
    \frac{\partial h}{\partial t} + \frac{\partial(hu)}{\partial x} &= 0,
    \label{eq:gp-as-swe-mass}\\[2mm]
    \frac{\partial u}{\partial t} + u\,\frac{\partial u}{\partial x} + \frac{\partial h}{\partial x}
    &= -\,\omega^{2}x
    + \frac{\hbar^{2}}{2m^{2}}\,
        \frac{\partial}{\partial x}\!\left(\frac{1}{\sqrt{h}}\,\frac{\partial^{2}\sqrt{h}}{\partial x^{2}}\right).
    \label{eq:gp-as-swe-mom}
\end{align}
\end{subequations}
The continuity equation~\eqref{eq:gp-as-swe-mass} expresses mass conservation, and the atomic density is recovered as $n=(m/g)\,h$. In the momentum balance~\eqref{eq:gp-as-swe-mom}, the term $\partial_x h$ is the mean-field pressure, the term $\propto\omega^{2}$ is the harmonic-trap forcing, and the dispersive term $\propto\hbar^{2}$ is the \emph{quantum pressure}.
The harmonic trap frequency $\omega$ plays the role of the bathymetric curvature. 

Apart from the quantum-pressure term,~\eqref{eq:gp-as-swe} is precisely the forced SWE with parabolic bathymetry~\eqref{eq:swe-parabolic} studied in this work. 
Whenever the local sound speed is spatially linear, such that
\begin{equation}\label{eq:gp-affine}
    \sqrt{h(x,t)} = a(t)\,x + b(t),
\end{equation}
one has $\partial_x^{2}\sqrt{h}=0$, so the quantum-pressure term in~\eqref{eq:gp-as-swe} vanishes \emph{identically}. The exact solutions presented in this work --- the run-up, spatially uniform and vacuum solutions (\S\ref{sec:main_swe}) --- are all of this form, and are therefore also \emph{exact} solutions of the GP / NLS equation~\eqref{eq:1d-gp}.

A dry dam-break experiment was recently realised in a BEC~\cite{sharan2025breaking}. In the experiment, the condensate is first prepared by confining atoms in one half of the harmonic trap and allowing them to relax to a stationary state, with the density balancing the trapping potential and the flow at rest. This is the direct analogue of the lake-at-rest configuration of \S\ref{sec:swe_stationary_profile}, in which water fills one side of the parabolic bathymetry in hydrostatic equilibrium. Once released, the run-up solution~\eqref{eq:main-run_up} and the dry edge~\eqref{eq:dry-edge} characterise the condensate near the vacuum point.

Since the run-up solution is only a local description near the vacuum edge, a global description across the whole rarefaction region was constructed in~\cite{sharan2025breaking} as a \emph{matched solution}. As the run-up profiles are spatially linear in $\chi$, a quadratic spatial profile with time-dependent coefficients was considered. This ansatz expressed in the wavefront coordinate $\chi \doteq X_{\rm dry}(t;\omega) - x$, takes the form
\begin{equation}\label{eq:gp-matched}
    c_s \approx c_0(t)+c_1(t)\,\chi+c_2(t)\,\chi^{2},
    \qquad
    u \approx u_0(t)+u_1(t)\,\chi+u_2(t)\,\chi^{2},
\end{equation}
where the leading two coefficients $c_0$, $c_1$ and $u_0$, $u_1$ are obtained directly from the run-up solution~\eqref{eq:main-run_up} by expressing it in terms of $\chi$ and comparing coefficients, while $c_2$ and $u_2$ are determined uniquely by requiring the matched profiles connect continuously with the stationary profile at the wet edge. Experimental measurements of the BEC density and flow velocity, together with direct numerical simulations of the (3+1)D GP equation, were found to be in excellent agreement with this theory.

\subsection[Parabolic hill bathymetry]{Parabolic hill bathymetry}\label{sec:swe_repulsive}

In contrast to the parabolic well bathymetry considered in \S\ref{sec:swe}, we now consider the parabolic hill
\begin{equation}\label{eq:parabolic_hill_bathymetry}
    b(x) = -\tfrac{1}{2}\omega^2 x^2.
\end{equation}
The curvature has the opposite sign, so the bathymetric forcing expels fluid outward rather than trapping it inside the parabolic well. 
The branch formulas used below are obtained from the parabolic-well formulas by the continuation $\omega\mapsto i\omega$, which replaces the trigonometric factors by their hyperbolic counterparts. We record only the depth profiles; the velocity profiles are obtained in the same way from the corresponding attractive formulas. The exterior uniform-depth and vacuum branches have depths
\begin{equation}\label{eq:swe_uniform_phys_repulsive}
    h(x,t) = h_0\,\mathrm{sech}(\omega t)
    \qquad \text{(uniform depth)},
\end{equation}
\begin{equation}\label{eq:swe_vacuum_phys_repulsive}
    h(x,t) = 0
    \qquad \text{(vacuum)}.
\end{equation}
We use tildes for the repulsive run-up and fan profiles to distinguish them from the attractive profiles in Section~\S\ref{sec:swe}. The repulsive run-up depth is
\begin{equation}\label{eq:swe_slow_phys_repulsive}
    \thrunup(x,t;\omega) = \frac{\omega^{2}}{4}\,\mathrm{csch}^{2}\!\left(\frac{3\omega t}{2}\right)
        \!\left(\frac{2c_0}{\omega}\sinh(\omega t)-x\right)^{2}.
\end{equation}
The formulas above are regular for all $t>0$. 

\subsubsection{Lake-at-rest configuration}\label{sec:swe_repulsive_riemann}

This is the repulsive counterpart of the lake-at-rest dam-break considered in \S\ref{sec:swe_stationary_profile}. The corresponding lake-at-rest profile is
\begin{equation}\label{eq:h_stationary_repulsive}
    \tilde{h}_{\rm st}(x) = h_0 + \tfrac{1}{2}\omega^2 x^2.
\end{equation}
Unlike the attractive profile~\eqref{eq:h_stationary_profile}, $\tilde{h}_{\rm st}$ is positive for all $x$. For a centred dam at the hilltop, with vacuum to the right, the depth profile has the form
\begin{equation}\label{eq:swe_repulsive_riemann_sol}
    h(x,t) =
    \begin{cases}
        \tilde{h}_{\rm st}(x),
            & x < \tilde{X}^{\rm lake}_{\rm wet}(t;\omega), \\[1mm]
        \tilde{H}_{\rm fan}(x,t;\omega),
            & \tilde{X}^{\rm lake}_{\rm wet}(t;\omega) < x < \tilde{X}^{\rm lake}_{\rm dry}(t;\omega), \\[1mm]
        0,
            & x > \tilde{X}^{\rm lake}_{\rm dry}(t;\omega),
    \end{cases}
\end{equation}
where the fan depth $\tilde{H}_{\rm fan}$ has the run-up behaviour~\eqref{eq:swe_slow_phys_repulsive} near the vacuum edge. The corresponding dry and wet edges are
\begin{align}
    \tilde{X}^{\rm lake}_{\rm dry}(t;\omega)
    &=
    \frac{2\sqrt{h_0}}{\omega}\sinh(\omega t),
    \label{eq:repulsive_right_edge}
    \\[2mm]
    \tilde{X}^{\rm lake}_{\rm wet}(t;\omega)
    &=
    -\frac{\sqrt{2h_0}}{\omega}
    \sinh\!\left(\frac{\omega t}{\sqrt{2}}\right).
    \label{eq:swe_repulsive_left_edge_st}
\end{align}
In the parabolic-hill case, the run-up branch has no finite-time singularity and the stationary profile remains positive, so~\eqref{eq:swe_repulsive_riemann_sol} is valid for all $t>0$. The evolution is shown in Figure~\ref{fig:repulsive_lake_at_rest}.

\begin{figure}[!h]
    \centering
    \includegraphics[width=0.99\linewidth]{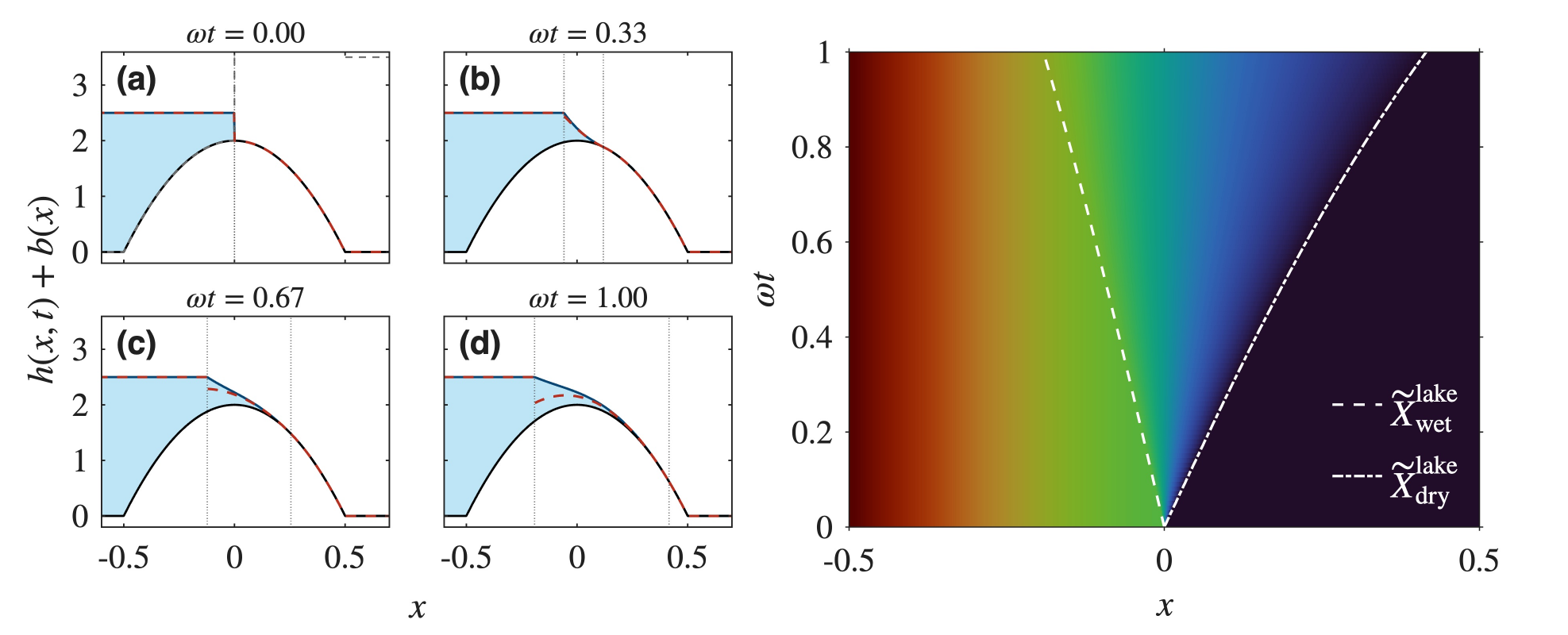}
    \caption{Evolution of the lake-at-rest configuration over a repulsive parabolic hill, $b(x) = -\frac{1}{2}\omega^2 x^2$. The left panels show depth profiles at selected times, while the right panel shows a space-time colour map of the depth in the $(x,\omega t)$-plane. The analytical solution (red dashed) and edge locations (dotted) are overlaid. The dry edge accelerates exponentially outward and the undisturbed lake-at-rest region recedes to the left.}
    \label{fig:repulsive_lake_at_rest}
\end{figure}

\newpage
\subsubsection{Box-type configuration}\label{sec:swe_repulsive_hat}

We consider the box-configuration initial data~\eqref{eq:swe-hat_data} over the repulsive hill, the counterpart of \S\ref{sec:swe_hat}. Two rarefaction fans emanate from the left and right jumps, each described locally by the translated repulsive run-up branch~\eqref{eq:swe_slow_phys_repulsive}. The depth profile is
\begin{equation}\label{eq:h_hat_sol_repulsive}
    h(x,t) =
    \begin{cases}
        0,
            & x < \tilde{X}_{\rm dry}^{L}(t;\omega), \\[1mm]
        \tilde{H}_{\rm fan}^{L}(x,t),
            & \tilde{X}_{\rm dry}^{L}(t;\omega) < x < \tilde{X}_{\rm wet}^{L}(t;\omega), \\[1mm]
        h_0\,\mathrm{sech}(\omega t),
            & \tilde{X}_{\rm wet}^{L}(t;\omega) < x < \tilde{X}_{\rm wet}^{R}(t;\omega), \\[1mm]
        \tilde{H}_{\rm fan}^{R}(x,t),
            & \tilde{X}_{\rm wet}^{R}(t;\omega) < x < \tilde{X}_{\rm dry}^{R}(t;\omega), \\[1mm]
        0,
            & x > \tilde{X}_{\rm dry}^{R}(t;\omega),
    \end{cases}
\end{equation}
The edge laws follow from the continued dry-edge and uniform-depth branches. Define
\begin{equation}\label{eq:F_sech_repulsive}
    \tilde{\cF}(z) = \int_{0}^{z}\mathrm{sech}^{3/2}\theta\,\mathrm{d}\theta.
\end{equation}
The four edges are
\begin{equation}\label{eq:edges_hat_repulsive}
\begin{aligned}
    \tilde{X}_{\rm dry}^{L}(t;\omega)  &= -\frac{2c_0}{\omega}\sinh(\omega t) + x_{0}^{L}\cosh(\omega t),
    & \tilde{X}_{\rm dry}^{R}(t;\omega) &= \phantom{-}\frac{2c_0}{\omega}\sinh(\omega t) + x_{0}^{R}\cosh(\omega t),\\[2mm]
    \tilde{X}_{\rm wet}^{L}(t;\omega)  &=\Bigl[\frac{c_0}{\omega}\,\tilde{\cF}(\omega t) + x_{0}^{L}\Bigr]\cosh(\omega t),
    & \tilde{X}_{\rm wet}^{R}(t;\omega) &= \Bigl[-\frac{c_0}{\omega}\,\tilde{\cF}(\omega t) + x_{0}^{R}\Bigr]\cosh(\omega t).
\end{aligned}
\end{equation}
Between the fans, the constant-depth branch decays as $h_0\,\mathrm{sech}(\omega t)$. Since $\mathrm{sech}^{3/2}$ is integrable on $[0,\infty)$, $\tilde{\cF}$ converges, so collision of the wet edges is not forced as in the attractive case; it depends on the initial plateau width. In the configuration shown in Figure~\ref{fig:repulsive_hat}, the fans spread apart and the plateau decays monotonically.

\begin{figure}[!h]
    \centering
    \includegraphics[width=0.99\linewidth]{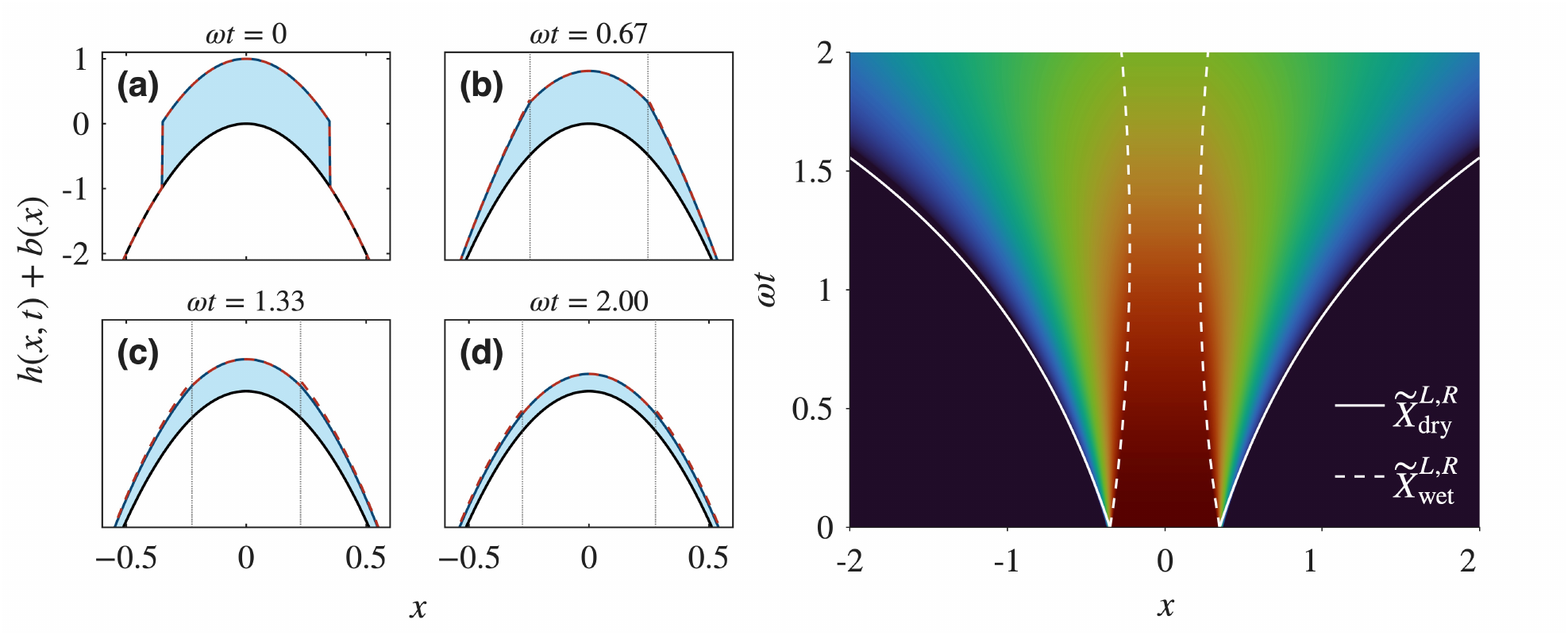}
    \caption{Evolution of the box-configuration dam-break over a repulsive parabolic hill. The left panels show depth profiles at selected times, while the right panel shows a space-time colour map of the depth in the $(x,\omega t)$-plane. The analytical run-up solution (red dashed) and edge locations (dotted) are overlaid. The two fans spread apart and the plateau decays monotonically.}
    \label{fig:repulsive_hat}
\end{figure}

%% file: sections/6_conclusion.tex
\section{Concluding remarks and future directions}\label{sec:conclusion}

We developed a perturbative framework that extends the classical rarefaction wave solutions of Burgers’ equation and the 1D SWE to parabolic forcing. For Burgers’ equation with Riemann initial data, the framework yields an exact global solution. For the SWE, the same approach yields several exact branches of the Riemann solution. The spatially uniform branches remain globally valid. The run-up branch, which generalises the unforced rarefaction fan to parabolic bathymetry, is likewise exact but describes the rarefaction dynamics asymptotically near the dry edge. For one-sided Riemann data, these rarefaction solutions remain valid up to the first singularity time. Together, these results provide exact solutions for nonlinear wave propagation over parabolic bathymetry and establish a framework for extending classical Riemann problems to spatially confined media.

We remark that this work has a close connection to the work of Camassa \emph{et al.}~\cite{camassa2022evolution}, where solutions were sought using a spatially quadratic depth profile together with a linear velocity profile. In terms of the Riemann variables, this choice is equivalent to the affine form~\eqref{eq:affine_form}, which is rigorously obtained here as a consequence of analyticity in \(\omega\). Accordingly, the coefficient ODE system~\eqref{eq:ABCD_ODE_swe} obtained here is equivalent to the one studied there. In particular, the closed \(B_{\pm}\) subsystem~\eqref{eq:ABCD_ODE_swe_Bpm} maps onto the nonlinear \((\alpha,\gamma)\) system studied there under \(B_{\pm}/t=\alpha\pm2\sqrt{\gamma}\), and the sign of \(\gamma\) selects between two distinct types of dynamics. For \(\gamma>0\), \(B_{\pm}\) are real-valued, corresponding to rarefaction-type solutions constructed in the present work, whose curvature grows without bound up to the singularity time~\eqref{eq:singularity_time_swer}. Camassa \emph{et al.} refer to this as the blow-up solution. For \(\gamma<0\), the variables \(B_{+}\) and \(B_{-}\) become complex conjugates, so the system reduces to a single complex-valued ODE. These solutions describe the sloshing of a fluid of finite mass, bounded by two vacuum contact points in the parabolic basin, with no singularity at any time. Camassa \emph{et al.} refer to this as the sloshing droplet solution. 

Beyond shallow-water flows, the exact solutions presented here may extend to several other physical settings in which shallow-water equations arise as the leading-order description of weakly dispersive, smooth flow. In the weakly dispersive regime, the hydrodynamic formulation of the Gross–Pitaevskii/nonlinear Schr\"odinger equation reduces to a shallow-water-type system. In the Bose–Einstein condensate setting, the parabolic bathymetry then plays the role of the harmonic trapping potential used to confine the condensate. The rarefaction-wave solutions derived here have found quantitative experimental validation in trapped Bose–Einstein condensates and predict several key dynamical features of the condensate flow induced by harmonic confinement~\cite{sharan2025breaking}. The same reduction also underlies nonlinear optics, where the governing model is again the nonlinear Schr\"odinger equation and a quadratically varying refractive-index profile acts as an effective harmonic confining potential. Dam-break flows in homogeneous optical media have been investigated experimentally in photon fluids~\cite{xu2017dispersive}. The solutions we provide may serve as an analytical benchmark for extending such studies to inhomogeneous optical media.

We conclude by outlining a mathematical direction suggested by the present work. The exact run-up branch describes the asymptotic neighbourhood of the dry edge, but does not provide a global description of the confined rarefaction fan. Viewed from the perspective of the local wavefront expansion near the dry edge, the unconfined flat-bathymetry rarefaction wave appears as the exceptional case in which the expansion truncates exactly. All higher-order terms vanish, and the local solution extends to the entire rarefaction region, connecting continuously to the exterior state. Confinement fundamentally changes this picture, since the higher-order terms do not vanish and contribute throughout the fan. Appendix~\ref{apdx_sec:wavefront_higher_order} shows that these corrections are non-analytic in \(\omega\) and introduce additional degrees of freedom into the local solution. Although, as discussed in \S\ref{sec:NLS}, an approximate quadratic global rarefaction profile may be constructed by continuous matching of the local solution branches, the exact mathematical structure of the confined fan remains unresolved. 
In particular, whether and how the degrees of freedom introduced by the non-analytic corrections may be used to match to an exterior state, thereby completing the global construction of the confined rarefaction fan, remains an open question.

%% file: sections/appendix.tex
\newpage
\begin{appendix}

\section[Appendix -- Perturbative calculation for Burgers' equation]{Perturbative calculation for Burgers' equation}\label{apdx:burgers}

Here we provide the details of the perturbative calculation for the forced Burgers' equation~\eqref{eq:burgers-parabolic}. Substituting the perturbative series~\eqref{eq:burgers-series-expansion} and collecting $O(\omega^{2n})$ terms yields, in each region, a hierarchy of linear PDEs for the corrections $u_n$ with zero initial data.

\subsection{Interior solution}\label{apdx:burgers-interior}

In the interior region the leading-order solution is $u_{0}(x,t) = x/t$. The correction hierarchy is
\begin{subequations}\label{eq:burgers-apdx-interior-hierarchy}
\begin{align}
    \mathcal{L}^{\rm (int)}u_{1} &= -x, \label{eq:burgers-apdx-interior-hierarchy-n1} \\[2mm]
    \mathcal{L}^{\rm (int)}u_{n} &= -\sum_{j=1}^{n-1}u_{j}\,\frac{\partial u_{n-j}}{\partial x}, \qquad n\geq 2, \label{eq:burgers-apdx-interior-hierarchy-nge2}
\end{align}
\end{subequations}
where the transport operator is
\begin{equation}\label{eq:burgers-apdx-interior-operator}
    \mathcal{L}^{\rm (int)} = \partial_t + \tfrac{x}{t}\,\partial_x + \tfrac{1}{t}.
\end{equation}
For $n=1$, the equation~\eqref{eq:burgers-apdx-interior-hierarchy-n1} can be solved by the method of characteristics, giving
\begin{equation}
    u_1(x,t) = -\tfrac{1}{3}\, x\, t = \bigl(F_1\,\tfrac{x}{t}\bigr)\,t^{2}, \qquad F_1 = -\tfrac{1}{3}.
\end{equation}
The method of characteristics can still be used at every order $n \geq 2$, but the calculations become increasingly tedious. Remarkably, the solution form obtained at $n=1$ generalises to order $n$ by multiplication by $t^{2n}$, giving the ansatz
\begin{equation}\label{eq:burgers-apdx-interior-ansatz}
    u_n(x,t) = \bigl(F_n\,\tfrac{x}{t}\bigr)\,t^{2n}.
\end{equation}
If $u_j = \bigl(F_j\,\tfrac{x}{t}\bigr)\,t^{2j}$ for all $j < n$, then the right-hand side of~\eqref{eq:burgers-apdx-interior-hierarchy-nge2} is
\begin{equation}
    -\sum_{j=1}^{n-1} u_j\, \frac{\partial u_{n-j}}{\partial x} = -\tfrac{x}{t}\biggl(\sum_{j=1}^{n-1} F_j\, F_{n-j}\biggr) t^{2n-1},
\end{equation}
while the left-hand side of~\eqref{eq:burgers-apdx-interior-hierarchy-nge2} is
\begin{equation}
    \mathcal{L}^{\rm (int)}\bigl[\bigl(F_n\,\tfrac{x}{t}\bigr)\,t^{2n}\bigr] = \tfrac{x}{t}\,(2n+1)\, F_n\, t^{2n-1}.
\end{equation}
Equating the two, the common factor $\tfrac{x}{t}t^{2n-1}$ cancels, allowing matching the remaining coefficients. This yields the recursion
\begin{equation}\label{eq:apdx_recursion_burgers_Fn_interior}
    F_n = -\frac{1}{2n+1} \sum_{j=1}^{n-1}F_j\,F_{n-j}, \qquad n\geq 2,
\end{equation}
which determines all coefficients.

Substituting the ansatz~\eqref{eq:burgers-apdx-interior-ansatz} back into the perturbative series~\eqref{eq:burgers-series-expansion}, the factors $t^{2n}$ and $\omega^{2n}$ combine into $(\omega t)^{2n}$, giving
\begin{equation}\label{eq:burgers-interior-resummed}
    u(x,t;\omega) = \frac{x}{t}\sum_{n=0}^{\infty} F_n\,(\omega t)^{2n} \doteq \frac{x}{t}\,F(\omega t).
\end{equation}
Setting $z = (\omega t)^{2}$, the function
\begin{equation}\label{eq:burgers-interior-generating-function}
    F(z) = \sum_{n=0}^{\infty} F_n\, z^{n}
\end{equation}
serves as the generating function for the coefficients $F_n$. 

Rearranging~\eqref{eq:apdx_recursion_burgers_Fn_interior}, multiplying by $z^{n}$, and summing from $n = 2$ to $\infty$ gives
\begin{equation}\label{eq:sum_recursion_burger}
    \sum_{n=2}^{\infty}(2n+1)\,F_n\,z^{n} + \sum_{n=2}^{\infty}\biggl(\sum_{j=1}^{n-1}F_j\,F_{n-j}\biggr)z^{n} = 0.
\end{equation}
The first sum is
\begin{equation}\label{eq:sum_ref_burgers_1}
    \sum_{n=2}^{\infty}(2n+1)\,F_n\,z^{n} = 2z\,F'(z) + F(z) - F_0 - 3F_1\,z,
\end{equation}
and the second, may be expressed as the Cauchy product of $F(z)-F_0$ with itself, giving
\begin{equation}\label{eq:sum_ref_burgers_2}
    \sum_{n=2}^{\infty}\biggl(\sum_{j=1}^{n-1}F_j\,F_{n-j}\biggr)z^{n} = \bigl(F(z)-F_0\bigr)^{2} = F^{2} - 2F_0\,F + F_0^{2}.
\end{equation}
Substituting~\eqref{eq:sum_ref_burgers_1} and~\eqref{eq:sum_ref_burgers_2} into~\eqref{eq:sum_recursion_burger} and using $F_0 = 1$, $F_1 = -\tfrac{1}{3}$ gives
\begin{equation}\label{eq:burgers-interior-ode-z}
    2z\,\frac{dF}{dz} - F + F^{2} = -z.
\end{equation}
Substituting $z = (\omega t)^{2}$, so that $t\tfrac{d}{dt} = 2z\,\frac{d}{dz}$, recovers the ODE~\eqref{eq:burgers-interior-ode-F} of Proposition~\ref{prop:burgers-interior}.

\subsection{Exterior solution}\label{apdx:burgers-exterior}

In each exterior region the leading-order solution is the constant state $u_{0}(x,t)=c$, where $c=u_{\pm}$. The correction hierarchy is
\begin{subequations}\label{eq:burgers-apdx-exterior-hierarchy}
\begin{align}
    \mathcal{L}^{\rm (ext)}\, u_{1} &= -x, \label{eq:burgers-apdx-exterior-hierarchy-n1} \\[2mm]
    \mathcal{L}^{\rm (ext)}\, u_{n} &= -\sum_{j=1}^{n-1}u_{j}\,\frac{\partial u_{n-j}}{\partial x},
    \qquad n\geq 2, \label{eq:burgers-apdx-exterior-hierarchy-nge2}
\end{align}
\end{subequations}
where the transport operator is
\begin{equation}\label{eq:burgers-apdx-exterior-operator}
    \mathcal{L}^{\rm (ext)} = \partial_t + c\,\partial_x.
\end{equation}
For $n=1$, the equation~\eqref{eq:burgers-apdx-exterior-hierarchy-n1} can be solved by the method of characteristics with $u_1(x,0) = 0$, giving
\begin{equation}
    u_1(x,t) = \bigl(-\tfrac{x}{t} + \tfrac{1}{2}\,c\bigr)\,t^{2},
\end{equation}
which has the form $u_1 = (F_1\, \tfrac{x}{t} + c\, G_1)\, t^{2}$ with $F_1 = -1$ and $G_1 = \tfrac{1}{2}$. As in the interior region, the $n=1$ form generalises to order $n$ by multiplication by $t^{2n}$, giving the ansatz
\begin{equation}\label{eq:burgers-apdx-exterior-ansatz}
    u_n(x,t) = (F_n\, \tfrac{x}{t} + c\, G_n)\, t^{2n}.
\end{equation}
If $u_j = (F_j\, \tfrac{x}{t} + c\, G_j)\, t^{2j}$ for all $j < n$, then the right-hand side of~\eqref{eq:burgers-apdx-exterior-hierarchy-nge2} is
\begin{equation}
    -\sum_{j=1}^{n-1} u_j\, \frac{\partial u_{n-j}}{\partial x} = -\biggl[\tfrac{x}{t}\sum_{j=1}^{n-1} F_j\, F_{n-j} + c\sum_{j=1}^{n-1} F_{n-j}\, G_j\biggr]\, t^{2n-1},
\end{equation}
while the left-hand side of~\eqref{eq:burgers-apdx-exterior-hierarchy-nge2} is
\begin{equation}
    \mathcal{L}^{\rm (ext)}\bigl[(F_n\, \tfrac{x}{t} + c\, G_n)\, t^{2n}\bigr] = \biggl[(2n-1)\, F_n\, \tfrac{x}{t} + (2n\, G_n + F_n)\, c\biggr]\, t^{2n-1}.
\end{equation}
Equating the two, the common factors $\tfrac{x}{t}t^{2n-1}$ and $ct^{2n-1}$ cancel to yield the recursion
\begin{equation}
    F_n = -\frac{1}{2n-1} \sum_{j=1}^{n-1}F_j\,F_{n-j}, \qquad 2n\,G_n = - F_n-\sum_{j=1}^{n}F_{j}\,G_{n-j}, \qquad n\geq 2.
\end{equation}

The recursion for \(G_n\) may be simplified further as follows
\begin{equation}
    2n\, G_n = -F_n - \sum_{j=1}^{n-1}F_{n-j}\,G_j = -F_n\,G_0 - \sum_{j=1}^{n-1}F_{n-j}\,G_j = -\sum_{j=0}^{n-1}F_{n-j}\,G_j = -\sum_{k=1}^{n}F_{k}\,G_{n-k},
\end{equation}
where we used $G_0 = 1$ and reindexed $k = n - j$. The recursions are therefore
\begin{equation}\label{eq:burgers-apdx-exterior-recursion}
    F_n = -\frac{1}{2n-1} \sum_{j=1}^{n-1}F_j\,F_{n-j}, \qquad G_n = -\frac{1}{2n}\sum_{j=1}^{n}F_{j}\,G_{n-j}, \qquad n\geq 2.
\end{equation}

Substituting the ansatz~\eqref{eq:burgers-apdx-exterior-ansatz} back into the series~\eqref{eq:burgers-series-expansion}, the factors $t^{2n}$ and $\omega^{2n}$ combine to give
\begin{align}\label{eq:burgers-exterior-resummed}
    u(x,t;\omega) = \sum_{n=0}^{\infty}\bigl(F_n\,\tfrac{x}{t} + c\,G_n\bigr)(\omega t)^{2n} \doteq \frac{x}{t}\,F(\omega t) + c\,G(\omega t).
\end{align}
Setting $z = (\omega t)^{2}$, the functions
\begin{equation}\label{eq:burgers-exterior-generating-functions}
    F(z) = \sum_{n=0}^{\infty} F_n\, z^{n}, \qquad G(z) = \sum_{n=0}^{\infty} G_n\, z^{n}
\end{equation}
serve as generating functions for the coefficients $F_n$ and $G_n$. Following the same procedure as in~\S\ref{apdx:burgers-interior} of rearranging the recursion~\eqref{eq:burgers-apdx-exterior-recursion} and summing over \(n=2\) to \(\infty\), and expressing in terms of the generating functions, gives
\begin{subequations}\label{eq:burgers-exterior-generating-sums}
\begin{align}
    \sum_{n=2}^{\infty}(2n-1)\,F_n\,z^{n}
    + \sum_{n=2}^{\infty} \sum_{j=1}^{n-1} F_j F_{n-j}z^{n}
    &=
    \big(2z\,F'(z) - F - F_1\,z + F_0\big)
    + \big(F - F_0\big)^2,
    \label{eq:burgers-exterior-generating-sums-F}
    \\[2mm]
    \sum_{n=2}^{\infty}2n \, G_n z^n
    + \sum_{n=2}^{\infty} \sum_{j=1}^{n} F_j G_{n-j} z^n
    &=
    \big(2z G'(z) - 2G_1 z\big)
    + \big( F G - F_0 G - F_1 G_0 z \big).
    \label{eq:burgers-exterior-generating-sums-G}
\end{align}
\end{subequations}

Using $F_0 = 0$, $G_0 = 1$ and $F_1 = -1$, $G_1 = 1/2$ gives
\begin{subequations}\label{eq:burgers-exterior-ode-z}
\begin{align}
    2z\,\frac{dF}{dz} - F + F^{2} &= -z,
    \label{eq:burgers-exterior-ode-z-F}\\[2mm]
    2z\,\frac{dG}{dz} + F\,G &= 0,
    \label{eq:burgers-exterior-ode-z-G}
\end{align}
\end{subequations}
with initial data \(F(0) = 0\) and \(G(0) =1\).
Note that~\eqref{eq:burgers-exterior-ode-z-F} is the same as~\eqref{eq:burgers-interior-ode-z}.

Finally substituting $z = (\omega t)^{2}$ recovers~\eqref{eq:burgers-exterior-G} of Proposition~\ref{prop:burgers-exterior}.

    \section[Appendix -- Perturbative calculation for Shallow water equations]{Perturbative calculation for Shallow Water Equations}\label{apdx:swe}

We derive the rarefaction wave solution of the forced SWE, paralleling the Burgers calculation in Appendix~\ref{apdx:burgers}. We solve the perturbative hierarchy on the slow simple-wave branch (\S\ref{apdx_sec:swe_rw_solution}), resum the recursion into the closed-form ODE system~\eqref{eq:ABCD_ODE_swe} (\S\ref{apdx_sec:swe_ode_derivation}), show that the same ODE governs all four branches of the Riemann solution (\S\ref{apdx_sec:Recursion_Relation_ABCD}), and establish the invariant manifolds that yield the closed-form solutions (\S\ref{apdx_sec:BD_system}).

    \subsection{Perturbative hierarchy on the slow simple wave branch}\label{apdx_sec:swe_rw_solution}

We expand the Riemann variables in even powers of $\omega$,
\begin{equation}\label{eq:swe_apdx_expansion}
    R_{\pm}(x,t;\omega) = \sum_{n=0}^{\infty} r^{\pm}_{n}(x,t)\,\omega^{2n},
\end{equation}
with $r^{\pm}_{0}$ taken from the Stoker solution and $r^{\pm}_{n}(x,0) = 0$ for $n \geq 1$. On the slow simple-wave branch the leading-order terms are
\begin{equation}
    r^{+}_{0} = 2c_0, \qquad r^{-}_{0} = -\tfrac{2}{3}c_0 + \tfrac{4}{3}\,\tfrac{x}{t}.
\end{equation}
Substituting into the diagonal form of the forced SWE~\eqref{eq:diagonal_SWE} and collecting $O(\omega^{2n})$ terms, we obtain the coupled hierarchy. For $n=1$,
\begin{subequations}\label{eq:swe_apdx_hierarchy}
\begin{align}
    \mathcal{L}^{(+)} r^{+}_{1} &= -x, \label{eq:swe_apdx_hierarchy_p_n1} \\[2mm]
    \mathcal{L}^{(-)} r^{-}_{1} + \tfrac{1}{3t}\,r^{+}_{1} &= -x, \label{eq:swe_apdx_hierarchy_m_n1}
\end{align}
\end{subequations}
and for $n \geq 2$,
\begin{subequations}\label{eq:swe_apdx_hierarchy_nge2}
\begin{align}
    \mathcal{L}^{(+)} r^{+}_{n} &= -\sum_{j=1}^{n-1}\lambda^{+}_{j}\,\partial_x r^{+}_{n-j}, \label{eq:swe_apdx_hierarchy_p_nge2} \\[2mm]
    \mathcal{L}^{(-)} r^{-}_{n} + \tfrac{1}{3t}\,r^{+}_{n} &= -\sum_{j=1}^{n-1}\lambda^{-}_{j}\,\partial_x r^{-}_{n-j}, \label{eq:swe_apdx_hierarchy_m_nge2}
\end{align}
\end{subequations}
where the leading-order transport operators and characteristic speeds are
\begin{equation}\label{eq:swe_apdx_transport_ops}
    \mathcal{L}^{(+)} = \partial_t + \lambda^{+}_{0}\,\partial_x, \qquad
    \mathcal{L}^{(-)} = \partial_t + \lambda^{-}_{0}\,\partial_x + \tfrac{1}{t}, \qquad
    \lambda^{+}_{0} = \tfrac{1}{3}\!\left(4c_0 + \tfrac{x}{t}\right), \qquad
    \lambda^{-}_{0} = \tfrac{x}{t},
\end{equation}
and
\begin{equation}\label{eq:swe_apdx_lambda_n}
    \lambda^{\pm}_{n} = \tfrac{1}{4}(3r^{\pm}_{n}+r^{\mp}_{n}), \qquad n \geq 1.
\end{equation}

For $n=1$, the $r^{+}_{1}$ equation in~\eqref{eq:swe_apdx_hierarchy} is independent and can be solved first. Once $r^{+}_{1}$ is known, the $r^{-}_{1}$ equation decouples and can again be solved independently. 
One may use the method of characteristics and obtain that the solution is
\begin{equation}
    r^{+}_{1} = \bigl(\tfrac{1}{2}c_0 - \tfrac{3}{4}\,\tfrac{x}{t}\bigr)\,t^{2}, \qquad r^{-}_{1} = \bigl(-\tfrac{1}{18}c_0 - \tfrac{1}{4}\,\tfrac{x}{t}\bigr)\,t^{2},
\end{equation}
which have the affine form $r^{\pm}_{1} = (2c_0\, a^{\pm}_{1} + b^{\pm}_{1}\,\tfrac{x}{t})\,t^{2}$ where
\begin{equation}
    a^{+}_{1} = \tfrac{1}{4},\quad b^{+}_{1} = -\tfrac{3}{4},\quad a^{-}_{1} = -\tfrac{1}{36},\quad b^{-}_{1} = -\tfrac{1}{4}.
\end{equation}

As in the Burgers case (\S\ref{apdx:burgers-interior}), the $n=1$ solution form generalises to order $n$ by multiplication by $t^{2n}$, giving the affine ansatz
\begin{equation}\label{eq:swe_apdx_affine_ansatz}
    r^{\pm}_{n}(x,t) = \bigl(2c_0\, a^{\pm}_{n} + b^{\pm}_{n}\,\tfrac{x}{t}\bigr)\,t^{2n},
\end{equation}
which can be verified that solves the hierarchy at every order by induction. 
Substituting~\eqref{eq:swe_apdx_affine_ansatz} into~\eqref{eq:swe_apdx_hierarchy_nge2} and equating the coefficients of $c_0$ and $\tfrac{x}{t}$ yields the coupled recursions,
    \begin{subequations}\label{eq:ABCD_recursion_slow_branch}
    \begin{align}
        b^{+}_{n} &= -\frac{1}{2n-\tfrac{2}{3}}\left[\frac{1}{4}\sum_{j=1}^{n-1} \bigl(3b^{+}_{j} + b^{-}_{j}\bigr) b^{+}_{n-j}\right],\label{eq:ABCD_recursion_slow_branch_Bp}\\[2mm]
        a^{+}_{n} &= -\frac{1}{2n}\left[\frac{2}{3}b^{+}_{n} + \frac{1}{4}\sum_{j=1}^{n-1} \bigl(3a^{+}_{j} + a^{-}_{j}\bigr) b^{+}_{n-j}\right],\label{eq:ABCD_recursion_slow_branch_Ap}\\[2mm]
        b^{-}_{n} &= -\frac{1}{2n+1}\left[\frac{1}{3}b^{+}_{n} + \frac{1}{4}\sum_{j=1}^{n-1} \bigl(b^{+}_{j} + 3b^{-}_{j}\bigr) b^{-}_{n-j}\right],\label{eq:ABCD_recursion_slow_branch_Bm}\\[2mm]
        a^{-}_{n} &= -\frac{1}{2n+1}\left[\frac{1}{3}a^{+}_{n} + \frac{1}{4}\sum_{j=1}^{n-1} \bigl(a^{+}_{j} + 3a^{-}_{j}\bigr) b^{-}_{n-j}\right],\label{eq:ABCD_recursion_slow_branch_Am}
    \end{align}
    \end{subequations}
for $n \geq 2$. These determine all coefficients in the perturbative expansion.

Substituting~\eqref{eq:swe_apdx_affine_ansatz} back into the expansion~\eqref{eq:swe_apdx_expansion}, the factors $t^{2n}$ and $\omega^{2n}$ combine into $(\omega t)^{2n}$, giving the solution form
\begin{align}\label{eq:swe_apdx_resummed_form}
    R_{\pm}(x,t;\omega) = \sum_{n=0}^{\infty}\bigl(2c_0\,a^{\pm}_{n} + b^{\pm}_{n}\,\tfrac{x}{t}\bigr)(\omega t)^{2n}
                        \doteq 2c_0\,A_{\pm}(\omega t) + \tfrac{x}{t}\,B_{\pm}(\omega t).
\end{align}
This is the affine solution form presented in~\eqref{eq:affine_form}.

    \subsection{ODE system for the branches of the Riemann solution}\label{apdx_sec:swe_ode_derivation}

To derive the ODE system governing $A_{\pm}$ and $B_{\pm}$, we use the recursion relations~\eqref{eq:ABCD_recursion_slow_branch}. The solution form~\eqref{eq:swe_apdx_resummed_form} is a series in $(\omega t)^{2n}$. As in the Burgers case (\S\ref{apdx:burgers-interior}), we set $z = (\omega t)^{2}$ so that
\begin{equation}\label{eq:swe_generating_functions}
    A_{\pm}(z) = \sum_{n=0}^{\infty} a^{\pm}_{n}\, z^{n}, \qquad B_{\pm}(z) = \sum_{n=0}^{\infty} b^{\pm}_{n}\, z^{n}.
\end{equation}
As in the Burgers case, $A_{\pm}$ and $B_{\pm}$ serve as the generating functions for the coefficients $a^{\pm}_{n}$ and $b^{\pm}_{n}$. Rearranging each recursion in~\eqref{eq:ABCD_recursion_slow_branch}, multiplying by $z^{n}$, and summing from $n = 2$ to $\infty$ gives
\begin{subequations}\label{eq:swe_recursion_summed}
\begin{align}
    \sum_{n=2}^{\infty}2n\,b^{+}_{n}\,z^{n}
    - \frac{2}{3}\sum_{n=2}^{\infty}b^{+}_{n}\,z^{n}
    + \frac{1}{4}\sum_{n=2}^{\infty}\biggl(\sum_{j=1}^{n-1}\bigl(3b^{+}_{j}+b^{-}_{j}\bigr)\,b^{+}_{n-j}\biggr)z^{n} &= 0, \label{eq:swe_recursion_summed_Bp}\\[3mm]
    \sum_{n=2}^{\infty}2n\,a^{+}_{n}\,z^{n}
    + \frac{2}{3}\sum_{n=2}^{\infty}b^{+}_{n}\,z^{n}
    + \frac{1}{4}\sum_{n=2}^{\infty}\biggl(\sum_{j=1}^{n-1}\bigl(3a^{+}_{j}+a^{-}_{j}\bigr)\,b^{+}_{n-j}\biggr)z^{n} &= 0, \label{eq:swe_recursion_summed_Ap}\\[3mm]
    \sum_{n=2}^{\infty}2n\,b^{-}_{n}\,z^{n}
    + \sum_{n=2}^{\infty}b^{-}_{n}\,z^{n}
    + \frac{1}{3}\sum_{n=2}^{\infty}b^{+}_{n}\,z^{n}
    + \frac{1}{4}\sum_{n=2}^{\infty}\biggl(\sum_{j=1}^{n-1}\bigl(b^{+}_{j}+3b^{-}_{j}\bigr)\,b^{-}_{n-j}\biggr)z^{n} &= 0, \label{eq:swe_recursion_summed_Bm}\\[3mm]
    \sum_{n=2}^{\infty}2n\,a^{-}_{n}\,z^{n}
    + \sum_{n=2}^{\infty}a^{-}_{n}\,z^{n}
    + \frac{1}{3}\sum_{n=2}^{\infty}a^{+}_{n}\,z^{n}
    + \frac{1}{4}\sum_{n=2}^{\infty}\biggl(\sum_{j=1}^{n-1}\bigl(a^{+}_{j}+3a^{-}_{j}\bigr)\,b^{-}_{n-j}\biggr)z^{n} &= 0. \label{eq:swe_recursion_summed_Am}
\end{align}
\end{subequations}
Each summation in~\eqref{eq:swe_recursion_summed} can be identified with a generating function or its derivative, following the same procedure as in the Burgers case~\eqref{eq:sum_ref_burgers_1}--\eqref{eq:sum_ref_burgers_2} and \eqref{eq:burgers-exterior-generating-sums}.
Simplifying gives
\begin{subequations}\label{eq:swe_ode_z}
\begin{align}
    2z\,\frac{dA_{\pm}}{dz} + \frac{1}{4}(3A_{\pm}+A_{\mp})\,B_{\pm} &= 0, \label{eq:swe_ode_A_z}\\[2mm]
    2z\,\frac{dB_{\pm}}{dz} - B_{\pm} + \frac{1}{4}(3B_{\pm}+B_{\mp})\,B_{\pm} &= -z. \label{eq:swe_ode_B_z}
\end{align}
\end{subequations}
Substituting $z = (\omega t)^{2}$, so that $2z\,\tfrac{d}{dz} = t\,\tfrac{d}{dt}$, recovers the ODE system~\eqref{eq:ABCD_ODE_swe}.

\subsection{Branches of the Riemann solution}\label{apdx_sec:Recursion_Relation_ABCD}

The ODE system~\eqref{eq:ABCD_ODE_swe} is shared by all branches of the Riemann solution. Repeating the perturbative calculation on any branch would yield the same system, with each branch distinguished by its initial condition at $t=0$. Instead, we show directly from the singular structure of the ODE system that the admissible initial data are precisely those corresponding to the branches of the Riemann solution.

Writing~\eqref{eq:ABCD_ODE_swe} as $t\,\dot{y} = f(t,y)$, with
$y = (A_+, A_-, B_+, B_-)$, the factor $f/t$ is singular at $t = 0$.
For a bounded solution $y(t) \to y_0$ as $t \to 0^+$ to exist, we need $f(0, y_0) = 0$.
If this fails, then $\dot{y} \sim f(0, y_0)/t$ near $t = 0$, which integrates to
$y \sim f(0, y_0)\log t$ and diverges. A bounded solution therefore requires
\begin{equation}\label{eq:admissibility_condition}
    f(0,y_0) = 0.
\end{equation}

The admissible initial data are the solutions of~\eqref{eq:admissibility_condition}.
Setting $z = 0$ in~\eqref{eq:swe_ode_z} gives the algebraic system
\begin{equation}\label{eq:fixed_point_system}
\begin{aligned}
    \tfrac{1}{4}(3A_+ + A_-)B_+ &= 0, \\
    \tfrac{1}{4}(3A_- + A_+)B_- &= 0, \\
    -B_+ + \tfrac{1}{4}(3B_+ + B_-)B_+ &= 0, \\
    -B_- + \tfrac{1}{4}(3B_- + B_+)B_- &= 0,
\end{aligned}
\end{equation}
whose solutions are
\begin{equation}\label{eq:fixed_points}
    (A_+, A_-, B_+, B_-) =
    \begin{cases}
        (A_+^0, A_-^0, 0, 0),                                  & \text{(i)\; spatially uniform,}    \\[2pt]
        (A_+^0, -\tfrac{1}{3}A_+^0, 0, \tfrac{4}{3}),         & \text{(ii)\; slow simple wave,}    \\[2pt]
        (-\tfrac{1}{3}A_+^0, A_+^0, \tfrac{4}{3}, 0),         & \text{(iii)\; fast simple wave,}   \\[2pt]
        (0, 0, 1, 1),                                           & \text{(iv)\; rarefaction vacuum.}
    \end{cases}
\end{equation}
Each fixed point family corresponds to a similarity solution of the unforced system, and these are the \emph{admissible} initial data for which a bounded solution of~\eqref{eq:ABCD_ODE_swe} exists at $t = 0$.
The free parameters ($A_{\pm}^0$)  in each family are pinned down by comparison with the Stoker solution~\eqref{eq:stoker_solution_RI}, yielding the initial data in~\eqref{eq:ABCD_branches}.

\begin{enumerate}\setlength{\itemsep}{6pt}
    \item[\textbf{(i)}] \textbf{Uniform depth.}
    The outer constant-depth regions of the Stoker solution~\eqref{eq:stoker_solution_RI}
    have $R_+ = 2c_0$ and $R_- = -2c_0$, giving $A_+^0 = 1$ and $A_-^0 = -1$.

    \item[\textbf{(ii)}] \textbf{Slow simple wave.}
    The fan region of the Stoker solution~\eqref{eq:stoker_solution_RI} has $R_+ = 2c_0$ constant and $R_-$ varying with $\xi = x/t$, giving $A_+^0 = 1$. Case~(iii) is its mirror and is not relevant to the slow simple-wave Stoker solution presented in this work.

    \item[\textbf{(iv)}] \textbf{Vacuum states.}
    There are two ways a vacuum can arise~\cite{liu1980vacuum}.
    \begin{itemize}\setlength{\itemsep}{4pt}
        \item \emph{Rarefaction vacuum.}
        This branch has $A_\pm = 0$ and $B_+ = B_- = 1$, and is the vacuum state used in the Stoker solution~\eqref{eq:stoker_solution_RI}. This is the only branch with no free parameter.
        Characteristics for this branch emanate from the initial point of discontinuity and diverge into the vacuum, forming a fan, so no velocity data is needed at $t = 0$ (see Figure~\ref{fig:stoker-combined}).

        \item \emph{Compressive vacuum.}
        This branch arises from the uniform family~(i) with $A_+^0 = A_-^0 = 1$,
        giving spatially constant velocity $u = 2c_0$.
        It requires prescribing an initial velocity data throughout the vacuum region.
    \end{itemize}
\end{enumerate}

The equivalent singular systems~\eqref{eq:ABCD_ODE_swe} and~\eqref{eq:swe_ode_z} belong to the class studied by de Jong and van Meurs~\cite{dejong2022}. For such systems, local analytic solutions are unique provided the eigenvalues of the linearisation about the admissible initial state satisfy a non-resonance condition, namely that none is a positive even integer. For the admissible branches~\eqref{eq:fixed_points}, 
the corresponding eigenvalues are
\[
\textrm{(i)}\ \ \{0,0,1,1\},\qquad
\textrm{(ii)}\ \ \{0,\tfrac23,-1,-1\},\qquad
\textrm{(iii)}\ \ \{0,\tfrac23,-1,-1\},\qquad
\textrm{(iv)}\ \ \left\{-1,-1,-\tfrac12,-\tfrac12\right\},
\]
respectively.
Thus the non-resonance condition is satisfied in every case. Consequently, each admissible initial state determines a unique local analytic solution.

\subsection[Invariant manifolds of the \texorpdfstring{$B_{\pm}$}{B±} subsystem]{Invariant manifolds of the $B_{\pm}$ subsystem}\label{apdx_sec:BD_system}

The closed $(B_+,B_-)$ subsystem possesses two invariant manifolds on which it reduces to scalar Riccati equations. We show that the branches of the Riemann solution lie on these manifolds for all $t\ge0$.
We show that the branches of the Riemann solution lie on these manifolds for
all $t \geq 0$.

Define
\begin{equation}
    P = B_{+}+B_{-}, \qquad M = B_{+}-B_{-}, \qquad
    Q = B_{+}B_{-} + (\omega t)^{2}.
\end{equation}
Subtracting the $B_-$ equation from the $B_+$ equation
in~\eqref{eq:ABCD_ODE_swe} gives the evolution equation for $M$, and
multiplying the equation for $B_{\pm}$ by $B_{\mp}$, then
adding, and applying the product rule gives the evolution equation for $Q$,
\begin{align}
    t\,\frac{dM}{dt} &= M\!\left(1-\tfrac{3}{4}P\right), \label{eq:PM_system_M}\\[1mm]
    t\,\frac{dQ}{dt} &= Q\left(2-P\right). \label{eq:Q_system}
\end{align}
Both equations are linear and homogeneous in their respective variables, so
$M\equiv0$ and $Q\equiv0$ are solutions. The constant-depth and vacuum
branches, and the simple-wave branches, satisfy $M(0)=0$ and $Q(0)=0$,
respectively. By the uniqueness of the local analytic solution established
above, these are the only analytic solutions through the corresponding initial
data.
Hence $M(t)\equiv0$ on the constant-depth and vacuum branches, whereas $Q(t)\equiv0$ on the simple-wave branches for all $t\ge0$.

\subsection{Proof of general translation law (Proposition~\ref{prop:translation})}\label{apdx_sec:proof_of_translation_proposition}

\begin{proof}

Let \((h,u)\) be a solution of the forced SWE~\eqref{eq:swe-parabolic}. Fix \(x_0\in\mathbb{R}\), and define
\begin{equation}\label{eq:translation_change_of_var}
    y=x-x_0\cos(\omega t), 
\end{equation}
together with
\begin{equation}\label{eq:translation_transformed_fields}
    \begin{aligned}
     H(x,t) &= h(y,t),\\
     U(x,t) &= u(y,t) - \omega x_0\sin(\omega t).
    \end{aligned}
\end{equation}

We show that \((H,U)\) also solves the forced SWE with the same bathymetry.

First, since \(y_x = 1\) and \(y_t = \omega x_0 \sin(\omega t)\), we have
\begin{align}
    H_t&=h_t+\omega x_0\sin(\omega t)h_y,\\
    (HU)_x &= (hu)_y - \omega x_0\sin(\omega t)h_y.
\end{align}
Adding the two yields
\begin{align}
    H_t+(HU)_x = h_t + (hu)_y = 0.
\end{align}
Hence the mass equation is satisfied.

\medskip

For the momentum equation,
\begin{align}
    U_t &= u_t  +  \omega x_0\sin(\omega t)u_y -\omega^2 x_0 \cos(\omega t),\\
    U_x &= u_y,\\
    H_x &= h_y.
\end{align}
Thus
\begin{align}\label{eq:translation_momentum}
    U_t + UU_x + H_x = \left(u_t + uu_y + h_y\right) -\omega^2 x_0 \cos(\omega t),
\end{align}
where, since \((h,u)\) solves the momentum equation at the point \(y\), we have \(u_t + u u_y + h_y = -\omega^2 y\).

Then expressing $y$ in terms of $x$ using~\eqref{eq:translation_change_of_var}, Eq.~\eqref{eq:translation_momentum} becomes
\begin{equation}
    U_t+U U_x+H_x=-\omega^2x.
\end{equation}
Hence the momentum equation is also satisfied.

Finally, at \(t=0\), from~\eqref{eq:translation_change_of_var}
\[
    y=x-x_0,
\]
and therefore from~\eqref{eq:translation_transformed_fields}
\begin{align}
    H(x,0)&=h(x-x_0,0),\\
    U(x,0)&=u(x-x_0,0).
\end{align}
Thus \((H,U)\) is the solution corresponding to the translated initial data. This proves Proposition~\ref{prop:translation}.
    
\end{proof}

    \section[Appendix -- Wavefront expansion and asymptotics of the run-up solution]{Wavefront expansion and asymptotics of the run-up solution}\label{apdx_sec:wavefront_higher_order}

    A wavefront expansion of the Riemann variables about the dry edge shows that the run-up solution~\eqref{eq:swe_slow_phys} is asymptotic to the vacuum edge $x = X_{\rm dry}(t;\omega)$, where it arises as the first-order truncation of the expansion. The expansion follows Camassa \emph{et al.}~\cite{camassa2022evolution}.

     \subsection{Wavefront hierarchy}\label{apdx:hierarchy}
    Introducing the wavefront variable
    \begin{equation}\label{eq:apdx_wavefront_variable}
        \chi \doteq X_{\rm dry}(t;\omega) - x
    \end{equation}
    Thus \(\chi\) measures distance behind the dry edge, with the dry edge located at \(\chi=0\). 
    
    With $\partial_x = -\partial_\chi$ and $\partial_t \mapsto \partial_t + \dot{X}_{\rm dry}\,\partial_\chi$, the diagonal form~\eqref{eq:diagonal_SWE} becomes
    \begin{equation}\label{eq:apdx_wavefront_diagonal}
        \frac{\partial R_{\pm}}{\partial t} + \bigl(\dot{X}_{\rm dry} - \lambda_{\pm}\bigr)\frac{\partial R_{\pm}}{\partial \chi} = -\omega^{2}\bigl(X_{\rm dry} - \chi\bigr).
    \end{equation}
    Expanding the Riemann variables in powers of the wavefront variable,
    \begin{equation}\label{eq:apdx_wavefront_expansion}
        R_{\pm}(\chi,t) = \sum_{n=0}^{\infty} \rho_{n}^{\pm}(t)\,\chi^{n},
    \end{equation}
    and collecting equal powers of $\chi$ gives a hierarchy of ODEs for the coefficients $\rho_n^{\pm}(t)$,
    \begin{subequations}\label{eq:apdx_hierarchy}
    \begin{align}
        \frac{d \rho_0^{\pm}}{dt} &= -\omega^{2} X_{\rm dry}, \label{eq:apdx_hierarchy_n0}\\[1mm]
        \frac{d \rho_1^{\pm}}{dt} &= \lambda_1^{\pm}\rho_1^{\pm} + \omega^{2}, \label{eq:apdx_hierarchy_n1}\\[1mm]
        \frac{d \rho_n^{\pm}}{dt} &= \sum_{k=1}^{n}(n-k+1)\,\lambda_k^{\pm}\,\rho_{n-k+1}^{\pm}, \qquad n\ge 2, \label{eq:apdx_hierarchy_nge2}
    \end{align}
    \end{subequations}
    where $\lambda_k^{\pm}(t) \doteq \tfrac{1}{4}(3\rho_k^{\pm}+\rho_k^{\mp})$.

    \subsection{Leading order and the run-up}\label{apdx:leading_runup}
    At the dry edge the depth vanishes, so $\rho_0^{+} = \rho_0^{-}$, and the edge advances at the local characteristic speed, $\dot{X}_{\rm dry} = \rho_0^{\pm}$. The leading-order equation~\eqref{eq:apdx_hierarchy_n0} together with $\dot{X}_{\rm dry} = \rho_0^{\pm}$ gives the edge equation
    \begin{equation}\label{eq:apdx_edge_ode}
        \frac{d^2{X}_{\rm dry}}{dt^2} + \omega^{2} X_{\rm dry} = 0,
    \end{equation}
    whose solution with $X_{\rm dry}(0)=0$ and edge speed $\dot{X}_{\rm dry}(0)=2c_0$ is the dry edge~\eqref{eq:swe_edge_right}, with 
    \begin{equation}
        \rho_0^{\pm}(t) = 2c_0\cos(\omega t).
    \end{equation}
    
    We now recover the first-order coefficients directly from the known run-up
    branch~\eqref{eq:swe_slow_phys}. Expressed in Riemann variables, this branch is
    \begin{subequations}\label{eq:apdx_runup_RV}
    \begin{align}
        R_{+}^{\rm run-up}(x,t)
        &=\frac{2c_{0}\cos(\tfrac{\omega t}{4})
            -\omega x\sin(\tfrac{3\omega t}{4})}
            {\cos(\tfrac{3\omega t}{4})},\\[2mm]
        R_{-}^{\rm run-up}(x,t)
        &=\frac{-2c_{0}\sin(\tfrac{\omega t}{4})
            +\omega x\cos(\tfrac{3\omega t}{4})}
            {\sin(\tfrac{3\omega t}{4})}.
    \end{align}
    \end{subequations}
    Expressing this branch in the wavefront coordinate gives
    \begin{subequations}\label{eq:apdx_runup_RV_wavefront}
    \begin{align}
        R_{+}^{\rm run-up}(\chi,t)
        &=2c_0\cos(\omega t)
            +\omega\tan\bigl(\tfrac{3\omega t}{4}\bigr)\,\chi,\\[2mm]
        R_{-}^{\rm run-up}(\chi,t)
        &=2c_0\cos(\omega t)
            -\omega\cot\bigl(\tfrac{3\omega t}{4}\bigr)\,\chi.
    \end{align}
    \end{subequations}
    Comparing~\eqref{eq:apdx_runup_RV_wavefront} with the wavefront expansion~\eqref{eq:apdx_wavefront_expansion} yields
    \begin{equation}\label{eq:apdx_rho1}
        \rho_1^{+}(t) = \omega\tan\bigl(\tfrac{3\omega t}{4}\bigr), \qquad
        \rho_1^{-}(t) = -\,\omega\cot\bigl(\tfrac{3\omega t}{4}\bigr).
    \end{equation}
    These coefficients can be verified directly to satisfy the first-order
    wavefront equations~\eqref{eq:apdx_hierarchy_n1}. Since the run-up branch has
    the wavefront form~\eqref{eq:apdx_runup_RV_wavefront} with coefficients
    satisfying the \(n=0\) and \(n=1\) wavefront hierarchy, it gives the
    first-order wavefront approximation near the dry edge. Therefore,
    \begin{equation}\label{eq:apdx_RV_asymptotic}
        R_{\pm}(\chi,t) = R_{\pm}^{\rm run-up}(\chi,t) + \mathcal{O}(\chi^{2}).
    \end{equation}
    Consequently, expressing in terms of the physical depth and velocity
    variables, it follows that
    \begin{equation}\label{eq:apdx_hu_asymptotic}
        \begin{aligned}
        h(x,t;\omega) &= \hrunup(x,t;\omega) + \mathcal{O}(\chi^{3}),\\
        u(x,t;\omega) &= \urunup(x,t;\omega) + \mathcal{O}(\chi^{2}).
        \end{aligned}
    \end{equation}
    Thus the run-up solution is asymptotic at the dry edge as \(\chi \to 0^+\).

    \subsection{Higher-order corrections}\label{apdx:higher_order}
    The coefficients $\rho_n^{\pm}$ with $n\ge 2$ satisfy the linear system 
    \begin{equation}\label{eq:apdx_higher_linear}
        \frac{\rm d}{{\rm d}t}\begin{pmatrix}\rho_n^{+}\\[1mm]\rho_n^{-}\end{pmatrix}
        =
        \frac{3\omega}{4}\begin{pmatrix}
            (n+1)\tan \tfrac{3\omega t}{4} - \tfrac{n}{3}\cot \tfrac{3\omega t}{4} & \tfrac{1}{3}\tan \tfrac{3\omega t}{4} \\[2mm]
            -\tfrac{1}{3}\cot \tfrac{3\omega t}{4} & -(n+1)\cot \tfrac{3\omega t}{4} + \tfrac{n}{3}\tan \tfrac{3\omega t}{4}
        \end{pmatrix}
        \begin{pmatrix}\rho_n^{+}\\[1mm]\rho_n^{-}\end{pmatrix}
        + \boldsymbol{F}_n,
    \end{equation}
    where the forcing vector $\boldsymbol{F}_n$ collects the lower-order contributions
    from~\eqref{eq:apdx_hierarchy_nge2},
    \begin{equation}\label{eq:apdx_higher_forcing}
        \boldsymbol{F}_n \doteq \sum_{k=2}^{n-1}(n-k+1)
        \begin{pmatrix}
            \lambda_k^{+}\,\rho_{n-k+1}^{+}\\[2mm]
            \lambda_k^{-}\,\rho_{n-k+1}^{-}
        \end{pmatrix}, \qquad \boldsymbol{F}_2 \doteq \boldsymbol{0},
    \end{equation}
    with $\lambda_k^{\pm} \doteq (3\rho_k^{\pm} + \rho_k^{\mp})/2$. The forcing 
    is not present at $n=2$, so the first correction beyond the run-up solution is governed by the homogeneous system alone.

    \paragraph{\textbf{Normalised equations}}
    The $\omega$-scaling of $\rho_n^{\pm}$ is fixed by the recursive structure of the 
    hierarchy. The first-order coefficients~\eqref{eq:apdx_rho1} each carries one factor of $\omega$, so $\rho_1^{\pm}(t) = \omega\,\tilde{\rho}_1^{\pm}(\tau)$ with 
    $\tau \doteq \omega t$ and $\tilde{\rho}_1^{\pm}$ independent of $\omega$. This 
    factorisation propagates to every order,
    \begin{equation}\label{eq:apdx_ansatz}
        \rho_n^{\pm}(t) \;=\; \omega^n\,\tilde{\rho}_n^{\pm}(\tau), \qquad \tau = \omega t,
    \end{equation}
    with $\tilde{\rho}_n^{\pm}$ depending on $\tau$ alone. To verify this, assume \eqref{eq:apdx_ansatz} for all $1 \le k \le n-1$. Then 
    $\lambda_k^{\pm} = \omega^k\,\tilde{\lambda}_k^{\pm}(\tau)$ with 
    $\tilde{\lambda}_k^{\pm} \doteq (3\tilde{\rho}_k^{\pm} + \tilde{\rho}_k^{\mp})/2$, 
    and every term of the forcing in~\eqref{eq:apdx_higher_forcing} produces the 
    same total power $\omega^k \cdot \omega^{n-k+1} = \omega^{n+1}$. The forcing 
    therefore factorises as
    \begin{equation}\label{eq:apdx_forcing_scaling}
        \boldsymbol{F}_n \;=\; \omega^{n+1}\,\tilde{\boldsymbol{F}}_n(\tau), \qquad
        \tilde{\boldsymbol{F}}_n(\tau) \;\doteq\; \sum_{k=2}^{n-1}(n-k+1)
        \begin{pmatrix}\tilde{\lambda}_k^{+}\,\tilde{\rho}_{n-k+1}^{+}\\[1mm]
        \tilde{\lambda}_k^{-}\,\tilde{\rho}_{n-k+1}^{-}\end{pmatrix},
    \end{equation}
    with $\tilde{\boldsymbol{F}}_n$ independent of $\omega$. Substituting 
    \eqref{eq:apdx_ansatz} and \eqref{eq:apdx_forcing_scaling} into 
    \eqref{eq:apdx_higher_linear} and using $d/dt = \omega\,d/d\tau$, every term carries 
    a common factor of $\omega^{n+1}$ that cancels, giving a system that is independent of $\omega$,
    \begin{equation}\label{eq:apdx_tau_system}
        \frac{\rm d}{{\rm d}\tau}\begin{pmatrix}\tilde{\rho}_n^{+}\\[1mm]\tilde{\rho}_n^{-}\end{pmatrix}
        =
        \frac{3}{4}\begin{pmatrix}
            (n+1)\tan \tfrac{3\tau}{4} - \tfrac{n}{3}\cot \tfrac{3\tau}{4} & \tfrac{1}{3}\tan \tfrac{3\tau}{4} \\[2mm]
            -\tfrac{1}{3}\cot \tfrac{3\tau}{4} & -(n+1)\cot \tfrac{3\tau}{4} + \tfrac{n}{3}\tan \tfrac{3\tau}{4}
        \end{pmatrix}
        \begin{pmatrix}\tilde{\rho}_n^{+}\\[1mm]\tilde{\rho}_n^{-}\end{pmatrix}
        + \tilde{\boldsymbol{F}}_n(\tau).
    \end{equation}
    This confirms~\eqref{eq:apdx_ansatz} at order $n$.

    \paragraph{\textbf{Leading behaviour at each order}}
We now consider the behaviour of the system near $\tau = 0$. Using 
    $\tan\tfrac{3\tau}{4} = \tfrac{3\tau}{4} + O(\tau^3)$ and 
    $\cot\tfrac{3\tau}{4} = \tfrac{4}{3\tau} + O(\tau)$, the homogeneous part of 
    \eqref{eq:apdx_tau_system} reduces to
    \begin{equation}\label{eq:apdx_leading}
        \frac{\rm d}{{\rm d}\tau}\begin{pmatrix}\tilde{\rho}_n^{+}\\[1mm]\tilde{\rho}_n^{-}\end{pmatrix}
        =
        \frac{1}{\tau}
        \begin{pmatrix}
            -\tfrac{n}{3} & 0 \\[2mm]
            -\tfrac{1}{3} & -(n+1)
        \end{pmatrix}
        \begin{pmatrix}\tilde{\rho}_n^{+}\\[1mm]\tilde{\rho}_n^{-}\end{pmatrix}
        + O(\tau),
    \end{equation}
    which has eigenvalues $-\tfrac{n}{3}$ and $-(n+1)$ and eigenvectors 
    $\boldsymbol{e}_1 = \bigl(1, -\tfrac{1}{2n+3}\bigr)^{\top}$ and 
    $\boldsymbol{e}_2 = (0, 1)^{\top}$. We use \eqref{eq:apdx_ansatz} to write the 
    general homogeneous solution in terms of $\rho_n^{\pm}$,
    \begin{equation}\label{eq:apdx_euler_solution}
        \begin{pmatrix}\rho_n^{+}\\[1mm]\rho_n^{-}\end{pmatrix}
        \;\sim\; C_n\,\omega^n\,\tau^{-n/3}\begin{pmatrix}1\\[1mm]-\tfrac{1}{2n+3}\end{pmatrix}
        + D_n\,\omega^n\,\tau^{-(n+1)}\begin{pmatrix}0\\[1mm]1\end{pmatrix},
    \end{equation}
    where $C_n$ and $D_n$ are order-dependent constants of integration independent of $\omega$. The 
    branch $\tau^{-(n+1)}$ gives $\rho_n^{\pm} \propto t^{-n-1}/\omega$, which diverges as $\omega \to 0$ at fixed $t$. 
    Requiring the corrections to vanish in this flat bathymetry limit sets 
    $D_n = 0$. Defining
    \[
        C_n^+ \doteq C_n, \qquad
        C_n^- \doteq -\frac{C_n}{2n+3}.
    \]
    The admissible branch may then be written as
    \begin{equation}\label{eq:apdx_fractional_power}
        \rho_n^{\pm}(t) \;\sim\; \frac{C_n^{\pm}}{t^n}\,(\omega t)^{2n/3}
        \qquad (\omega \to 0).
    \end{equation}
    This shows that the higher-order wavefront coefficients contain terms with
    fractional powers of \(\omega t\). In contrast, the regular perturbation
    series in \(\omega^2\) yielded solutions with power series in integer powers
    of \(\omega t\). Consequently, the regular perturbative framework, which is
    the principal contribution of this work, captures the run-up branch but does
    not capture the higher-order wavefront dynamics beyond it.

    Computing further terms by a Frobenius expansion gives the more detailed local
    structure
    \begin{equation}\label{eq:apdx_frobenius_structure}
        \rho_n^\pm(t)
        =
        \frac{C_n^\pm}{t^n}(\omega t)^{2n/3}
        \left[
        1+a_{n,1}^\pm(\omega t)^2
          +a_{n,2}^\pm(\omega t)^4+\cdots
        \right],
        \qquad n\geq 2,
    \end{equation}
    where the coefficients \(a_{n,j}^\pm\) are determined recursively by the
    Frobenius expansion. The overall amplitude \(C_n\), however, is not fixed
    by this local analysis. It represents a remaining degree of freedom that must
    be selected by matching the wavefront expansion to the solution away from the
    dry edge, in particular at the left state. 
    A detailed analysis of the higher-order solution structure and the associated matching problem remains an open direction for future work.

\end{appendix}

%% file: backmatter.tex
\begin{Backmatter}

\paragraph{Acknowledgments}
Two of the co-authors (MAH and BI) partook in Nonlinear Waves and Integrable Systems meeting that was held in Boulder, Colorado in June 2025 on the occasion of the 80th birthday of Mark J. Ablowitz. This event reflected the profound, broad and sustained contributions that Prof. Ablowitz has made in the fields of nonlinear waves, integrable systems, and asymptotics. We are among those people who have benefited immensely from Mark, scientifically and personally, in ways that are too many to describe.

The present work also benefited from discussions held during the Isaac Newton Institute Satellite Programme ``Emergent phenomena in nonlinear dispersive waves''. BI, MAH, PS, and SS thank the Isaac Newton Institute for Mathematical Sciences, Cambridge, for support and hospitality during this programme, which was held at Northumbria University and Newcastle University in Newcastle, UK, from 16 July to 22 August 2024. This programme was supported by EPSRC grant EP/R014604/1.

\paragraph{Funding Statement}
This research was supported by a grant from the National Science Foundation doi.org (DMS-2306319). 

\paragraph{Competing Interests}
None.

\printbibliography

\end{Backmatter}

%% file: references.bib
@article{ritter1892fortpflanzung,
  title={Die Fortpflanzung der Wasserwellen},
  author={Ritter, August},
  journal={Zeitschrift des Vereines Deutscher Ingenieure},
  volume={36},
  number={2},
  pages={947--954},
  year={1892}
}

@book{stoker2019water,
  title={Water Waves: The Mathematical Theory with Applications},
  author={Stoker, James Johnston},
  year={2019},
  publisher={Courier Dover Publications}
}

@article{stoker1948formation,
  title={The formation of breakers and bores the theory of nonlinear wave propagation in shallow water and open channels},
  author={Stoker, James J},
  journal={Communications on Pure and Applied Mathematics},
  volume={1},
  number={1},
  pages={1--87},
  year={1948},
  publisher={Wiley Online Library}
}

@article{carrier1958water,
  title={Water waves of finite amplitude on a sloping beach},
  author={Carrier, George F and Greenspan, Harvey P},
  journal={Journal of Fluid Mechanics},
  volume={4},
  number={1},
  pages={97--109},
  year={1958},
  publisher={Cambridge University Press}
}

@article{ovsyannikov1979two,
  title={Two-layer ``shallow water'' model},
  author={Ovsyannikov, Lev Vasil'evich},
  journal={Journal of Applied Mechanics and Technical Physics},
  volume={20},
  number={2},
  pages={127--135},
  year={1979},
  publisher={Springer}
}

@article{thacker1981some,
  title={Some exact solutions to the nonlinear shallow-water wave equations},
  author={Thacker, William Carlisle},
  journal={Journal of Fluid Mechanics},
  volume={107},
  pages={499--508},
  year={1981},
  publisher={Cambridge University Press}
}

@article{pelinovsky1992exact,
  title={Exact analytical solutions of nonlinear problems of tsunami wave run-up on slopes with different profiles},
  author={Pelinovsky, EN and Mazova, R Kh},
  journal={Natural Hazards},
  volume={6},
  number={3},
  pages={227--249},
  year={1992},
  publisher={Springer}
}

@article{kanouglu2006initial,
  title={Initial value problem solution of nonlinear shallow water-wave equations},
  author={K{\^a}no{\u{g}}lu, Utku and Synolakis, Costas},
  journal={Physical Review Letters},
  volume={97},
  number={14},
  pages={148501},
  year={2006},
  publisher={APS}
}

@article{didenkulova2011nonlinear,
  title={Nonlinear wave evolution and runup in an inclined channel of a parabolic cross-section},
  author={Didenkulova, Ira and Pelinovsky, Efim},
  journal={Physics of Fluids},
  volume={23},
  number={8},
  year={2011},
  publisher={AIP Publishing}
}

@article{ezersky2013physical,
  title={Physical simulation of resonant wave run-up on a beach},
  author={Ezersky, Alexander and Abcha, Nizar and Pelinovsky, Efim},
  journal={Nonlinear Processes in Geophysics},
  volume={20},
  number={1},
  pages={35--40},
  year={2013},
  publisher={Copernicus Publications G{\"o}ttingen, Germany}
}

@article{aksenov2016conservation,
title = {Conservation laws and symmetries of the shallow water system above rough bottom},
author = {Aksenov, A V and Druzhkov, K P},
journal = {Journal of Physics: Conference Series},
year = {2016},
publisher = {IOP Publishing},
volume = {722},
number = {1},
pages = {012001}
}

@article{aksenov2020conservation,
  title={Conservation laws of the equation of one-dimensional shallow water over uneven bottom in Lagrange's variables},
  author={Aksenov, Alexander V and Druzhkov, Konstantin P},
  journal={International Journal of Non-Linear Mechanics},
  volume={119},
  pages={103348},
  year={2020},
  publisher={Elsevier}
}

@article{el2016dam,
  title={Dam break problem for the focusing nonlinear Schr{\"o}dinger equation and the generation of rogue waves},
  author={El, Gennady A and Khamis, Eduardo G and Tovbis, Alexander},
  journal={Nonlinearity},
  volume={29},
  number={9},
  pages={2798--2836},
  year={2016},
  publisher={IOP Publishing}
}

@article{xu2017dispersive,
  title={Dispersive dam-break flow of a photon fluid},
  author={Xu, Gang and Conforti, Matteo and Kudlinski, Alexandre and Mussot, Arnaud and Trillo, Stefano},
  journal={Physical review letters},
  volume={118},
  number={25},
  pages={254101},
  year={2017},
  publisher={APS}
}

@article{camassa2019singularity,
  title={Singularity formation as a wetting mechanism in a dispersionless water wave model},
  author={Camassa, Roberto and Falqui, Gregorio and Ortenzi, Giovanni and Pedroni, Marco and Pitton, Giuseppe},
  journal={Nonlinearity},
  volume={32},
  number={10},
  pages={4079--4116},
  year={2019},
  publisher={IOP Publishing}
}

@article{camassa2020vacuum,
  title={On the ``Vacuum'' Dam-Break Problem: exact solutions and their long time asymptotics},
  author={Camassa, Roberto and Falqui, Gregorio and Ortenzi, Giovanni and Pedroni, Marco and Pitton, Giuseppe},
  journal={SIAM Journal on Applied Mathematics},
  volume={80},
  number={1},
  pages={44--70},
  year={2020},
  publisher={SIAM}
}

@article{camassa2022evolution,
  title={Evolution of interface singularities in shallow water equations with variable bottom topography},
  author={Camassa, R and D'Onofrio, R and Falqui, G and Ortenzi, Giovanni and Pedroni, Marco},
  journal={Studies in Applied Mathematics},
  volume={148},
  number={4},
  pages={1439--1476},
  year={2022},
  publisher={Wiley Online Library}
}

@article{dieli2024observation,
  title={Observation of two-dimensional dam break flow and a gaseous phase of solitons in a photon fluid},
  author={Dieli, Ludovica and Pierangeli, Davide and DelRe, Eugenio and Conti, Claudio},
  journal={Physical Review Letters},
  volume={133},
  number={18},
  pages={183801},
  year={2024},
  publisher={APS}
}

@article{sharan2025breaking,
  title={Breaking a superfluid harmonic dam: Observation and theory of Riemann invariants and accelerating sonic horizons},
  author={Sharan, Shashwat and Sorribes, Judith Gonzalez and Sprenger, Patrick and Hoefer, Mark A and Engels, P and Ilan, Boaz and Mossman, ME},
  journal={Physical Review Letters},
  volume={135},
  number={19},
  pages={193402},
  year={2025},
  publisher={APS}
}

@article{jenssen2017exact,
  title={On exact solutions of rarefaction-rarefaction interactions in compressible isentropic flow},
  author={Jenssen, Helge Kristian},
  journal={Journal of Mathematical Fluid Mechanics},
  volume={19},
  number={4},
  pages={685--708},
  year={2017},
  publisher={Springer}
}

@article{pang2012collision,
  title={Collision of Two Centered Rarefaction Waves for Isentropic Magnetogasdynamics.},
  author={Pang, Yicheng and Gu, Hongbo and Yang, Hanchun},
  journal={Southeast Asian Bulletin of Mathematics},
  volume={36},
  number={3},
  year={2012}
}

@article{liu1980vacuum,
  title={On the vacuum state for the isentropic gas dynamics equations},
  author={Liu, T-P and Smoller, Joel A},
  journal={Advances in Applied Mathematics},
  volume={1},
  number={4},
  pages={345--359},
  year={1980},
  publisher={Academic Press}
}

@article{dorodnitsyn2021discrete,
  title={Discrete shallow water equations preserving symmetries and conservation laws},
  author={Dorodnitsyn, VA and Kaptsov, EI},
  journal={Journal of Mathematical Physics},
  volume={62},
  number={8},
  year={2021},
  publisher={AIP Publishing}
}

@article{chirkunov2014exact,
  title={Exact solutions of one-dimensional nonlinear shallow water equations over even and sloping bottoms},
  author={Chirkunov, Yu A and Dobrokhotov, S Yu and Medvedev, Sergey Borisovich and Minenkov, Dmitrii Sergeevich},
  journal={Theoretical and Mathematical Physics},
  volume={178},
  number={3},
  pages={278--298},
  year={2014},
  publisher={Springer}
}

@article{dejong2022,
  title={Uniqueness of local, analytic solutions to singular ODEs},
  author={de Jong, Thomas Geert and van Meurs, Patrick},
  journal={Acta Applicandae Mathematicae},
  volume={180},
  number={1},
  pages={14},
  year={2022},
  publisher={Springer}
}

@article{delestre2013swashes,
  title={SWASHES: a compilation of shallow water analytic solutions for hydraulic and environmental studies},
  author={Delestre, Olivier and Lucas, Carine and Ksinant, Pierre-Antoine and Darboux, Fr{\'e}d{\'e}ric and Laguerre, Christian and Vo, T-N-Tuoi and James, Francois and Cordier, St{\'e}phane},
  journal={International Journal for Numerical Methods in Fluids},
  volume={72},
  number={3},
  pages={269--300},
  year={2013},
  publisher={Wiley Online Library}
}
